\newif\ifreport\reporttrue
\newtheorem{definition}{Definition}
\newtheorem{remark}{Remark}
\newtheorem{theorem}{Theorem}
\newtheorem{assumption}{Assumption}
\newtheorem{lemma}[theorem]{Lemma}
\newtheorem{corollary}[theorem]{Corollary}
\newcommand{\br}{\mathbf{r}}
\newcommand{\cX}{\mathcal{X}}
\newcommand{\te}{\theta}
\newcommand{\bN}{\mathbb{N}}
\newcommand{\bR}{\mathbb{R}}
\newcommand{\bP}{\mathbb{P}}
\newcommand{\bE}{\mathbb{E}}
\newcommand{\cC}{\mathcal{C}}
\newcommand{\bx}{\mathbf{x}}
\newcommand{\by}{\mathbf{y}}
\newcommand{\cF}{\mathcal{F}}
\newcommand{\cK}{\mathcal{K}}
\newcommand{\cE}{\mathcal{E}}
\newcommand{\cG}{\mathcal{G}}
\newcommand{\cS}{\mathcal{S}}
\newcommand{\ust}{^{\star}}
\newcommand{\bs}{\mathbf{s}}
\newcommand{\id}{\mathbbm{1}}
\newcommand{\cB}{\mathcal{B}}
\begin{document}
\pagenumbering{arabic}
\title{Low-Power Status Updates via Sleep-Wake Scheduling}  



\author{\large Ahmed M. Bedewy, Yin Sun, \emph{Senior Member, IEEE}, Rahul Singh, and Ness B. Shroff, \emph{Fellow, IEEE}
\thanks{This paper was presented in part at ACM MobiHoc 2020 \cite{bedewy2020optimizing}.}
\thanks {This work has been supported in part by ONR grants N00014-17-1-2417 and N00014-15-1-2166, Army Research Office grants W911NF-14-1-0368 and MURI W911NF-12-1-0385,  National Science Foundation grants CNS-1446582, CNS-1421576, CNS-1518829, and CCF-1813050, and a grant from the Defense Thrust Reduction Agency HDTRA1-14-1-0058.}
\thanks{A. M. Bedewy is with the  Department  of  ECE,  The  Ohio  State  University, Columbus, OH 43210 USA (e-mail:  bedewy.2@osu.edu).}
\thanks{Y.  Sun  is  with  the  Department  of  ECE,  Auburn  University,  Auburn,  AL 36849 USA (e-mail:  yzs0078@auburn.edu).}
\thanks{R.  Singh  is  with  the Department of ECE, Indian Institute of Science, Bangalore 560012, India (e-mail:  rahulsingh@iisc.ac.in).}
\thanks{N. B.  Shroff  is  with  the  Department  of  ECE and  the  Department  of  CSE, The Ohio State University,  Columbus, OH 43210 USA  (e-mail:  shroff.11@osu.edu).}
}

\maketitle
\begin{abstract}

We consider the problem of optimizing the freshness of status updates that are sent from a large number of  low-power sources to a common access point. The source nodes utilize carrier sensing to reduce collisions and adopt an asynchronized sleep-wake scheduling strategy to achieve a target network lifetime (e.g., 10 years). We use \emph{age of information} (AoI) to measure the freshness of status updates, and design sleep-wake parameters for minimizing the weighted-sum peak AoI of the sources, subject to per-source battery lifetime constraints. When the sensing time  (i.e., the time duration of carrier sensing) is zero, this sleep-wake design problem can be solved by resorting to a two-layer nested convex optimization procedure; however, for positive sensing times, the problem is non-convex. We devise a low-complexity solution to solve this problem and prove that, for practical sensing times that are short, the solution is within a small gap from the optimum AoI performance. When the mean transmission time of status-update packets is unknown, we devise a reinforcement learning algorithm that  adaptively performs the following two tasks in an ``efficient way'': a) it learns the unknown parameter, b) it also generates efficient controls that make channel access decisions. We analyze its performance by quantifying its ``regret'', i.e., the sub-optimality gap between its average performance and the average performance of a controller that knows the mean transmission time.  Our numerical and NS-3 simulation results show that our solution can indeed elongate the batteries lifetime of information sources, while providing a competitive AoI performance.


\end{abstract}
\newpage
\section{Introduction}\label{Int}
In applications such as networked monitoring and control systems, wireless sensor networks, autonomous vehicles, it is crucial for the destination node to receive timely status updates so that it can make accurate decisions.  \emph{Age of information} (AoI) has been used to measure the freshness of status updates. 
More specifically, AoI \cite{KaulYatesGruteser-Infocom2012} is the age of the freshest update at the destination, i.e., it is the time elapsed since the freshest received update was generated. It should be noted that optimizing traditional network performance metrics, such as throughput or delay, do not attain the goal of timely updating. For instance, it is well known that AoI could become very large when the offered load is high or low \cite{KaulYatesGruteser-Infocom2012}.  In other words, AoI captures the information lag at the destination, and is hence more apt for achieving the goal of timely updates. Thus, AoI has recently attracted a lot of interests (see \cite{Yin_book,yates2020age}  and references therein).

In a variety of information update systems, energy consumption is also a critical concern. For example, wireless sensor networks are used for monitoring crucial natural and human-related activities, e.g. forest fires, earthquakes, tsunamis, etc. Since such applications often require the deployment of sensor nodes in remote or hard-to-reach areas, they need to be able to operate unattended for long durations. Likewise, in medical sensor networks,  battery replacement/recharging involves a series of medical procedures, leading to disutility to patients. Hence,  energy consumption must be constrained in order to support a long battery life of 10-15 years~\cite{timmons2004analysis}\ifreport\footnote{The computations performed in \cite{timmons2004analysis} are based on the specifications of commercially used devices. For example, the used transceiver is 2.4 GHz chipset from Chipcon, the CC2420 \cite{chipson}, and the used microcontroller is the Motorola 8-bit microcontroller MC9508RE8 \cite{motorola}. For more detail about the supply voltage and current consumption, please see the aforementioned references.}\fi. 
For networks serving such real-time applications, prolonging battery-life is  crucial. Existing works on multi-source networks, e.g., \cite{yates2017status,talak2018distributed,li2013throughput,kadota2016minimizing_journal,hsu2017scheduling_2,jiang2018timely,kadota2018optimizing,talak2018optimizing2,aphermedis_he2017optimal,DBLP:journals/ton/GuoSKN18,  Yin_multiple_flows,kadota2016minimizing,kadota2016minimizing_journal,singh2015index}, focused exclusively on minimizing the AoI and overlooked the need to reduce power consumption. This motivates us to derive scheduling algorithms that achieve a trade-off between the competing tasks of minimizing AoI and reducing the energy consumption in multi-source networks.

Additionally, some status-update systems consist of  a large number (e.g., hundreds of thousands) of densely packed wireless nodes, which are  serviced by  a single access point (AP). Examples include massive machine-type communications~\cite{kowshik2019energy}. The dataloads in such ``dense networks'' \cite{kowshik2019energy,kowshik2019fundamental} are created by applications such as home security and automation, oilfield and pipeline monitoring, smart agriculture, animal  and livestock tracking, etc. This introduces high variability in the data packet sizes so that the transmission times of data packets are random. Thus, scheduling algorithms  designed for time-slotted systems with a fixed transmission duration, are not applicable to these systems. Besides that, synchronized scheduler for time-slotted systems are feasible when there are relatively few sources and each source has sufficient energy. However, if there are a huge number of sources, the signaling overhead could be quite high. Since,  each source may have limited energy and low traffic rate,  the system could be highly inefficient.
This motivates us to design asynchronized medium access protocols that coordinate the transmissions of multiple conflicting transmitters connected to a single AP.


Towards that end, we consider a wireless network with $M$ sources that contend for channel access and communicate their update packets to an AP. Each source is equipped with a battery that may get charged by a renewable source of energy, e.g., solar. Moreover, each source employs a sleep-wake scheduling scheme \cite{chen2013life} under which the source transmits a packet if the channel is  idle; and sleeps if either: (i) The channel is busy, (ii) it has completed a packet transmission. This enables each source to save the precious battery energy by switching off  when it is unlikely to gain channel access for packet transmissions.  However, since a source cannot transmit during the sleep period, this causes the AoI to increase. We  carefully design the sleep-wake parameters to minimize the  weighted-sum peak age of  the sources, while ensuring that the battery lifetime constraint of each source is satisfied. 

\subsection{Related Works}
There have been significant recent efforts on analyzing the AoI performance of popular queueing service disciplines, e.g., the First-Come, First-Served (FCFS) \cite{KaulYatesGruteser-Infocom2012}
 Last-Come, First-Served (LCFS) with and without preemption \cite{RYatesTIT16_2},
and queueing systems with packet management \cite{CostaCodreanuEphremides_TIT}.
 In \cite{age_optimality_multi_server,Bedewy_NBU_journal_2,multihop_optimal,bedewy2017age_multihop_journal_2,Yin_multiple_flows}, the age-optimality of Last-Generated, First-Served (LGFS)-type policies in multi-server and multi-hop networks was established, where it was shown that these policies can minimize any non-decreasing functional of the age processes. 
The design of data sampling and transmission in information update systems was investigated in \cite{SunJournal2016,sun2018sampling_2}, where sampling policies were derived to minimize nonlinear age functions in single source systems. In \cite{sun2018sampling_2}, it was shown that  a variety of information freshness metrics can be represented as monotonic functions of the age.  The studies in \cite{SunJournal2016,sun2018sampling_2} were later extended to a multi-source scenario in \cite{multi_source_bedewy_2,Bedewy_multisource_journal_1}. 

Designing scheduling policies for minimizing AoI in multi-source networks has recently received increasing attention, e.g., \cite{DBLP:journals/ton/GuoSKN18,yates2017status,talak2018distributed,li2013throughput,kadota2016minimizing_journal,hsu2017scheduling_2,jiang2018timely,kadota2018optimizing,talak2018optimizing2,aphermedis_he2017optimal}. Of particular interest are those pertaining to designing distributed scheduling policies \cite{yates2017status,talak2018distributed,li2013throughput,hsu2017scheduling_2,kadota2016minimizing_journal,jiang2018timely}. The work in \cite{yates2017status} considered a slotted ALOHA-like random access scheme in which each node accesses the channel with a certain access probability. These probabilities were then optimized in order to minimize the AoI. However, the model of~\cite{yates2017status} allows multiple interfering users to gain channel access simultaneously, and hence allows for the collision. The authors in \cite{talak2018distributed} generalized the work in~\cite{yates2017status} to a wireless network in which the interference is described by a general interference model. 
The Round Robin or Maximum Age First policy was shown to be (near) age-optimal for different system models, e.g., in \cite{li2013throughput,hsu2017scheduling_2,kadota2016minimizing_journal,jiang2018timely,Yin_multiple_flows}.

Carrier sensing distributed medium access mechanisms, e.g., Carrier Sense Multiple Access (CSMA), have been widely adopted in many wireless networks; see \cite{yun2012optimal,rahul_prk}  for a recent survey. There has been an interest in designing CSMA-based scheduling schemes that optimize the AoI \cite{maatouk2019minimizing,wang2019broadcast}. In \cite{maatouk2019minimizing}, the authors designed an idealized CSMA (similar to that in \cite{jiang2010distributed}) to minimize the AoI with an exponentially distributed packet transmission times.  In \cite{wang2019broadcast}, the authors designed a slotted Carrier Sense Multiple Access/Collision-Avoidance (CSMA/CA) (similar to that in \cite{bianchi2000performance}) to minimize the broadcast age of information, which is defined, from a sender's perspective, as the age of the freshest successfully broadcasted packet. Contrary to these works, the sleep-wake scheduling scheme proposed by us emphasizes on reducing the cumulative energy consumption in multi-source networks in addition to minimizing the total weighted AoI. Moreover, in our study, transmission times are not necessarily random variables with some commonly used parametric density \cite{maatouk2019minimizing}, or deterministic \cite{wang2019broadcast}, but can be any generally distributed random variables with finite mean.


\subsection{Key Contributions}
Our key contributions are summarized as follows:
\begin{itemize}

\item In our model, sources utilize an asynchronized sleep-wake scheduling strategy to achieve an extended battery lifetime. We aim at designing the mean sleeping period of each source,  which controls its channel access probability, in order to minimize the total weighted average peak age of the sources while simultaneously meeting per-source battery lifetime constraints. Although, the aforementioned optimization problem is non-convex,  we devise a solution. In the regime for which the sensing time is negligible compared to the packet transmission time, the proposed solution is near-optimal (Theorem \ref{thm_case_sum_bi_ge_1} and Theorem \ref{thm_case_sum_bi_le_1}). Our near-optimality results hold for general distributions of the packet transmission times.

\item  We propose an algorithm that can be easily implemented in many practical control systems. In particular, our solution  requires the knowledge of only two variables in its implementation. These two variables are functions of the network parameters. An implementation procedure to compute these two variables  is provided. 


\item As the ratio between the sensing time and the packet transmission time reduces to zero, we show that the age performance of our proposed algorithm is as good as that of the optimal synchronized scheduler (e.g., for time-slotted systems). 


\item Finally, since our solution is a function of the mean transmission time of data packets, the network operator needs to know this quantity in order to implement the algorithm. The transmission times however depend upon the environmental conditions, which in turn are hard to predict before the system operation begins. To overcome this challenge, we develop a reinforcement learning (RL)~\cite{sutton1998reinforcement,jaksch2010near,singh2020learning} algorithm that maintains an estimate of the (unknown) mean transmission time, and then utilizes this estimate in order to derive a solution that is ``seemingly optimal'' for the true system. We show that the regret of the proposed RL algorithm scales as $\tilde{O}(\sqrt{H})$,\footnote{$\tilde{O}$ hides factors that are logarithmic in $H$.}  where $H$ is the operating time horizon.



\end{itemize}

\section{Model and Formulation}\label{sysmod}
\begin{figure*}[t]
\includegraphics[scale=.8]{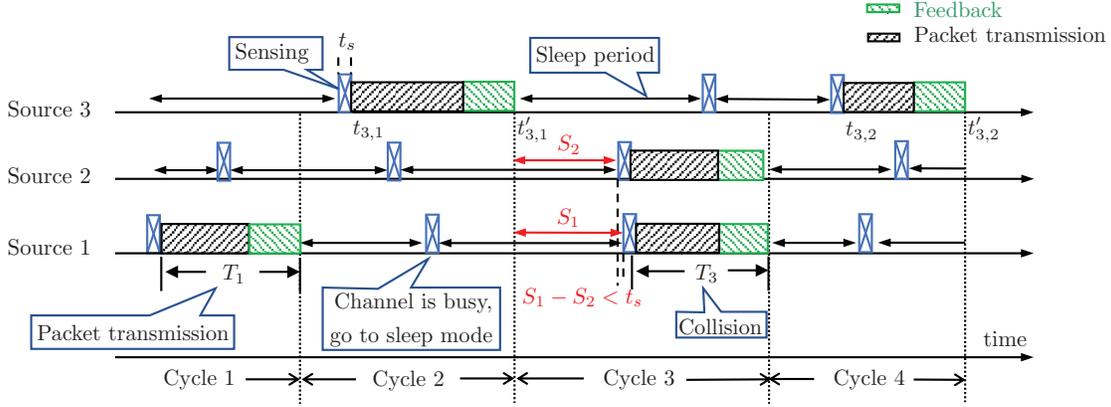}
\centering
\captionsetup{justification=justified}
\caption{Illustration of the sleep-wake cycles. In Cycles 1-2, we have successful packet transmissions. Let $S_1$ and $S_2$ represent the remaining sleeping times of Sources 1 and 2,  respectively, after a successful transmission. Then, a collision occurs in Cycle 3 because 
 the difference between wake-up times of Sources 1 and 2 is less than $t_s$, i.e., $S_1-S_2<t_s$. As we can observe, each cycle consists of an idle period before a transmission/collision event.}
 \label{sleep_wake_cycle}
\end{figure*}
\subsection{Network Model and Sleep-wake Scheduling}
Consider a wireless network composed of $M$ source nodes, each observing a time-varying signal. The sources generate update packets of the observed signals  and send the packets to an access point (AP) over a shared spectrum band. If multiple sources transmit packets simultaneously, a packet collision occurs and these packet transmissions fail. 


The sources use a sleep-wake scheduling scheme to access the shared spectrum, where each source switches between a sleep mode and a transmission mode over time, according the following rules: Upon waking from the sleep mode, a source first performs carrier sensing to check whether the channel is occupied by another source, as illustrated in Figure \ref{sleep_wake_cycle}. The time duration of carrier sensing is denoted as $t_s$, which is sufficiently long to ensure a high sensing accuracy.
If the channel is sensed to be busy, the source enters the sleep mode directly; otherwise,  the source generates an update packet and sends it over the channel. The  source hereafter goes back to the sleep mode. 

In the above sleep-wake scheduling scheme, if two sources start transmitting within a time duration of $t_s$, then their sensing periods are overlapping and they may not be able to detect the transmission of each other. In order to obtain a robust system design, we consider that they cannot detect each other's transmission and a collision occurs. Upon completing a packet transmission,  sources switch to the reception mode and wait for an acknowledgement  (ACK) that indicates the outcome of their transmissions (successful transmission or collision). They then go back to the sleep mode.

A \emph{sleep-wake cycle}, or simply a \emph{cycle}, is defined as the time period between the ends of two successive packet transmission or collision events. Each cycle consists of an idle period and a transmission/collision period\footnote{To make the sleep-wake scheduling problem solvable analytically, we make several approximations. For example, in 802.11b frame structure, there exists a Short Inter-frame Space (SIFS) between the packet transmission frame and the ACK frame (i.e., the CTS frame). If another source wakes up during the SIFS, then it may not detect the transmission/ACK frames, leading to unexpected collisions. In our analytical model, such collision events are omitted. In other words, we suppose that each cycle must start with an idle period, where all sources are in the sleep mode, followed by a transmission/collision period.  NS-3 simulation results will be provided in Section \ref{ns3sim} to show that these approximations have a negligible impact on the age performance of our solution.}. As depicted in Figure \ref{sleep_wake_cycle}, the packet transmissions in Cycles 1-2 are successful, but a collision occurs in Cycle 3 because Sources 1 and 2 wake up within a short duration $t_s$.   

 We use $T_j,j\in \{1,2,\ldots\}$ to represent the time incurred by the $j$-th packet transmission or collision event, which includes transmission/collision time and feedback delays. For example, in Figure~\ref{sleep_wake_cycle}, $T_1$ is the time duration of the packet transmission event by Source~1, while $T_3$ is the time duration of the collision event between Source~1 and 2. We assume that the $T_j$'s are i.i.d. for all transmission and collision events, with a general distribution. This assumption does not hold in practice. Nonetheless, NS-3 simulation results  in Section \ref{ns3sim} show that this assumption has a negligible impact on the performance of the proposed algorithm.  When there is no confusion, we omit the subscript $j$ of $T_j$ for simplicity, and use $T$ to denote the transmission/collision time, which is assumed to have a finite mean, i.e., $E[T]<\infty$. The sleep periods of source~$l$ are exponentially distributed random variables with mean value $\mathbb{E}[T]/r_l$ and are independent across sources and \emph{i.i.d.} across time. Notice that,  the sleep period parameter $r_l>0$ has been normalized by the mean transmission time $\mathbb{E}[T]$. Let $\mathbf{r}=(r_1, \ldots, r_M)$ be the vector comprising of these sleep period parameters.

\subsection{Total Weighted Average Peak Age }\label{objective_function1}
 Let $U_l(t)$ represent the generation time of the most recently delivered packet from source~$l$ by time $t$. Then, the \emph{age of information}, or simply the \emph{age}, of source~$l$ is defined as \cite{KaulYatesGruteser-Infocom2012} 
\begin{equation}
\Delta_l(t)=t-U_l(t),
\end{equation}
where $\Delta_l(t)$ is right-continuous.
\ifreport As shown in Figure \ref{age_proc}, the age increases linearly with $t$, but is reset to a smaller value upon the delivery of a fresher packet. 
\ifreport 
Observe that a small age $\Delta_l(t)$ indicates that the AP has a fresh status update packet that was generated at source~$l$ recently. Hence, it is desirable to keep $\Delta_l(t)$ small for all the sources.
\fi
\begin{figure}[h]
\includegraphics[scale=0.3]{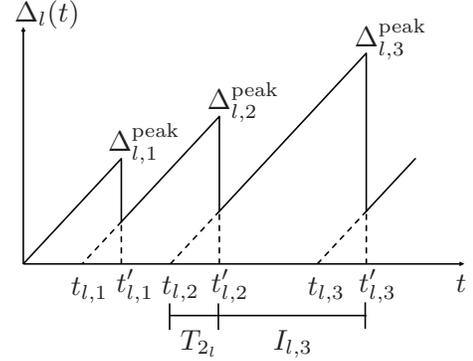}
\centering
\captionsetup{justification=justified}
\caption{The age $\Delta_l(t)$ of source~$l$.
}
\label{age_proc}
\end{figure}

Let us introduce some notations and definitions. Let $i_l$ be the index of the $i$-th delivered packet from source~$l$. We use $t_{l,i}$ and $t_{l,i}'$ to denote the generation and delivery times, respectively, of the $i$-th delivered packet from source~$l$, such that $t_{l,i}'-t_{l,i}=T_{i_l}$.\footnote{A packet of a particular source is deemed delivered when the source receives the feedback.}  Let $I_{l,i}=t_{l,i}'-t_{l,i-1}'$ denote the $i$-th inter-departure time of source~$l$, which satisfies $\mathbb{E}[I_{l,i}]=\mathbb{E}[I_l]$ for all $i$. 
The $i$-th peak age of source~$l$, denoted by $\Delta^{\text{peak}}_{l,i}$, is defined as the AoI of source~$l$ right before the $i$-th packet delivery from source~$l$. As shown in Figure \ref{age_proc}, i.e., we have 
\begin{align}
\Delta^{\text{peak}}_{l,i}=\Delta_l(t_{l,i}'^-),
\end{align}
where $t_{l,i}'^-$ is the time instant just before the delivery time $t_{l,i}'$. 
One can observe from Figure \ref{age_proc} that the peak age is \cite{CostaCodreanuEphremides_TIT}
\begin{align}
\Delta^{\text{peak}}_{l,i}=T_{(i-1)_l}+I_{l,i}.
\end{align}
Hence, the average peak age of source~$l$ is given by 
 \begin{equation}\label{peak_age1}
\mathbb{E}[\Delta^{\text{peak}}_{l}]=\mathbb{E}[T]+\mathbb{E}[I_l],
\end{equation}
where we omit the subscripts $i$ and $i_l$ as $I_{l,i}$'s and $T_{i_l}$'s are  i.i.d. across time.  The average peak age metric provides information regarding the worst case age, with the advantage of having a simpler formulation than the average age metric \cite{CostaCodreanuEphremides_TIT}. Thus, it is suitable for applications that have an upper bound restriction on AoI.

We now derive an expression for $\mathbb{E}[I_l]$.  Let $\alpha_l$ be the probability of the event that the source~$l$ obtains channel access and successfully transmits a packet within a sleep-wake cycle. As shown in \cite{chen2013life}, one can utilize the memoryless property of exponential distributed sleep periods to get
\begin{align}\label{access_prob_in_agiven_cycle}
\alpha_l=\frac{r_l e^{r_l\frac{t_s}{\mathbb{E}[T]}}}{e^{\sum_{i=1}^Mr_i\frac{t_s}{\mathbb{E}[T]}}\sum_{i=1}^Mr_i}.
\end{align}
\ifreport To keep the paper self-contained, we provide the derivation of \eqref{access_prob_in_agiven_cycle} in  Appendix \ref{Appendix_A'}\else For the sake of completeness, we derive the above expression in our technical report \cite[Appendix A]{bedewy2019optimal}\fi. Let $N_l$ denote the total number of sleep-wake cycles between two subsequent successful transmissions of source~$l$.  Because  the probability that source~$l$ obtains channel access and transmits successfully in a given cycle is $\alpha_l$, $N_l$ is geometrically distributed with mean $\frac{1}{\alpha_l}$.  By this and \eqref{access_prob_in_agiven_cycle}, we get
\begin{equation}\label{mean_nf}
\mathbb{E}[N_l]=\frac{e^{\sum_{i=1}^Mr_i\frac{t_s}{\mathbb{E}[T]}}\sum_{i=1}^Mr_i}{r_l e^{r_l\frac{t_s}{\mathbb{E}[T]}}}.
\end{equation}
An inter-departure time duration of  source $l$ is composed of $N_l$ consecutive sleep-wake cycles. With a slight abuse of notation,  let $\textbf{cycle}_{l,k}$ denote the duration of the $k$-th sleep-wake cycle after a successful transmission of source~$l$. Hence, 
\begin{equation}
\mathbb{E}[I_l]=\mathbb{E}\left[\sum_{k=1}^{N_l}\textbf{cycle}_{l,k}\right].
\end{equation}
Note that $\textbf{cycle}_{l,k}$'s are i.i.d. across time. Moreover, since the event $(N_l=n)$ depends only on the history, $N_l$ is a stopping time \cite{shiryaev2007optimal}. Hence, it follows from Wald's identity \cite{wald1973sequential} that
\begin{equation}\label{inter_dep}
\mathbb{E}[I_l]=\mathbb{E}[N_l]\mathbb{E}[\textbf{cycle}],
\end{equation}
where $\mathbb{E}[\textbf{cycle}]$ is the mean duration of a sleep-wake cycle. Each cycle consists of an idle period and a transmission/collision time, see Figure \ref{sleep_wake_cycle}. Using the memoryless property of exponential distribution, we observe that the idle period is the minimum of i.i.d. exponential random variables. Thus, it can be shown that the idle period in each cycle is exponentially distributed with mean value equal to $\mathbb{E}[T]/\sum_{i=1}^Mr_i$, where $\mathbb{E}[T]/r_l$ is the mean of sleep periods of source~$l$. Hence, we have
\begin{equation}\label{swc_mean}
\mathbb{E}[\textbf{cycle}]=\frac{\mathbb{E}[T]}{\sum_{i=1}^Mr_i}+\mathbb{E}[T].
\end{equation}
Substituting the expressions for $\mathbb{E}[N_l]$ and $\mathbb{E}[\textbf{cycle}]$  from \eqref{mean_nf} and  \eqref{swc_mean}, respectively, into \eqref{inter_dep}, and \eqref{peak_age1}, we obtain
\begin{equation}
\begin{split}
\mathbb{E}[\Delta^{\text{peak}}_{l}]=&\frac{e^{-r_l\frac{t_s}{\mathbb{E}[T]}}\mathbb{E}[T]}{r_l}e^{\sum_{i=1}^Mr_i\frac{t_s}{\mathbb{E}[T]}}\left(1+\sum_{i=1}^Mr_i\right)+\mathbb{E}[T].
\end{split}
\end{equation}
In this paper, we aim to minimize the total weighted average peak age, which is given by
\begin{equation}\label{t_avg_peak_age}
\begin{split}
\sum_{l=1}^Mw_l\mathbb{E}[\Delta^{\text{peak}}_{l}]\!\!=&\!\!\sum_{l=1}^M\frac{w_le^{-r_l\frac{t_s}{\mathbb{E}[T]}}\mathbb{E}[T]}{r_l}e^{\sum_{i=1}^Mr_i\frac{t_s}{\mathbb{E}[T]}}\!\!\left(\!1\!\!+\!\!\sum_{i=1}^Mr_i\!\right)\!\\+&\sum_{l=1}^Mw_l\mathbb{E}[T],
\end{split}
\end{equation}
where $w_l>0$ is the weight of source~$l$. These weights enable us to prioritize the sources according to their relative importance \cite{talak2018distributed,talak2018optimizing2}.

\subsection{Energy Constraint}
Each source is equipped with a battery that can possibly be recharged by a renewable energy source, such as solar.  In typical wireless sensor networks, sources have a much smaller power consumption in the sleep mode than in the transmission mode. For example, if the sensor is equipped with the radio unit TR 1000 from RF Monolithic \cite{TR1000,sleep_radio_enrgy_cons}, the power consumption in the sleep mode is 15 $\mu$W while the power consumption in the transmission mode is 24.75 mW. Motivated by this, we assume that the energy dissipation during the sleep mode is negligible as compared to the power consumption in the transmission mode. Moreover, we assume that the sensing time duration $t_s$ is much shorter than the transmission time and hence neglect the energy consumed during channel sensing. In Section \ref{ns3sim}, we show that these assumptions have a negligible effect on the performance of the proposed sleep-wake scheduling algorithm. Under these assumptions, the amount of energy used by a source is  equal to the amount of energy consumed in packet transmissions and  feedback receptions. 

The energy constraint on source $l$ is described by the following parameters: a) Initial battery level $B_{l}$,  which denotes the initial amount of energy stored in the battery, b) Target lifetime $D_{l}$,  which is the minimum time-duration that the source~$l$ should be active before its battery is depleted, c) Average energy replenishment rate\footnote{It is assumed that $R_l$ is either known, or it can be estimated accurately.} $R_l$, which is the rate at which the battery of source~$l$ receives energy from its energy source. If source~$l$ does not have access to an energy source, then we have $R_l=0$.  Define $P_{\text{max},l}$ for source $l$ as
\begin{equation}\label{max_energ}
P_{\text{max},l}=\frac{B_l}{D_l}+R_l, ~\forall l,
\end{equation}
where $P_{\text{max},l}$  is the maximum allowable power consumption of source $l$ such that the target lifetime $D_l$ is met.

For the sleep-wake scheduling mechanism under consideration, it has been shown in \cite{chen2013life} that the  fraction of time in which source~$l$ is in the transmission mode is given by
\begin{equation}\label{sigma_l}
\sigma_l=\frac{[1-e^{-r_l\frac{t_s}{\mathbb{E}[T]}}]\sum_{i=1}^Mr_i+r_le^{-r_l\frac{t_s}{\mathbb{E}[T]}}}{\sum_{i=1}^Mr_i+1}.
\end{equation}
For the sake of completeness, the derivation of $\sigma_l$ is provided in \ifreport Appendix \ref{Appendix_A''}\else our technical report \cite[Appendix B]{bedewy2019optimal}\fi. Let $P_{\text{avg},l}$ denote the average power consumption of source~$l$ in the transmission mode. Then the actual power consumption  of source~$l$, denoted by $P_{\text{act},l}$, is given by
\begin{equation}\label{eq14}
P_{\text{act},l}=\sigma_lP_{\text{avg},l},~ \forall l.
\end{equation}
 For source $l$ to achieve its target lifetime $D_l$, we must have 
\begin{equation}\label{energy_const_1}
P_{\text{act},l}\leq P_{\text{max},l}, ~\forall l.
\end{equation}

Define $b_l\triangleq P_{\text{max},l}/P_{\text{avg},l}$ as the target power efficiency of source~$l$. By using \eqref{sigma_l}-\eqref{eq14},  the constraints in \eqref{energy_const_1} can be rewritten as
\begin{equation}\label{energy_const_2}
\sigma_l=\frac{[1-e^{-r_l\frac{t_s}{\mathbb{E}[T]}}]\sum_{i=1}^Mr_i+r_le^{-r_l\frac{t_s}{\mathbb{E}[T]}}}{\sum_{i=1}^Mr_i+1}\leq b_l, ~\forall l.
\end{equation}
Because $\sigma_l \leq 1$, if  $b_l\geq 1$, then constraint \eqref{energy_const_2} is always satisfied. 

\subsection{Problem Formulation}
Our goal is to find the optimal  sleep-wake parameters $\mathbf{r}$ that minimizes the total weighted average peak age in \eqref{t_avg_peak_age}, while simultaneously ensuring  the energy constraints \eqref{energy_const_2} for all sources. Dividing the objective function \eqref{t_avg_peak_age} by  $\mathbb{E}[T]$, we obtain the following optimization problem: 
(Problem \textbf{1})
\begin{align}\label{problem1}
\begin{split}
\!\!\!\!\bar{\Delta}^{\text{w-peak}}_{\text{opt}}\triangleq\min_{r_l>0}& \sum_{l=1}^M\frac{w_le^{-r_l\frac{t_s}{\mathbb{E}[T]}}}{r_l}e^{\sum_{i=1}^Mr_i\frac{t_s}{\mathbb{E}[T]}}\left(1+\sum_{i=1}^Mr_i\right)+\\&\sum_{l=1}^Mw_l\\
\textbf{s.t.}~&\frac{[1-e^{-r_l\frac{t_s}{\mathbb{E}[T]}}]\sum_{i=1}^Mr_i+r_le^{-r_l\frac{t_s}{\mathbb{E}[T]}}}{\sum_{i=1}^Mr_i+1}\leq b_l,\forall l,
\end{split}\!\!\!\!\!\!\!\!\!\!\!
\end{align}
where $\bar{\Delta}^{\text{w-peak}}_{\text{opt}}$ is the optimal objective value of Problem \textbf{1}. We will use $\bar{\Delta}^{\text{w-peak}}(\mathbf{r})$ to denote the objective value for given sleeping period parameters $\mathbf{r}$. One can notice from \eqref{problem1} that the optimal sleeping period parameters depend on the sensing time $t_s$ and the mean transmission time $\mathbb{E}[T]$ only through their ratio $t_s/\mathbb{E}[T]$. This insight plays a crucial role in subsequent analysis of Problem \textbf{1}.


\section{Main Results}\label{main_result}
When $t_s = 0$, although Problem \textbf{1} is non-convex, it can be solved by defining an auxiliary variable $y=\sum_{i=1}^Mr_i+1$ and applying a nested optimization algorithm: In the inner layer, we optimize $r_l$ for a given $y$. Then, we write the optimized objective as a function of $y$. In the outer layer, we optimize $y$. It happens that the inner and outer layer optimization problems are both convex. The details can be found in Section \ref{discussion}.

 However, this method does not work for positive sensing times $t_s>0$ and Problem \textbf{1} becomes non-convex. Hence, it is challenging to optimize $\mathbf{r}$ for positive $t_s$. In this section, we develop a low-complexity closed-form solution which is shown to be  near-optimal if the sensing time $t_s$ is short as compared with the mean transmission time $\mathbb{E}[T]$. Our solution is developed by considering the following two regimes separately: (i) \emph{Energy-adequate regime} denoted as $\sum_{i=1}^M b_i \geq 1$, where the condition $\sum_{i=1}^M b_i \geq 1$ means that the sources have a sufficient amount of total energy to ensure that at least one source is awake at any time, (ii) \emph{Energy-scarce regime} represented by $\sum_{i=1}^M b_i < 1$, which indicates that the sources have to sleep for some time to meet the sources' energy constraints.
\subsection{Energy-adequate Regime}
In the energy-adequate regime $\sum_{i=1}^M b_i \geq 1$, our solution $\mathbf{r}^{\star}:= (r^{\star}_1,\ldots,r^{\star}_M)$ is given as 
\begin{equation}\label{r_f_b_gr_1}
r^{\star}_l=\min\{b_l,\beta^{\star}\sqrt{w_l}\} x^{\star},\forall l,
\end{equation}
where $x^\star$ and $\beta^\star$ are expressed in terms of the parameters $\{b_i,w_i\}_{i=1}^{M},t_s/\mathbb{E}[T] $
as follows:
\begin{equation}\label{x*}
x^{\star}=\frac{-1}{2}+\sqrt{\frac{1}{4}+\frac{\mathbb{E}[T]}{t_s}},
\end{equation}
and $\beta^{\star}$ is the unique  root of 
\begin{equation}\label{condition_beta}
\sum_{i=1}^M\min\{b_i,\beta^{\star}\sqrt{w_i}\}\ =1.
\end{equation}
The performance of the above solution $\mathbf{r}^\star$ is manifested in the following theorem:
%
%
%
\begin{theorem}[\textbf{Near-optimality}]\label{thm_case_sum_bi_ge_1}
If $\sum_{i=1}^Mb_i\ge 1$, then the solution $\mathbf{r}^{\star}$  \eqref{r_f_b_gr_1} - \eqref{condition_beta} is near-optimal for solving \eqref{problem1} when $t_s/E[T]$ is sufficiently small, in the following sense:\footnote{We use the standard order notation: $f(h)=O(g(h))$ means $z_1\le \lim_{h\to 0}f(h)/g(h)\le z_2$ for some constants $z_1>0$ and $z_2>0$, while $f(h)=o(g(h))$ means $\lim_{h\to 0}f(h)/g(h)= 0$.}
\begin{align}\label{sub_opt_gap_eq}
\left|\bar{\Delta}^{\text{w-peak}}(\mathbf{r}^{\star})-\bar{\Delta}^{\text{w-peak}}_{\text{opt}}\right| \leq 2\sqrt{\frac{t_s}{\mathbb{E}[T]}}C_1\!+\!o\left(\sqrt{\frac{t_s}{\mathbb{E}[T]}}\right),
\end{align}
where
\begin{align}\label{eq23thm3_1}
C_1=\sum_{i=1}^M\frac{w_i}{\min\{b_i,\beta^{\star}\sqrt{w_i}\}}.
\end{align}
\end{theorem}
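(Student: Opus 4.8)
The plan is to sandwich both the optimum $\bar\Delta^{\text{w-peak}}_{\text{opt}}$ and the value $\bar\Delta^{\text{w-peak}}(\mathbf{r}^\star)$ between the $t_s=0$ optimum and a quantity exceeding it by $O(\sqrt{t_s/\mathbb{E}[T]})$. Throughout I write $h:=t_s/\mathbb{E}[T]$, regard the objective in \eqref{problem1} as a function $\bar\Delta^{\text{w-peak}}(\mathbf{r};h)$, and let $\mathcal{F}(h)$ denote the corresponding feasible set. The goal is to show $\bar\Delta^{\text{w-peak}}_{\text{opt}}\ge C_1+\sum_l w_l$ (the $h=0$ value) and $\bar\Delta^{\text{w-peak}}(\mathbf{r}^\star;h)=C_1+\sum_l w_l+O(\sqrt h)$, with matching leading term $2\sqrt h\,C_1$.

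First I would establish the \textbf{lower bound} $\bar\Delta^{\text{w-peak}}_{\text{opt}}\ge C_1+\sum_l w_l$ via two monotonicity facts. Writing $s=\sum_i r_i$, each summand of the objective equals $\frac{w_l}{r_l}e^{h(s-r_l)}(1+s)$; since $s\ge r_l$ we have $e^{h(s-r_l)}\ge 1$, so $\bar\Delta^{\text{w-peak}}(\mathbf{r};h)\ge\bar\Delta^{\text{w-peak}}(\mathbf{r};0)$ pointwise. Next, the numerator of $\sigma_l$ in \eqref{sigma_l} has $h$-derivative $r_l(s-r_l)e^{-r_l h}\ge 0$, so $\sigma_l(h)\ge\sigma_l(0)$ and hence $\mathcal{F}(h)\subseteq\mathcal{F}(0)$. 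Combining, $\bar\Delta^{\text{w-peak}}_{\text{opt}}=\min_{\mathcal{F}(h)}\bar\Delta^{\text{w-peak}}(\cdot;h)\ge\min_{\mathcal{F}(0)}\bar\Delta^{\text{w-peak}}(\cdot;0)$. Finally I would evaluate the $h=0$ optimum: with $r_l=s\,p_l$, $\sum_l p_l=1$, the objective is $(1+1/s)\sum_l w_l/p_l+\sum_l w_l$ and the constraint reads $p_l\le b_l(1+1/s)$; the substitution $q_l=p_l\,s/(s+1)$ turns the inner minimization into $\min\{\sum_l w_l/q_l : q_l\le b_l,\ \sum_l q_l=s/(s+1)\}$, which decreases in $s$ to $\min_{\sum q=1,\,q\le b}\sum_l w_l/q_l$ as $s\to\infty$. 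The KKT (water-filling) minimizer of this convex program is exactly $q_l=\min\{b_l,\beta^\star\sqrt{w_l}\}=:p_l^\star$ with $\beta^\star$ from \eqref{condition_beta}, and the optimal value is $C_1$ by \eqref{eq23thm3_1}; thus $\min_{\mathcal{F}(0)}\bar\Delta^{\text{w-peak}}(\cdot;0)=C_1+\sum_l w_l$, consistent with the $t_s=0$ analysis of Section~\ref{discussion}.

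For the \textbf{upper bound} I would plug $\mathbf{r}^\star$ into the objective. The crucial algebraic fact is that $x^\star$ in \eqref{x*} satisfies $x^\star(1+x^\star)=1/h$; setting $a:=h x^\star$ gives $1+x^\star=1/a$ and $\frac{e^{h x^\star}(1+x^\star)}{x^\star}=\frac{e^a}{1-a}$, while $\sum_i r_i^\star=x^\star$ and $r_l^\star=p_l^\star x^\star$ collapse the objective to $\sum_l w_l+\frac{e^a}{1-a}\sum_l\frac{w_l}{p_l^\star}e^{-a p_l^\star}$. Since $a=h x^\star=\sqrt h+O(h)$ as $h\to0$, I would Taylor-expand $\frac{e^a}{1-a}e^{-a p_l^\star}=1+a(2-p_l^\star)+O(a^2)$, and use $\sum_l w_l/p_l^\star=C_1$ (the definition \eqref{eq23thm3_1}) to obtain $\bar\Delta^{\text{w-peak}}(\mathbf{r}^\star;h)=C_1+\sum_l w_l+(2C_1-\sum_l w_l)\sqrt h+o(\sqrt h)$.

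To convert these into \eqref{sub_opt_gap_eq} I still need $\mathbf{r}^\star\in\mathcal{F}(h)$ for small $h$, so that $\bar\Delta^{\text{w-peak}}(\mathbf{r}^\star;h)\ge\bar\Delta^{\text{w-peak}}_{\text{opt}}$ and the difference is nonnegative; a short expansion of \eqref{sigma_l} at $\mathbf{r}^\star$ yields $\sigma_l(\mathbf{r}^\star;h)=p_l^\star-(p_l^\star)^2\sqrt h+o(\sqrt h)\le p_l^\star\le b_l$, so feasibility holds once $h$ is small. The lower bound and the expansion then give $0\le\bar\Delta^{\text{w-peak}}(\mathbf{r}^\star;h)-\bar\Delta^{\text{w-peak}}_{\text{opt}}\le(2C_1-\sum_l w_l)\sqrt h+o(\sqrt h)\le 2\sqrt h\,C_1+o(\sqrt h)$, which is the claim. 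I expect the \emph{main obstacle} to be the upper-bound expansion: pinning the leading constant to exactly $2C_1$ (rather than a loose $O(\sqrt h)$) relies on $x^\star(1+x^\star)=1/h$ collapsing the prefactor to $e^a/(1-a)$ and on $p^\star$ being the constrained minimizer so that $\sum_l w_l/p_l^\star=C_1$. A secondary subtlety is justifying $\min_{\mathcal{F}(0)}\bar\Delta^{\text{w-peak}}(\cdot;0)=C_1+\sum_l w_l$ cleanly, since at finite scale $s$ the $h=0$ directions obey the looser cap $p_l\le b_l(1+1/s)$ and the infimum is only approached as $s\to\infty$.
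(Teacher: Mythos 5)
Your proposal is correct and follows essentially the same route as the paper's proof: feasibility of $\mathbf{r}^{\star}$ plus direct substitution gives the upper bound, relaxation to the $t_s=0$ problem (the paper's Problem \textbf{2}) solved by a nested convex/water-filling argument gives the lower bound $C_1+\sum_l w_l$, and a Taylor expansion in $\sqrt{t_s/\mathbb{E}[T]}$ closes the gap. The only differences are cosmetic: you retain the factor $e^{-a p_l^{\star}}$ and hence obtain the slightly sharper leading constant $2C_1-\sum_l w_l\le 2C_1$, and your feasibility check is asymptotic (valid only for sufficiently small $t_s/\mathbb{E}[T]$, which the theorem's hypothesis permits) whereas the paper's Lemma~\ref{lemma_feasibility} establishes it non-asymptotically via the bound $x^{\star}\le\sqrt{\mathbb{E}[T]/t_s}$.
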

 \ifreport
\begin{proof}
See Section \ref{proof_case_sum_bi_ge_1}. 
\end{proof}
\else 
\begin{proof}
See Section \ref{proof1}.
\end{proof}
\fi
From Theorem \ref{thm_case_sum_bi_ge_1}, we can obtain the following corollary:
\begin{corollary}[\textbf{Asymptotic optimality}]\label{cor1}
If $\sum_{i=1}^Mb_i\ge 1$, then the solution $\mathbf{r}^{\star}$ \eqref{r_f_b_gr_1} - \eqref{condition_beta} is asymptotically optimal for Problem \textbf{1} in \eqref{problem1} as $t_s/\mathbb{E}[T]\rightarrow 0$, i.e., 
\begin{align}\label{zero_gab_result}
\lim_{\frac{t_s}{\mathbb{E}[T]}\rightarrow 0} \left|\bar{\Delta}^{\text{w-peak}}(\mathbf{r}^{\star})-\bar{\Delta}^{\text{w-peak}}_{\text{opt}}\right|=0.
\end{align}
Moreover, the asymptotic optimal objective value of Problem \textbf{1} as $t_s/\mathbb{E}[T]\to 0$ is\ifreport\footnote{Observe that, according to \eqref{asymptotic_value_final}, the asymptotic optimal average peak age of
source $l$ is $(1/\min\{b_l,\beta^\star\sqrt{w_l}\} + 1)$ which decreases with the weight $w_l$. The weighted average peak age is $w_l(1/\min\{b_l,\beta^\star\sqrt{w_l}\} + 1)$ which increases with $w_l$. This phenomenon is reasonable and agrees with our expectation.}\fi
  \begin{equation}\label{asymptotic_value_final}
\lim_{\frac{t_s}{\mathbb{E}[T]}\rightarrow 0} \bar{\Delta}^{\text{w-peak}}_{\text{opt}}=\sum_{i=1}^M\left[\frac{w_i}{\min\{b_i,\beta^{\star}\sqrt{w_i}\}}+w_i\right].
\end{equation}
 \end{corollary}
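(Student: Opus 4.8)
The plan is to derive both assertions directly from Theorem~\ref{thm_case_sum_bi_ge_1}, so that essentially all the analytic work is inherited. For the first claim \eqref{zero_gab_result}, I would simply let $t_s/\mathbb{E}[T]\to 0$ in the bound of Theorem~\ref{thm_case_sum_bi_ge_1}: since the right-hand side $2\sqrt{t_s/\mathbb{E}[T]}\,C_1 + o(\sqrt{t_s/\mathbb{E}[T]})$ vanishes in this limit (with $C_1$ a fixed constant independent of $t_s$), the nonnegative gap is squeezed to zero. This immediately yields $\lim_{t_s/\mathbb{E}[T]\to 0}\bar{\Delta}^{\text{w-peak}}_{\text{opt}} = \lim_{t_s/\mathbb{E}[T]\to 0}\bar{\Delta}^{\text{w-peak}}(\mathbf{r}^\star)$, provided the latter limit exists, so it remains only to evaluate the objective at $\mathbf{r}^\star$ in the limit.

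To that end, I would write $h=t_s/\mathbb{E}[T]$ and $c_l=\min\{b_l,\beta^\star\sqrt{w_l}\}$, so that $r^\star_l=c_l x^\star$. Two structural facts drive the computation. First, \eqref{x*} is equivalent to $x^\star(x^\star+1)=1/h$; hence as $h\to 0$ we have $x^\star\to\infty$ while $x^\star h = 1/(x^\star+1)\to 0$. Second, the defining relation \eqref{condition_beta} gives $\sum_{l=1}^M c_l = 1$, so that $\sum_{l=1}^M r^\star_l = x^\star\sum_{l=1}^M c_l = x^\star$.

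Substituting $r^\star_l=c_l x^\star$ and $\sum_i r^\star_i = x^\star$ into the objective of \eqref{problem1}, the $l$-th summand becomes
\begin{equation}
\frac{w_l}{c_l}\, e^{-c_l x^\star h}\, e^{x^\star h}\,\frac{1+x^\star}{x^\star}.
\end{equation}
Taking $h\to 0$ and using $x^\star h\to 0$ together with $x^\star\to\infty$, the three factors $e^{-c_l x^\star h}$, $e^{x^\star h}$, and $(1+x^\star)/x^\star$ each tend to $1$, so the summand converges to $w_l/c_l$. Adding the constant term $\sum_l w_l$ then gives
\begin{equation}
\lim_{h\to 0}\bar{\Delta}^{\text{w-peak}}(\mathbf{r}^\star)=\sum_{l=1}^M\left[\frac{w_l}{\min\{b_l,\beta^\star\sqrt{w_l}\}}+w_l\right],
\end{equation}
which, combined with the squeeze argument above, establishes \eqref{asymptotic_value_final}.

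Since the hard estimates are already contained in Theorem~\ref{thm_case_sum_bi_ge_1}, I do not anticipate a genuine obstacle here; the only point requiring care is confirming that $\lim_{h\to 0}\bar{\Delta}^{\text{w-peak}}(\mathbf{r}^\star)$ exists, so that the squeeze can transfer the limit to $\bar{\Delta}^{\text{w-peak}}_{\text{opt}}$. This is guaranteed by the term-by-term convergence above, in which each factor has a finite, nonzero limit.
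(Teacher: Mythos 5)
Your proposal is correct and follows essentially the same route as the paper: the paper likewise obtains \eqref{zero_gab_result} from the vanishing sub-optimality gap and obtains \eqref{asymptotic_value_final} by squeezing $\bar{\Delta}^{\text{w-peak}}_{\text{opt}}$ between the closed-form upper bound of Lemma~\ref{lemma_upper_bound_sum_i_b_i_ge1} and the lower bound of Lemma~\ref{lemma_bounds}, both of which converge to $\sum_{i=1}^M\bigl[w_i/\min\{b_i,\beta^{\star}\sqrt{w_i}\}+w_i\bigr]$ using exactly the facts $x^\star\to\infty$ and $x^\star t_s/\mathbb{E}[T]\to 0$ that you identify. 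Your only deviation is evaluating the objective at $\mathbf{r}^\star$ directly (retaining the factor $e^{-c_l x^\star h}$ and showing it tends to $1$) rather than passing through the simplified upper bound that replaces it by $1$, which is an immaterial difference.
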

 \ifreport
\begin{proof}
See Section \ref{proof_case_sum_bi_ge_1}. 
\end{proof}
\else 
\begin{proof}
See Section \ref{proof1}.
\end{proof}
\fi 
 \subsection{Energy-scarce Regime}\label{case_sum_bi_le_1}
Now, we present a solution to Problem \textbf{1}  in the energy-scarce regime $\sum_{i=1}^Mb_i< 1$, and show it is near-optimal. The solution $\mathbf{r}^\star$ of the energy-scarce regime is again given by \eqref{r_f_b_gr_1}, where $x^\star$ and $\beta^\star$ are
\begin{equation}\label{condition_x*_beta*}
x^{\star}=\frac{\min_l c_l}{1-\sum_{i=1}^Mb_i},~\beta^{\star}=\sum_{i=1}^M\frac{1}{\sqrt{w_i}},
\end{equation}
and 
\begin{align}
c_l =&\frac{2b_l\left(1-\sum_{i=1}^Mb_i\right)^2}{Q_l},\label{feasible_factor} \\
\begin{split}
Q_l= &b_l\left(\!1\!-\!\sum_{i=1}^Mb_i\!\right)^2\\
&+\!\!\sqrt{b_l^2\!\left(\!1\!-\!\sum_{i=1}^Mb_i\!\right)^4\!\!\!\!+\!4b_l^2\left(\!1\!-\!\sum_{i=1}^Mb_i\!\right)^2\!\!\left(\!\sum_{i=1}^Mb_i\!-\!b_l\!\right)\!\frac{t_s}{\mathbb{E}[T]}}.\label{feas_fact_eq27}
\end{split}
\end{align}
The near-optimality of the proposed solution (i.e., $\mathbf{r}^{\star}$) in the energy scarce regime is explained in the following theorem:
\begin{theorem}[\textbf{Near-optimality}]\label{thm_case_sum_bi_le_1}
If $\sum_{i=1}^Mb_i< 1$, then  the solution $\mathbf{r}^{\star}$  \eqref{r_f_b_gr_1} and  \eqref{condition_x*_beta*} - \eqref{feas_fact_eq27} is near-optimal for solving \eqref{problem1} when $t_s/\mathbb{E}[T]$ is sufficiently small, in the following sense:
\begin{align}\label{sub_opt_gap_eq_2}
\left|\bar{\Delta}^{\text{w-peak}}(\mathbf{r}^{\star})-\bar{\Delta}^{\text{w-peak}}_{\text{opt}}\right| \leq \frac{t_s}{\mathbb{E}[T]}C_2\!+\!o\left(\frac{t_s}{\mathbb{E}[T]}\right),
\end{align}
where
\begin{align}\label{eq29thm3_3}
C_2=\sum_{l=1}^M\frac{w_l}{b_l(1-\sum_{i=1}^Mb_i)}\left(3\sum_{i=1}^Mb_i-\min_j b_j\right).
\end{align}
\end{theorem}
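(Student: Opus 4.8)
The plan is to prove \eqref{sub_opt_gap_eq_2} by a sandwich argument: derive a clean lower bound on $\bar{\Delta}^{\text{w-peak}}_{\text{opt}}$ that does not depend on the sensing time, and separately expand the value achieved by the proposed $\mathbf{r}^{\star}$ to first order in $h:=t_s/\mathbb{E}[T]$. Write $S:=\sum_{i=1}^M b_i<1$ throughout. First I would record that in the energy-scarce regime the $\min$ in \eqref{r_f_b_gr_1} is inactive: since $\beta^{\star}\sqrt{w_l}=\sqrt{w_l}\sum_{i}1/\sqrt{w_i}\ge 1>b_l$, we have $\min\{b_l,\beta^{\star}\sqrt{w_l}\}=b_l$, so $r^{\star}_l=b_l x^{\star}$ for every $l$, with $x^{\star}=\min_l c_l/(1-S)$. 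A short computation rationalizing \eqref{feasible_factor}--\eqref{feas_fact_eq27} shows $c_l/(1-S)=x_l$, where $x_l$ is the unique positive root of the quadratic $h(S-b_l)x^2+(1-S)x-1=0$; this quadratic is exactly the constraint $\sigma_l\le b_l$ with $r_l=b_l x$ after replacing $e^{-b_l x h}$ by its first-order surrogate $1-b_l x h$. Hence $x^{\star}=\min_l x_l$.

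\emph{Feasibility.} Next I would verify $\mathbf{r}^{\star}$ is feasible for \eqref{problem1}, so that $\bar{\Delta}^{\text{w-peak}}(\mathbf{r}^{\star})\ge\bar{\Delta}^{\text{w-peak}}_{\text{opt}}$ and the left-hand side of \eqref{sub_opt_gap_eq_2} equals $\bar{\Delta}^{\text{w-peak}}(\mathbf{r}^{\star})-\bar{\Delta}^{\text{w-peak}}_{\text{opt}}$. Writing the constraint for source $l$ at $r=b_l x$ as $\psi_l(x)\le 0$ and its first-order surrogate as $\tilde\psi_l(x)=b_l\bigl[h(S-b_l)x^2+(1-S)x-1\bigr]$, the elementary inequality $e^{-u}\ge 1-u$ gives $\psi_l(x)\le\tilde\psi_l(x)$ for all $x\ge 0$. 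Since $\tilde\psi_l$ vanishes at $x_l$ and is negative on $[0,x_l)$ (the quadratic opens upward with a single positive root $x_l$), and since $x^{\star}\le x_l$, I obtain $\psi_l(x^{\star})\le\tilde\psi_l(x^{\star})\le 0$ for every $l$; thus every energy constraint holds exactly.

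\emph{Lower bound.} The key to keeping the argument clean is an $h$-free bound $\bar{\Delta}^{\text{w-peak}}_{\text{opt}}\ge V_0$, where $V_0:=\sum_l w_l/b_l+\sum_l w_l$, obtained in two elementary steps. First, the feasible region shrinks with $h$: from $\sigma_l(h)-\sigma_l(0)=\frac{(1-e^{-r_l h})\sum_{i\ne l}r_i}{1+\sum_i r_i}\ge 0$, any $\mathbf{r}$ feasible at ratio $h$ also satisfies the $h=0$ constraint $r_l\le b_l(1+\sum_i r_i)$. Second, the objective is monotone in $h$: grouping $e^{(\sum_i r_i)h}e^{-r_l h}=e^{(\sum_{i\ne l}r_i)h}\ge 1$ gives $\bar{\Delta}^{\text{w-peak}}(\mathbf{r})\ge(1+\sum_i r_i)\sum_l w_l/r_l+\sum_l w_l$ at every $h$. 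Finally, the $h=0$ constraint yields $1/r_l\ge 1/\bigl(b_l(1+\sum_i r_i)\bigr)$, so $(1+\sum_i r_i)\sum_l w_l/r_l\ge\sum_l w_l/b_l$, hence $\bar{\Delta}^{\text{w-peak}}(\mathbf{r})\ge V_0$ for every feasible $\mathbf{r}$ at every $h$; minimizing gives $\bar{\Delta}^{\text{w-peak}}_{\text{opt}}\ge V_0$.

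\emph{Upper bound and main obstacle.} It remains to show $\bar{\Delta}^{\text{w-peak}}(\mathbf{r}^{\star})\le V_0+C_2 h+o(h)$, which is where the work lies. Since $x_l$ is increasing in $b_l$, the minimizer is $x^{\star}=x_{l^{\star}}$ with $b_{l^{\star}}=\min_j b_j$, and expanding the root gives $x^{\star}=\frac{1}{1-S}-\frac{S-b_{l^{\star}}}{(1-S)^3}h+o(h)$, which is where $\min_j b_j$ enters $C_2$. Substituting $r^{\star}_l=b_l x^{\star}$ and grouping the exponentials as $e^{x^{\star}S h}e^{-b_l x^{\star}h}=e^{x^{\star}(S-b_l)h}$ turns the objective into $\frac{1+x^{\star}S}{x^{\star}}\sum_l\frac{w_l}{b_l}e^{x^{\star}(S-b_l)h}+\sum_l w_l$. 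I would then expand the two $h$-dependent factors $\frac{1+x^{\star}S}{x^{\star}}$ and $e^{x^{\star}(S-b_l)h}$ to first order, using the expansion of $x^{\star}$ together with $e^{u}=1+u+o(h)$ and discarding the negative second-order remainders; the exact linear coefficient works out to $\frac{2S-\min_j b_j}{1-S}\sum_l\frac{w_l}{b_l}-\frac{1}{1-S}\sum_l w_l$, which is bounded above by $C_2=\sum_l\frac{w_l}{b_l(1-S)}\bigl(3S-\min_j b_j\bigr)$ after discarding the negative term and enlarging $2S$ to $3S$. The main obstacle is precisely this step: because $x^{\star}$ itself carries an $O(h)$ correction depending on $\min_j b_j$, the three factors must be expanded simultaneously and their cross terms controlled, and one must bound—rather than evaluate exactly—the resulting linear coefficient so that it collapses to the clean constant $C_2$. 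Combining with $\bar{\Delta}^{\text{w-peak}}_{\text{opt}}\ge V_0$ yields $\bar{\Delta}^{\text{w-peak}}(\mathbf{r}^{\star})-\bar{\Delta}^{\text{w-peak}}_{\text{opt}}\le C_2 h+o(h)$, which is \eqref{sub_opt_gap_eq_2}.
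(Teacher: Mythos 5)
Your proposal is correct, and it reaches \eqref{sub_opt_gap_eq_2} by a route that differs from the paper's in one substantive place: the lower bound. The paper relaxes the energy constraints to the convex set of Problem \textbf{2}, bounds $e^{-r_l t_s/\mathbb{E}[T]}/r_l$ from below by $e^{-\sum_i r_i t_s/\mathbb{E}[T]}/[b_l(\sum_i r_i+1)]$, and then runs a full KKT analysis (Appendix \ref{Appendix_D}) to maximize $\sum_i r_i$ over the relaxed set, arriving at the $t_s$-dependent lower bound \eqref{lower_bound_sum_bi_le_1} with the factor $e^{-Z t_s/\mathbb{E}[T]}$, $Z=\sum_i b_i/(1-\sum_i b_i)$; that factor contributes an extra $Z\sum_l w_l/b_l$ to the linear coefficient, which is exactly why the paper's gap constant is $(3\sum_i b_i-\min_j b_j)/(1-\sum_i b_i)$ with no slack. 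You instead group $e^{-r_l t_s/\mathbb{E}[T]}e^{\sum_i r_i t_s/\mathbb{E}[T]}=e^{\sum_{i\neq l}r_i t_s/\mathbb{E}[T]}\ge 1$ and use only the $t_s=0$ consequence of feasibility, obtaining the $t_s$-free bound $V_0=\sum_l w_l/b_l+\sum_l w_l$ by two elementary inequalities and no Lagrangian machinery; this bound is in fact tighter than \eqref{lower_bound_sum_bi_le_1}, which is why your exact linear coefficient comes out as $(2\sum_i b_i-\min_j b_j)/(1-\sum_i b_i)\sum_l w_l/b_l-\sum_l w_l/(1-\sum_i b_i)$ and you must deliberately enlarge it to match the stated $C_2$ in \eqref{eq29thm3_3} (so your argument actually proves a slightly stronger gap than the theorem asserts). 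The feasibility and upper-bound halves are essentially the paper's: your surrogate quadratic $h(S-b_l)x^2+(1-S)x-1\le 0$ obtained from $e^{-u}\ge 1-u$ is the same sufficient condition as the paper's Appendix \ref{Appendix_C} (just argued via monotonicity in $x$ of the quadratic rather than monotonicity in the scaling factor $c$), your identification $x^\star=\min_l c_l/(1-\sum_i b_i)$ with the smallest positive root is a correct rationalization of \eqref{feasible_factor}--\eqref{feas_fact_eq27}, and your Taylor expansion of $x^\star$ and of the exponentials mirrors the paper's Step 3 (you retain $e^{-b_l x^\star t_s/\mathbb{E}[T]}$ where the paper bounds it by $1$, a harmless refinement). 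What the paper's heavier Problem-\textbf{2} relaxation buys is a single lower-bounding framework reused verbatim in both the energy-adequate and energy-scarce regimes; what your argument buys is brevity and a strictly smaller constant in this regime.
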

 \ifreport
\begin{proof}
See Section \ref{proof_case_sum_bi_le_1}. 
\end{proof}
\else 
\begin{proof}
See our technical report \cite{bedewy2019optimal}.
\end{proof}
\fi
We obtain the following corollary from Theorem \ref{thm_case_sum_bi_le_1}.
\begin{corollary}[\textbf{Asymptotic optimality}]\label{cor2}
If $\sum_{i=1}^Mb_i< 1$, then \eqref{zero_gab_result} holds for the solution $\mathbf{r}^{\star}$ \eqref{r_f_b_gr_1} and \eqref{condition_x*_beta*} - \eqref{feas_fact_eq27}. In other words, our proposed solution is asymptotically optimal for Problem \textbf{1}  in \eqref{problem1}  as $t_s/\mathbb{E}[T]\rightarrow 0$. Moreover, the asymptotic optimal objective value of Problem \textbf{1} as $t_s/\mathbb{E}[T]\to 0$ is 
\begin{equation}\label{asymptotic_value_final_2}
\begin{split}
\lim_{\frac{t_s}{\mathbb{E}[T]}\rightarrow 0} \bar{\Delta}^{\text{w-peak}}_{\text{opt}}&=\sum_{i=1}^M\left[\frac{w_i}{\min\{b_i,\beta^{\star}\sqrt{w_i}\}}+w_i\right]\\&
=\sum_{i=1}^M\left[\frac{w_i}{b_i}+w_i\right].
\end{split}
\end{equation}
\end{corollary}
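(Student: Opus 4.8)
The plan is to derive Corollary \ref{cor2} as a direct consequence of the non-asymptotic bound already established in Theorem \ref{thm_case_sum_bi_le_1}, followed by an explicit limit evaluation of the objective. The corollary makes two claims: the vanishing-gap statement \eqref{zero_gab_result}, and the value of the limiting optimum \eqref{asymptotic_value_final_2}. For the first claim, I would simply send $t_s/\mathbb{E}[T]\to 0$ in the right-hand side of \eqref{sub_opt_gap_eq_2}. Since the constant $C_2$ in \eqref{eq29thm3_3} depends only on $\{b_i,w_i\}_{i=1}^M$ and not on $t_s/\mathbb{E}[T]$, the bound $\tfrac{t_s}{\mathbb{E}[T]}C_2 + o(\tfrac{t_s}{\mathbb{E}[T]})$ tends to $0$; as the gap is nonnegative, squeezing yields \eqref{zero_gab_result}. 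In particular $\lim \bar{\Delta}^{\text{w-peak}}_{\text{opt}} = \lim \bar{\Delta}^{\text{w-peak}}(\mathbf{r}^\star)$, so the remaining task reduces to evaluating the objective \eqref{problem1} at $\mathbf{r}^\star$ in the limit.

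Before taking limits, I would first simplify $\mathbf{r}^\star$ in this regime. Using $\beta^\star=\sum_{i=1}^M 1/\sqrt{w_i}$ from \eqref{condition_x*_beta*}, one has $\beta^\star\sqrt{w_l}=\sum_{i=1}^M\sqrt{w_l/w_i}\ge 1$ (the $i=l$ term alone equals $1$), whereas $b_l<\sum_{i=1}^M b_i<1$; hence $\min\{b_l,\beta^\star\sqrt{w_l}\}=b_l$, so $r^\star_l=b_l x^\star$. This observation does double duty: it also verifies the second equality in \eqref{asymptotic_value_final_2}. Next I would compute $\lim x^\star$. Sending $t_s/\mathbb{E}[T]\to 0$ in \eqref{feas_fact_eq27} removes the term under the square root carrying the factor $t_s/\mathbb{E}[T]$, so $Q_l\to 2b_l(1-\sum_i b_i)^2$; substituting into \eqref{feasible_factor} gives $c_l\to 1$ for every $l$, hence $\min_l c_l\to 1$ and, by \eqref{condition_x*_beta*}, $x^\star\to 1/(1-\sum_i b_i)$. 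Consequently $r^\star_l\to b_l/(1-\sum_i b_i)$, a finite limit.

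Finally I would plug $r^\star_l=b_l x^\star$ into \eqref{problem1}. Because the limiting $r^\star_l$ are bounded, both exponential factors $e^{-r_l t_s/\mathbb{E}[T]}$ and $e^{\sum_i r_i t_s/\mathbb{E}[T]}$ tend to $1$, so the objective converges to $\sum_l \tfrac{w_l}{r^\star_l}\big(1+\sum_i r^\star_i\big)+\sum_l w_l$. Using $\sum_i r^\star_i\to \sum_i b_i/(1-\sum_i b_i)$, so that $1+\sum_i r^\star_i\to 1/(1-\sum_i b_i)$, together with $w_l/r^\star_l\to w_l(1-\sum_i b_i)/b_l$, the $l$-th summand collapses to $w_l/b_l$, giving $\sum_l[w_l/b_l+w_l]$ and completing \eqref{asymptotic_value_final_2}. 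The argument is essentially a limit evaluation riding on Theorem \ref{thm_case_sum_bi_le_1}; the only points needing care are the reduction $\min\{b_l,\beta^\star\sqrt{w_l}\}=b_l$ (which is exactly where the energy-scarce hypothesis $\sum_i b_i<1$ enters) and the verification that $x^\star$ stays bounded so that the exponential factors genuinely vanish in the limit. I do not anticipate a substantive obstacle beyond these bookkeeping checks.
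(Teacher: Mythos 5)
Your proposal is correct and follows essentially the same route as the paper: the vanishing gap comes directly from the bound in Theorem \ref{thm_case_sum_bi_le_1}, and the limiting value is obtained by the same computations ($\min\{b_l,\beta^\star\sqrt{w_l}\}=b_l$, $c_l\to 1$, $x^\star\to 1/(1-\sum_i b_i)$, exponentials $\to 1$). The only cosmetic difference is that the paper squeezes $\bar{\Delta}^{\text{w-peak}}_{\text{opt}}$ between the explicit upper and lower bounds \eqref{upper_bound_sumb_i_le_1} and \eqref{lower_bound_sum_bi_le_1}, whereas you evaluate the objective at $\mathbf{r}^\star$ directly and transfer the limit to the optimum via the vanishing gap; both are valid and yield identical arithmetic.
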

 \ifreport
\begin{proof}
See Section \ref{proof_case_sum_bi_le_1}. 
\end{proof}
\else 
\begin{proof}
See our technical report \cite{bedewy2019optimal}.
\end{proof}
\fi
Interestingly,  the asymptotic optimal objective values of Problem \textbf{1} in both regimes, given by \eqref{asymptotic_value_final} and \eqref{asymptotic_value_final_2}, are of an identical expression. However, in the energy-scarce regime, we can observe that $\beta^{\star}$, which is defined in \eqref{condition_x*_beta*}, always satisfies $\min\{b_l,\beta^{\star}\sqrt{w_l}\}=b_l$ for all $l$.
\begin{remark}
We would like to point out that the condition $t_s/\mathbb{E}[T]\approx 0$ is satisfied in many practical applications. For instance, in a wireless sensor network that is equipped with low-power UHF transceivers \cite{el2002spatial}, the carrier  sensing   time   is $t_s=40~\mu$s, while the transmission time is around $5$ ms. Hence, $t_s/\mathbb{E}[T]\approx 0.008$.
\end{remark}
\subsection{Discussion}\label{discussion}
In this subsection, we present a simple implementation of our proposed solution, discuss the nested convex optimization method that can be used to solve Problem \textbf{1} when $t_s=0$,  provide some useful insights about our proposed solution at the limit point $t_s/\mathbb{E}[T]\to 0$, and provide a comparison with synchronized schedulers performance. \ifreport \else Due to space limitation, we move the second point to our technical report \cite[Section 3.3.2]{bedewy2019optimal}.\fi  

\subsubsection{Implementation of Sleep-wake Scheduling}
We devise a simple algorithm to compute our solution $\mathbf{r}^\star$, which is provided in Algorithm \ref{alg1}. Notice that $\mathbf{r}^\star$ has the same expression \eqref{r_f_b_gr_1} in the energy-adequate and energy-scarce regimes. We exploit this fact to simplify the implementation of sleep-wake scheduling. In particular, the sources report $w_l$ and $b_l$ to the AP, which computes $\beta^\star$ and $x^\star$, and broadcasts them back to the sources. After receiving $\beta^\star$ and $x^\star$,  source~$l$ computes $r_l^\star$ based on \eqref{r_f_b_gr_1}. In practical wireless sensor networks, e.g., smart city networks and industrial control sensor networks \cite{wsn_industry,hsieh2018decentralized}, the sensors report their measurements via an access point (AP). Hence, it is reasonable to employ the AP in implementing the sleep-wake scheduler.
\begin{algorithm}[h]
\SetKwData{NULL}{NULL}
The AP gathers the parameters  $\{(w_i,b_i)_{i=1}^M,t_s/\mathbb{E}[T]\}$\;
\uIf{$\sum_{i=1}^Mb_i\ge 1$}{
The AP computes $x^{\star},\beta^{\star}$ from \eqref{x*} and \eqref{condition_beta}\;}
\Else{
The AP computes  $x^{\star},\beta^{\star}$ from  \eqref{condition_x*_beta*} - \eqref{feas_fact_eq27}\;}
The AP broadcasts $x^{\star},\beta^{\star}$ to all the $M$ sources\;
Upon hearing $x^{\star},\beta^{\star}$, source~$l$ compute $r^{\star}_l$ from \eqref{r_f_b_gr_1}\;
\caption{Implementation of sleep-wake scheduler.}\label{alg1}
\end{algorithm}

In the above implementation procedure, the sources do not need to know if the overall network is in the energy-adequate or energy-scarce regime; only the AP knows about it. Further, the amount of downlink signaling overhead is small, because only two parameters $\beta^\star$ and $x^\star$ are broadcasted to the sources. Moreover, when the node density is high, the scalability of the network is a crucial concern and reporting $w_l$ and $b_l$ for each source is impractical. In this case, the AP can compute $\beta^\star$ and $x^\star$ by estimating the distribution of $w_l$ and $b_l$, as well as the number of source nodes, which reduces the uplink signaling overhead. Finally, when sources are not in the hearing range of each other, hidden/exposed source problems arise. These problems are challenging to solve analytically. However, this can be solved by designing practical heuristic solutions based on the theoretical solutions. One design method was given in  \cite{chen2013life}.

\ifreport
\subsubsection{The Nested Convex Optimization Method for $t_s=0$}
If  $t_s=0$, Problem \textbf{1} reduces to the following optimization problem:
\begin{align}\label{problem1_ts_0}
\begin{split}
\bar{\Delta}^{\text{w-peak}}_{\text{opt}}\triangleq\min_{r_l>0}& \sum_{l=1}^M\frac{w_l\left(1+\sum_{i=1}^Mr_i\right)}{r_l}+\sum_{l=1}^Mw_l\\
\textbf{s.t.}~&r_l\leq b_l(\sum_{i=1}^Mr_i+1),\forall l.
\end{split}
\end{align}
Observe that the optimization problem in \eqref{problem1_ts_0} is non-convex. To bypass this difficulty, we use an auxiliary variable $y=\sum_{i=1}^Mr_i+1$. Hence, we obtain the following optimization problem for given $y$:
\begin{align}
\begin{split}\label{nested1}
\min_{r_i>0}& \sum_{i=1}^M\left[\frac{w_iy}{r_i}+w_i\right]
\end{split}\\
\textbf{s.t.}~&r_l\leq b_ly, \forall l,\label{nested2}\\&
\sum_{i=1}^Mr_i+1=y.\label{nested3}
\end{align}
The objective function in \eqref{nested1} is a convex function. Moreover, the constraints in \eqref{nested2} and \eqref{nested3} are affine. Hence, Problem \eqref{nested1} is convex. Exploiting  \eqref{nested1},  we solve \eqref{problem1_ts_0} by using
a two-layer nested convex optimization method:  In the inner layer, we optimize $\mathbf{r}$ for  given $y$. After solving $\mathbf{r}$, we will optimize $y$ in the outer layer. This technique is used in the proof of Lemma \ref{lemma_bounds} in Appendix \ref{Appendix_B}, where the reader can find the detailed solution.
\fi

\subsubsection{Asymptotic Behavior of The Optimal Solution}
\ifreport In the energy-adequate regime, the sleeping period parameter $r_l^\star$ of source~$l$ tends to infinity as $t_s/\mathbb{E}[T] \to 0$, while the ratio $r^{\star}_l/r^{\star}_i$ between source~$l$ and source~$i$ is kept as a constant for all $l$ and $i$. Hence, the sleeping time of the sources tends to zero. Meanwhile, since $t_s/ \mathbb{E}[T] \to 0$, the sensing time becomes negligible. The channel access probability of source~$l$ in this limit can be computed as 
\begin{align}\label{sol_sum_bi_at_limit}
\lim_{\frac{t_s}{\mathbb{E}[T]}\to 0}\sigma^{\star}_l=\min\{b_l,\beta^{\star}\sqrt{w_l}\}. 
\end{align} 
Because of \eqref{condition_beta}, $\lim_{t_s/\mathbb{E}[T]\to 0} \sum_{i=1}^M\sigma_i^\star = 1$. Hence, the channel is occupied by the sources at all time, without any time overhead spent on sensing and sleeping. 

On the other hand, in the energy-scarce regime, the sleeping period parameter $r^{\star}_l$ of source~$l$ converges to a constant value when $t_s/\mathbb{E}[T]\to 0$, i.e., we have
\begin{align}
 \lim_{\frac{t_s}{\mathbb{E}[T]}\to 0}r^{\star}_l=\frac{b_l}{1-\sum_{i=1}^Mb_i}.
\end{align}
 Since the cumulative energy is scarce, the sources necessarily need to stay idle for some time in order to meet their target lifetime. Hence, sleep periods are imposed for achieving the optimal trade-off between minimizing AoI and energy consumption. \fi

\subsubsection{Comparison with Synchronized Schedulers Performance} 
 We would like to show that the performance of our proposed algorithm is asymptotically no worse than any synchronized (e.g., centralized) scheduler.
Consider a  scheduler in which the fraction of time during which source~$l$ transmits update packets is equal to $a_l$, where we have 
$\mathbf{a} = \left\{ a_l \right\}_{l=1}^M$ and $ \sum_{i=1}^M a_i \leq 1$. 
In this scheduler, only one source is allowed to access the channel at a time, i.e., there is no collision (this can be achieved either by a deterministic scheduler or by assigning a channel access probability $a_l$ for each source $l$ after each packet transmission)\footnote{Note that if $\sum_{i=1}^M a_i < 1$, then it is possible that the scheduler decides not to serve any source after the transmission of some packet. In this case, the scheduler waits for a random time that has the same distribution as the transmission time $T$ before deciding to serve another source.}. We can perform an analysis similar to that of Section~\ref{objective_function1}, and show that the total weighted average peak age of a synchronized scheduler is given by
\begin{equation}\label{obj_func_c}
\sum_{i=1}^M\left[\frac{w_i\mathbb{E}[T]}{a_i}+w_i~\mathbb{E}[T]\right].
\end{equation}
Hence, the problem of designing an optimal synchronized scheduler that minimizes the total weighted average peak age under energy constraints can be cast as
\begin{align}
\bar{\Delta}_{\text{opt-s}}^{\text{w-peak}}\triangleq\min_{a_i>0}&\sum_{i=1}^M\left[\frac{w_i}{a_i}+w_i\right]\label{problem_centr}\\
\textbf{s.t.}~&a_l\le b_l,~\forall l,\label{eq105}\\&
\sum_{i=1}^Ma_i\leq1,\label{eq106}
\end{align}
where we have divided the objective function by $\mathbb{E}[T]$. Next, we show that the performance of our proposed algorithm converges to that of the optimal synchronized scheduler when $t_s/\mathbb{E}[T]\to 0$. 
\begin{corollary}\label{corollary_centeralized_scheduler}
For any $(w_i,b_i)_{i=1}^M$, we have
\begin{align}
\lim_{\frac{t_s}{\mathbb{E}[T]}\rightarrow 0} \bar{\Delta}^{\text{w-peak}}_{\text{opt}}=\bar{\Delta}_{\text{opt-s}}^{\text{w-peak}}.
\end{align}
\end{corollary}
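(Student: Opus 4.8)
The plan is to solve the synchronized scheduler problem \eqref{problem_centr}--\eqref{eq106} explicitly and then to observe that its optimal value coincides with the asymptotic limits of $\bar{\Delta}^{\text{w-peak}}_{\text{opt}}$ already computed in Corollaries \ref{cor1} and \ref{cor2}. Note first that \eqref{problem_centr}--\eqref{eq106} is a convex program: the objective $\sum_i [w_i/a_i + w_i]$ is strictly convex on the positive orthant (each $w_i/a_i$ is strictly convex), while the constraints \eqref{eq105}--\eqref{eq106} are affine. Hence it has a unique minimizer characterized by the KKT conditions, and I would split the analysis into the same two regimes used throughout the paper.

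First, consider the energy-adequate regime $\sum_i b_i \ge 1$. Since the objective is strictly decreasing in each $a_i$, any optimal solution must saturate the budget, i.e. $\sum_i a_i = 1$ (which is feasible precisely because $\sum_i b_i \ge 1$). Attaching a multiplier $\nu$ to the budget constraint, stationarity $-w_i/a_i^2 + \nu = 0$ yields the unconstrained form $a_i \propto \sqrt{w_i}$; projecting onto the box constraints $a_i \le b_i$ produces the water-filling structure $a_i^{\star} = \min\{b_i, \beta^{\star}\sqrt{w_i}\}$, where $\beta^{\star}$ is fixed by activeness of the budget, $\sum_i \min\{b_i, \beta^{\star}\sqrt{w_i}\} = 1$. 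This is exactly condition \eqref{condition_beta}. Substituting $a_i^{\star}$ back into the objective gives $\bar{\Delta}^{\text{w-peak}}_{\text{opt-s}} = \sum_i [w_i/\min\{b_i, \beta^{\star}\sqrt{w_i}\} + w_i]$, which is identical to the right-hand side of \eqref{asymptotic_value_final}. Comparing with Corollary \ref{cor1} then establishes the claim in this regime.

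Next, in the energy-scarce regime $\sum_i b_i < 1$, setting every $a_i$ to its maximal feasible value $b_i$ already satisfies $\sum_i a_i = \sum_i b_i < 1$, so the budget \eqref{eq106} is slack. Because the objective strictly decreases in each coordinate, the minimizer must have every box constraint active, $a_i^{\star} = b_i$, giving $\bar{\Delta}^{\text{w-peak}}_{\text{opt-s}} = \sum_i [w_i/b_i + w_i]$. This matches the right-hand side of \eqref{asymptotic_value_final_2}, and comparing with Corollary \ref{cor2} closes this regime.

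Combining the two cases, in either regime $\lim_{t_s/\mathbb{E}[T]\to 0}\bar{\Delta}^{\text{w-peak}}_{\text{opt}}$ equals $\bar{\Delta}^{\text{w-peak}}_{\text{opt-s}}$, which is the assertion. The only delicate point is the characterization of the optimizer of \eqref{problem_centr}--\eqref{eq106}: one must argue carefully that the budget is tight exactly when $\sum_i b_i \ge 1$ and that the resulting water-filling threshold coincides with the $\beta^{\star}$ of \eqref{condition_beta}; everything else is a routine substitution. Because the two corollaries already furnish the limiting value of the sleep-wake objective in closed form, no further limiting argument in $t_s/\mathbb{E}[T]$ is required here.
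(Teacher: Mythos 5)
Your proposal is correct and follows essentially the same route as the paper's proof in Appendix~\ref{Appendix_E}: solve the convex program \eqref{problem_centr}--\eqref{eq106} via its KKT/water-filling structure, split into the regimes $\sum_i b_i\ge 1$ and $\sum_i b_i<1$ to obtain $a_i^\star=\min\{b_i,\beta^\star\sqrt{w_i}\}$ and $a_i^\star=b_i$ respectively, and match the resulting optimal values with the limits \eqref{asymptotic_value_final} and \eqref{asymptotic_value_final_2} from Corollaries~\ref{cor1} and~\ref{cor2}. Your monotonicity argument for why the budget constraint is tight exactly when $\sum_i b_i\ge 1$ is a slightly more direct substitute for the paper's explicit Lagrange-multiplier case analysis, but the substance is identical.
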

\ifreport
\begin{proof}
The proof is provided in Appendix \ref{Appendix_E} which is listed at the end  before Appendix \ref{sec:learn_proof} as it requires some results from precedent appendixes.
\end{proof}
\else
\begin{proof}
See our technical report \cite{bedewy2019optimal}.
\end{proof}
\fi
Synchronized schedulers were recently studied in \cite{talak2018optimizing2} for the case without energy constraints, i.e., $b_l \geq 1$ for all $l$. According to Corollary \ref{corollary_centeralized_scheduler}, the channel access probability of the synchronized scheduler in \cite{talak2018optimizing2} \ifreport is a special case of our solution \eqref{sol_sum_bi_at_limit} where $b_l \geq 1$ for all $l$.\else can be obtained from our solution.\fi

\section{Proofs of the Main Results}\label{proof1}
\ifreport
In this section, we provide the proofs of Theorem \ref{thm_case_sum_bi_ge_1}, Corollary \ref{cor1}, Theorem \ref{thm_case_sum_bi_le_1}, and Corollary \ref{cor2}.
\subsection{The Proofs of Theorem \ref{thm_case_sum_bi_ge_1} and Corollary \ref{cor1}}\label{proof_case_sum_bi_ge_1}
We prove Theorem \ref{thm_case_sum_bi_ge_1} and Corollary \ref{cor1} in three steps: 

\textbf{Step 1}: We show that our solution $\mathbf{r}^{\star}$  \eqref{r_f_b_gr_1} - \eqref{condition_beta} is feasible for Problem \textbf{1}. 
\begin{lemma}\label{lemma_feasibility}
If $\sum_{i=1}^Mb_i\ge 1$, then the solution $\mathbf{r}^{\star}$   \eqref{r_f_b_gr_1} - \eqref{condition_beta} is feasible for Problem \textbf{1}. 
\end{lemma}
 \ifreport
\begin{proof}
See Appendix \ref{Appendix_A}.
\end{proof}
\else 
\begin{proof}
See our technical report \cite{bedewy2019optimal}.
\end{proof}
\fi
Hence, by substituting this solution $\mathbf{r}^{\star}$ into the objective function of Problem \textbf{1} in \eqref{problem1}, we get an upper bound on the optimal value $\bar{\Delta}^{\text{w-peak}}_{\text{opt}}$, which is expressed in the following lemma: 
\begin{lemma}\label{lemma_upper_bound_sum_i_b_i_ge1}
If $\sum_{i=1}^Mb_i\ge 1$, then
\begin{align}\label{upper_bound_sumb_i_ge_1}
\bar{\Delta}^{\text{w-peak}}_{\text{opt}}\le \bar{\Delta}^{\text{w-peak}}(\mathbf{r}^{\star})\le \sum_{i=1}^M\left[\frac{w_ie^{x^{\star}\frac{t_s}{\mathbb{E}[T]}}\left(1+\frac{1}{x^{\star}}\right)}{\min\{b_i,\beta^{\star}\sqrt{w_i}\}}+w_i\right],
\end{align}
where $x^{\star}$, $\beta^{\star}$ are defined in \eqref{x*}, \eqref{condition_beta}. 
\end{lemma}
\begin{proof}
In Lemma \ref{lemma_feasibility}, we showed that our proposed solution $\mathbf{r}^{\star}$  \eqref{r_f_b_gr_1} - \eqref{condition_beta} is feasible for Problem \textbf{1}. Hence, we substitute this solution into Problem \textbf{1} to obtain the following upper bound:
\begin{align}
\sum_{i=1}^M\left[\frac{w_ie^{x^{\star}\frac{t_s}{\mathbb{E}[T]}}\left(1+\frac{1}{x^{\star}}\right)e^{-\min\{b_i,\beta^{\star}\sqrt{w_i}\}x^\star\frac{t_s}{\mathbb{E}[T]}}}{\min\{b_i,\beta^{\star}\sqrt{w_i}\}}+w_i\right].
\end{align}
Next, we replace $e^{-\min\{b_i,\beta^{\star}\sqrt{w_i}\}x^\star(t_s/\mathbb{E}[T])}$ by 1 
to derive another upper bound with a simple expression, which is given by \eqref{upper_bound_sumb_i_ge_1}. This completes the proof. 
\end{proof}

\textbf{Step 2}: 
We now construct a lower bound on the optimal
value of Problem \textbf{1}. Suppose that $\mathbf{r}=(r_1,\ldots,r_M)$ is a feasible solution to Problem \textbf{1}, such that $r_l >0$ and
\begin{align}
\frac{[1-e^{-r_l\frac{t_s}{\mathbb{E}[T]}}]\sum_{i=1}^Mr_i+r_le^{-r_l\frac{t_s}{\mathbb{E}[T]}}}{\sum_{i=1}^Mr_i+1}\leq b_l, \forall l.
\end{align} 
Because $[1-e^{-r_l(t_s/\mathbb{E}[T])}]\sum_{i=1}^Mr_i+r_le^{-r_l(t_s/\mathbb{E}[T])} > r_l$ for all $l$, $\mathbf{r}$ satisfies $r_l/(\sum_{i=1}^Mr_i+1)\leq b_l$. Hence, the following Problem \textbf{2} has a larger feasible set than Problem \textbf{1}: (Problem \textbf{2})
\begin{align}
\begin{split}\label{problem2}
\bar{\Delta}^{\text{w-peak}}_{\text{opt},2}\triangleq\min_{r_l>0}& \sum_{l=1}^M\frac{w_le^{-r_l\frac{t_s}{\mathbb{E}[T]}}}{r_l}e^{\sum_{i=1}^Mr_i\frac{t_s}{\mathbb{E}[T]}}\left(1+\sum_{i=1}^Mr_i\right)\\&+\sum_{l=1}^Mw_l
\end{split}\!\!\!\!\!\!\!\!\!\\
\begin{split}\label{constrain_final_1}
\textbf{s.t.}~&r_l\leq b_l\left(\sum_{i=1}^Mr_i+1\right), \forall l,
\end{split}
\end{align}
where $\bar{\Delta}^{\text{w-peak}}_{\text{opt},2}$ is the optimal value of Problem \textbf{2}. The optimal objective value of Problem \textbf{2} is a lower bound of that of Problem \textbf{1}. We note that the constraint set corresponding to Problem~\textbf{2} is convex. Thus, this relaxation converts the constraint set of Problem~\textbf{1} to a convex one, and hence enables us to obtain a lower bound for the optimal value of Problem \textbf{1}, which is expressed in the following lemma: 
\begin{lemma}\label{lemma_bounds}
If $\sum_{i=1}^Mb_i\ge 1$, then  
\begin{align}
\bar{\Delta}^{\text{w-peak}}_{\text{opt}}\ge\bar{\Delta}^{\text{w-peak}}_{\text{opt},2}\ge\sum_{i=1}^M\left[\frac{w_i}{\min\{b_i,\beta^{\star}\sqrt{w_i}\}}+w_i\right],\label{lower_bound_sum_bi_ge_1}
\end{align}
where $\beta^{\star}$ is the root of \eqref{condition_beta}. 
\end{lemma}
\ifreport
\begin{proof}
See Appendix \ref{Appendix_B}.
\end{proof}
\else 
\begin{proof}
See our technical report \cite{bedewy2019optimal}.
\end{proof}
\fi

\textbf{Step 3}: After the upper and lower bounds of $\bar{\Delta}^{\text{w-peak}}_{\text{opt}}$ were derived in Steps 1-2, we are ready to analysis their gap. By combining \eqref{upper_bound_sumb_i_ge_1} and \eqref{lower_bound_sum_bi_ge_1}, the sub-optimality gap of the solution $\mathbf{r}^\star$  \eqref{r_f_b_gr_1} - \eqref{condition_beta} is upper bounded by
\begin{align}\label{G1}
\begin{split}
&\left|\bar{\Delta}^{\text{w-peak}}(\mathbf{r}^{\star})\!-\!\bar{\Delta}^{\text{w-peak}}_{\text{opt}}\right|\le \sum_{i=1}^M\frac{w_i\left(e^{x^{\star}\frac{t_s}{\mathbb{E}[T]}}(1\!+\!\frac{1}{x^{\star}})\!-\!1 \right)}{\min\{b_i,\beta^{\star}\sqrt{w_i}\}},
\end{split}
\end{align}
where $x^{\star}$, $\beta^{\star}$ are defined in \eqref{x*}, \eqref{condition_beta}. Next, we characterize the right-hand-side (RHS) of \eqref{G1} by Taylor expansion. For simplicity, let $\epsilon=\frac{t_s}{\mathbb{E}[T]}$. Using the expression for $x^{\star}$ from \eqref{x*}, we have
\begin{equation}\label{x^*t_s}
\begin{split}
x^{\star}\epsilon=&-\frac{\epsilon}{2}+\sqrt{\frac{\epsilon^2}{4}+\epsilon}
=\frac{\epsilon}{\frac{\epsilon}{2}+\sqrt{\frac{\epsilon^2}{4}+\epsilon}}
=\sqrt{\epsilon}+o(\sqrt{\epsilon}).
\end{split}
\end{equation}
Moreover,
\begin{equation}\label{x^*ET}
\begin{split}
x^\star=&-\frac{1}{2}+\sqrt{\frac{1}{4}+\frac{1}{\epsilon}}
=\frac{\frac{1}{\epsilon}}{\frac{1}{2} + \sqrt{\frac{1}{4} + \frac{1}{\epsilon}}}
=\frac{1}{\sqrt{\epsilon}} + o\left(\frac{1}{\sqrt{\epsilon}}\right). 
\end{split}
\end{equation}
 Substituting \eqref{x^*t_s} and~\eqref{x^*ET} in \eqref{G1}, we obtain
\begin{align}
&\left|\bar{\Delta}^{\text{w-peak}}(\mathbf{r}^{\star})-\bar{\Delta}^{\text{w-peak}}_{\text{opt}}\right|\\
&\le \sum_{i=1}^M\frac{w_i[e^{\sqrt{\epsilon} + o(\sqrt{\epsilon})} (1 + \sqrt{\epsilon} + o(\sqrt{\epsilon})) - 1 ]}{\min\{b_i,\beta^{\star}\sqrt{w_i}\}}\nonumber\\&
=\sum_{i=1}^M\frac{w_i[(1\!+\! \sqrt{\epsilon}\! +\!o(\sqrt{\epsilon}))(1 \!+\! \sqrt{\epsilon} \!+\! o(\sqrt{\epsilon})) \!-\! 1 ]}{\min\{b_i,\beta^{\star}\sqrt{w_i}\}}\nonumber\\&
=2 \sqrt{\epsilon} \sum_{i=1}^M\frac{w_i}{\min\{b_i,\beta^{\star}\sqrt{w_i}\}}  + o(\sqrt{\epsilon}),
\end{align}
where the second inequality involves the use of Taylor expansion. This proves Theorem \ref{thm_case_sum_bi_ge_1}. 

We can observe that the gap $\left|\bar{\Delta}^{\text{w-peak}}(\mathbf{r}^{\star})-\bar{\Delta}^{\text{w-peak}}_{\text{opt}}\right|$ in the energy-adequate regime converges to zero at a speed of $O(\sqrt{\epsilon})$, as $\epsilon \to 0$. Further,  both the upper and lower bounds \eqref{upper_bound_sumb_i_ge_1}, \eqref{lower_bound_sum_bi_ge_1}, converge to $\sum_{i=1}^M[(w_i/\min\{b_i,\beta^{\star}\sqrt{w_i}\})+w_i]$ as $t_s/\mathbb{E}[T]\rightarrow 0$. Thus, this value is the asymptotic optimal objective value of Problem \textbf{1}. This proves Corollary \ref{cor1}. 

\else
In this section, we provide the proofs of Theorem \ref{thm_case_sum_bi_ge_1} and Corollary \ref{cor1}. The proofs of  Theorem \ref{thm_case_sum_bi_le_1} and Corollary \ref{cor2} have the same idea, and are provided in our technical report \cite{bedewy2019optimal}. We prove Theorem \ref{thm_case_sum_bi_ge_1} and Corollary \ref{cor1} in three steps:

\textbf{Step 1}: We begin by showing that our solution $\mathbf{r}^{\star}$  \eqref{r_f_b_gr_1} - \eqref{condition_beta} is feasible for Problem \textbf{1}. 
\begin{lemma}\label{lemma_feasibility}
If $\sum_{i=1}^Mb_i\ge 1$, then the solution $\mathbf{r}^{\star}$   \eqref{r_f_b_gr_1} - \eqref{condition_beta} is feasible for Problem \textbf{1}. 
\end{lemma} 
\begin{proof}
See our technical report \cite{bedewy2019optimal}.
\end{proof}
Hence, by substituting this solution $\mathbf{r}^{\star}$ into the objective function of Problem \textbf{1} in \eqref{problem1}, we get an upper bound on the optimal value $\bar{\Delta}^{\text{w-peak}}_{\text{opt}}$, which is expressed in the following lemma: 
\begin{lemma}\label{lemma_upper_bound_sum_i_b_i_ge1}
If $\sum_{i=1}^Mb_i\ge 1$, then
\begin{align}\label{upper_bound_sumb_i_ge_1}
\bar{\Delta}^{\text{w-peak}}_{\text{opt}}\le \bar{\Delta}^{\text{w-peak}}(\mathbf{r}^{\star})\le \sum_{i=1}^M\left[\frac{w_ie^{x^{\star}\frac{t_s}{\mathbb{E}[T]}}\left(1+\frac{1}{x^{\star}}\right)}{\min\{b_i,\beta^{\star}\sqrt{w_i}\}}+w_i\right],
\end{align}
where $x^{\star}$, $\beta^{\star}$ are defined in \eqref{x*}, \eqref{condition_beta}. 
\end{lemma}
\begin{proof}
See our technical report \cite{bedewy2019optimal}.
\end{proof}

\textbf{Step 2}: 
We now construct a lower bound on the optimal
value of Problem \textbf{1}. Suppose that $\mathbf{r}=(r_1,\ldots,r_M)$ is a feasible solution to Problem \textbf{1}, such that $r_l >0$ and
\begin{align}
\frac{[1-e^{-r_l\frac{t_s}{\mathbb{E}[T]}}]\sum_{i=1}^Mr_i+r_le^{-r_l\frac{t_s}{\mathbb{E}[T]}}}{\sum_{i=1}^Mr_i+1}\leq b_l, \forall l.
\end{align} 
Because $[1-e^{-r_l(t_s/\mathbb{E}[T])}]\sum_{i=1}^Mr_i+r_le^{-r_l(t_s/\mathbb{E}[T])} > r_l$ for all $l$, $\mathbf{r}$ satisfies $r_l/(\sum_{i=1}^Mr_i+1)\leq b_l$. Hence, the following Problem \textbf{2} has a larger feasible set than Problem \textbf{1}: (Problem \textbf{2})
\begin{align}
\begin{split}\label{problem2}
\!\!\!\!\bar{\Delta}^{\text{w-peak}}_{\text{opt},2}\triangleq\min_{r_l>0}& \sum_{l=1}^M\frac{w_le^{-r_l\frac{t_s}{\mathbb{E}[T]}}}{r_l}e^{\sum_{i=1}^Mr_i\frac{t_s}{\mathbb{E}[T]}}\left(1+\sum_{i=1}^Mr_i\right)+\sum_{l=1}^Mw_l
\end{split}\\
\begin{split}\label{constrain_final_1}
\textbf{s.t.}~&r_l\leq b_l\left(\sum_{i=1}^Mr_i+1\right), \forall l,
\end{split}
\end{align}
where $\bar{\Delta}^{\text{w-peak}}_{\text{opt},2}$ is the optimal value of Problem \textbf{2}. The optimal objective value of Problem \textbf{2} is a lower bound of that of Problem \textbf{1}. We note that the constraint set corresponding to Problem~\textbf{2} is convex. Thus, this relaxation converts the constraint set of Problem~\textbf{1} to a convex one, and hence enables us to obtain a lower bound for the optimal value of Problem \textbf{1}, which is expressed as follows: 
\begin{lemma}\label{lemma_bounds}
If $\sum_{i=1}^Mb_i\ge 1$, then  
\begin{align}
\bar{\Delta}^{\text{w-peak}}_{\text{opt}}\ge\bar{\Delta}^{\text{w-peak}}_{\text{opt},2}\ge\sum_{i=1}^M\left[\frac{w_i}{\min\{b_i,\beta^{\star}\sqrt{w_i}\}}+w_i\right],\label{lower_bound_sum_bi_ge_1}
\end{align}
where $\beta^{\star}$ is the root of \eqref{condition_beta}. 
\end{lemma}
\begin{proof}
See our technical report \cite{bedewy2019optimal}.
\end{proof}

\textbf{Step 3}: After the upper and lower bounds of $\bar{\Delta}^{\text{w-peak}}_{\text{opt}}$ were derived in Steps 1-2, we are ready to analysis their gap. By combining \eqref{upper_bound_sumb_i_ge_1} and \eqref{lower_bound_sum_bi_ge_1}, the sub-optimality gap of the solution $\mathbf{r}^\star$  \eqref{r_f_b_gr_1} - \eqref{condition_beta} is upper bounded by
\begin{align}\label{G1}
\begin{split}
&\left|\bar{\Delta}^{\text{w-peak}}(\mathbf{r}^{\star})\!-\!\bar{\Delta}^{\text{w-peak}}_{\text{opt}}\right|\le \sum_{i=1}^M\frac{w_i\left(e^{x^{\star}\frac{t_s}{\mathbb{E}[T]}}(1\!+\!\frac{1}{x^{\star}})\!-\!1 \right)}{\min\{b_i,\beta^{\star}\sqrt{w_i}\}},
\end{split}
\end{align}
where $x^{\star}$, $\beta^{\star}$ are defined in \eqref{x*}, \eqref{condition_beta}. Next, we characterize the right-hand-side (RHS) of \eqref{G1} by Taylor expansion. For simplicity, let $\epsilon=\frac{t_s}{\mathbb{E}[T]}$. Using the expression for $x^{\star}$ from \eqref{x*}, we have
\begin{equation}\label{x^*t_s}
\begin{split}
x^{\star}\epsilon=&-\frac{\epsilon}{2}+\sqrt{\frac{\epsilon^2}{4}+\epsilon}
=\frac{\epsilon}{\frac{\epsilon}{2}+\sqrt{\frac{\epsilon^2}{4}+\epsilon}}
=\sqrt{\epsilon}+o(\sqrt{\epsilon}).
\end{split}
\end{equation}
Moreover,
\begin{equation}\label{x^*ET}
\begin{split}
x^\star=&-\frac{1}{2}+\sqrt{\frac{1}{4}+\frac{1}{\epsilon}}
=\frac{\frac{1}{\epsilon}}{\frac{1}{2} + \sqrt{\frac{1}{4} + \frac{1}{\epsilon}}}
=\frac{1}{\sqrt{\epsilon}} + o\left(\frac{1}{\sqrt{\epsilon}}\right). 
\end{split}
\end{equation}
 Substituting \eqref{x^*t_s} and~\eqref{x^*ET} in \eqref{G1}, we obtain
\begin{align}
\left|\bar{\Delta}^{\text{w-peak}}(\mathbf{r}^{\star})-\bar{\Delta}^{\text{w-peak}}_{\text{opt}}\right|&
\le \sum_{i=1}^M\frac{w_i[e^{\sqrt{\epsilon} + o(\sqrt{\epsilon})} (1 + \sqrt{\epsilon} + o(\sqrt{\epsilon})) - 1 ]}{\min\{b_i,\beta^{\star}\sqrt{w_i}\}}\nonumber\\&
=\sum_{i=1}^M\frac{w_i[(1\!+\! \sqrt{\epsilon}\! +\!o(\sqrt{\epsilon}))(1 \!+\! \sqrt{\epsilon} \!+\! o(\sqrt{\epsilon})) \!-\! 1 ]}{\min\{b_i,\beta^{\star}\sqrt{w_i}\}}\nonumber\\&
=2 \sqrt{\epsilon} \sum_{i=1}^M\frac{w_i}{\min\{b_i,\beta^{\star}\sqrt{w_i}\}}  + o(\sqrt{\epsilon}),
\end{align}
where the second inequality involves the use of Taylor expansion. This proves Theorem \ref{thm_case_sum_bi_ge_1}. Moreover, we can observe that the gap $\left|\bar{\Delta}^{\text{w-peak}}(\mathbf{r}^{\star})-\bar{\Delta}^{\text{w-peak}}_{\text{opt}}\right|$ in the energy-adequate regime converges to zero at a speed of $O(\sqrt{\epsilon})$, as $\epsilon \to 0$. We also observe that both the upper and lower bounds \eqref{upper_bound_sumb_i_ge_1}, \eqref{lower_bound_sum_bi_ge_1}, converge to $\sum_{i=1}^M[(w_i/\min\{b_i,\beta^{\star}\sqrt{w_i}\})+w_i]$ as $t_s/\mathbb{E}[T]\rightarrow 0$. Thus, this value is the asymptotic optimal value of Problem \textbf{1}. This proves Corollary \ref{cor1}. 
\fi

\ifreport
\subsection{The Proofs of Theorem \ref{thm_case_sum_bi_le_1} and Corollary \ref{cor2}}\label{proof_case_sum_bi_le_1}
Similar to Section \ref{proof_case_sum_bi_ge_1}, we prove Theorem \ref{thm_case_sum_bi_le_1} and
Corollary \ref{cor2} also in  three steps:

\textbf{Step 1}: We show that the proposed solution $\mathbf{r}^{\star}$  \eqref{r_f_b_gr_1} and  \eqref{condition_x*_beta*} - \eqref{feas_fact_eq27} is a feasible solution for Problem \textbf{1}.
\begin{lemma}\label{lemma_feasibility2}
If $\sum_{i=1}^Mb_i< 1$, then the solution $\mathbf{r}^{\star}$  \eqref{r_f_b_gr_1} and  \eqref{condition_x*_beta*} - \eqref{feas_fact_eq27} is feasible for Problem \textbf{1}.
\end{lemma}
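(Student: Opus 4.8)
The plan is to reduce the feasibility test to a single scalar inequality per source, and then to tame the transcendental exponential in $\sigma_l$ by the elementary bound $e^{-u}\ge 1-u$, which I expect to turn the exact constraint into precisely the quadratic whose positive root defines $c_l$ and hence $x^\star$.

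First I would simplify the candidate solution. Since $\beta^\star=\sum_{i=1}^M 1/\sqrt{w_i}$, the $i=l$ summand alone gives $\beta^\star\sqrt{w_l}\ge 1$, whereas $b_l\le\sum_{i=1}^M b_i<1$ in this regime; therefore $\min\{b_l,\beta^\star\sqrt{w_l}\}=b_l$ for every $l$, and \eqref{r_f_b_gr_1} collapses to $r_l^\star=b_l x^\star$. Writing $S=\sum_{i=1}^M b_i$ and $\epsilon=t_s/\mathbb{E}[T]$, we get $\sum_{i=1}^M r_i^\star=Sx^\star$, so the energy constraint \eqref{energy_const_2} for source $l$, namely $\sigma_l\le b_l$, is equivalent (after clearing the positive denominator $Sx^\star+1$) to $g_l(x^\star)\le 0$, where
\[
g_l(x)=S(1-b_l)x+(b_l-S)\,e^{-b_l x\epsilon}x-b_l,\qquad b_l-S=-\sum_{i\ne l}b_i\le 0.
\]

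The crucial step exploits the sign of $b_l-S$. Since $x^\star>0$ and $b_l-S\le 0$, applying $e^{-u}\ge 1-u$ with $u=b_l x^\star\epsilon$ yields $(b_l-S)e^{-b_l x^\star\epsilon}\le (b_l-S)(1-b_l x^\star\epsilon)$, and hence $g_l(x^\star)\le\tilde g_l(x^\star)$ with the quadratic surrogate
\[
\tilde g_l(x)=b_l\big[(S-b_l)\epsilon\,x^2+(1-S)x-1\big].
\]
It then suffices to show $\tilde g_l(x^\star)\le 0$. I would verify, via a short rationalization of \eqref{feasible_factor}--\eqref{feas_fact_eq27}, that $c_l/(1-S)$ equals exactly the unique positive root $x_l^+$ of the upward parabola $(S-b_l)\epsilon x^2+(1-S)x-1$ (its two roots have product $-1/((S-b_l)\epsilon)<0$, so one is negative and one positive). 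Because $x^\star=\min_l c_l/(1-S)=\min_l x_l^+\le x_l^+$ while $x^\star>0$, the point $x^\star$ lies between the negative and positive roots, so the parabola is nonpositive there; thus $\tilde g_l(x^\star)\le 0$, giving $g_l(x^\star)\le 0$ for all $l$. Combined with $r_l^\star=b_l x^\star>0$, this establishes feasibility.

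The main obstacle is the transcendental exponential appearing in $\sigma_l$, which prevents solving the constraint in closed form. The key realization is that the definition of $c_l$ arises from the linearization $e^{-u}\approx 1-u$, and because $e^{-u}\ge 1-u$ holds with the \emph{correct} sign thanks to $b_l-S\le 0$, this linearization is conservative: it upper-bounds $g_l$, so feasibility for the quadratic surrogate transfers to the exact constraint. The remaining effort is the routine algebra identifying $c_l/(1-S)$ with the positive root $x_l^+$ and confirming $x^\star>0$.
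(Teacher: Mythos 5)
Your proof is correct and follows essentially the same route as the paper's Appendix C: both exploit $1-e^{-u}\le u$ (with the sign of $b_l-\sum_i b_i\le 0$ making the linearization conservative) to replace the exact constraint by the quadratic whose positive root yields $c_l$, and then take the minimum over $l$. The only cosmetic difference is that you conclude via the sign of the upward parabola between its roots, while the paper concludes via monotonicity of the constraint in the scaling parameter $c$.
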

\begin{proof}
See Appendix \ref{Appendix_C}.
\end{proof}
Now, we construct an upper bound on the optimal value of Problem \textbf{1} using our proposed solution as follows:
\begin{lemma}\label{lemma_upper_bound_sum_i_b_i_le1}
If $\sum_{i=1}^Mb_i< 1$, then 
\begin{align}\label{upper_bound_sumb_i_le_1}
\begin{split}
\bar{\Delta}^{\text{w-peak}}_{\text{opt}}\le\bar{\Delta}^{\text{w-peak}}(\mathbf{r}^{\star})\le& \sum_{l=1}^M\!\frac{w_l}{b_l}e^{\sum_{i=1}^Mb_ix^{\star}\frac{t_s}{\mathbb{E}[T]}}\left(\frac{1}{x^{\star}}+\sum_{i=1}^Mb_i\!\right)\\&+\sum_{l=1}^Mw_l,
\end{split}
\end{align}
where $x^{\star}$ is defined in \eqref{condition_x*_beta*}. 
\end{lemma}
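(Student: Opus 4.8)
The plan is to follow the same two-step template as Lemma~\ref{lemma_upper_bound_sum_i_b_i_ge1}: use feasibility of the proposed solution to turn its objective value into an upper bound on $\bar{\Delta}^{\text{w-peak}}_{\text{opt}}$, and then simplify that objective value to the closed form in \eqref{upper_bound_sumb_i_le_1}. First I would invoke Lemma~\ref{lemma_feasibility2}, which asserts that $\mathbf{r}^\star$ defined by \eqref{r_f_b_gr_1} and \eqref{condition_x*_beta*}--\eqref{feas_fact_eq27} is feasible for Problem~\textbf{1}; feasibility immediately gives $\bar{\Delta}^{\text{w-peak}}_{\text{opt}} \le \bar{\Delta}^{\text{w-peak}}(\mathbf{r}^\star)$, so the remaining work is purely computational.

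The crucial simplification I would record first is that, in the energy-scarce regime, the minimum in \eqref{r_f_b_gr_1} is always attained at $b_l$. With $\beta^\star = \sum_{i=1}^M 1/\sqrt{w_i}$ one has $\beta^\star \sqrt{w_l} = \sum_{i=1}^M \sqrt{w_l/w_i} \ge 1 > \sum_{i=1}^M b_i \ge b_l$, so $\min\{b_l, \beta^\star\sqrt{w_l}\} = b_l$ for every $l$. Consequently the solution collapses to $r_l^\star = b_l x^\star$, which makes every factor in the objective separable and lets me write $\sum_{i=1}^M r_i^\star = x^\star \sum_{i=1}^M b_i$ in closed form.

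Substituting $r_l^\star = b_l x^\star$ into the objective of Problem~\textbf{1} in \eqref{problem1}, the common exponential factor becomes $e^{x^\star \sum_{i} b_i (t_s/\mathbb{E}[T])}$, the term $(1 + \sum_i r_i^\star)/r_l^\star$ becomes $(1/x^\star + \sum_i b_i)/b_l$, and each per-source factor is $e^{-b_l x^\star (t_s/\mathbb{E}[T])}$. The final step I would take is to bound $e^{-b_l x^\star (t_s/\mathbb{E}[T])} \le 1$, which holds because $b_l, x^\star, t_s/\mathbb{E}[T] \ge 0$; this yields precisely \eqref{upper_bound_sumb_i_le_1}. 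I expect no real obstacle in this lemma: the argument is a direct substitution followed by a monotone bound, and the only point worth stating explicitly is the identity $\min\{b_l, \beta^\star\sqrt{w_l}\} = b_l$. The genuinely delicate part---designing $x^\star$ and the factors $c_l, Q_l$ so that $\mathbf{r}^\star$ satisfies the constraint \eqref{energy_const_2}---is quarantined in Lemma~\ref{lemma_feasibility2}, which I take as given here.
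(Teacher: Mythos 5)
Your proposal is correct and follows essentially the same route as the paper: invoke Lemma~\ref{lemma_feasibility2} for feasibility, substitute $\mathbf{r}^\star$ into the objective of Problem~\textbf{1}, and bound $e^{-b_l x^\star t_s/\mathbb{E}[T]}$ by $1$. Your explicit verification that $\min\{b_l,\beta^\star\sqrt{w_l}\}=b_l$ in the energy-scarce regime is a useful detail the paper states only in passing outside the proof, but it does not change the argument.
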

\begin{proof}
In Lemma \ref{lemma_feasibility2}, we showed that our proposed solution $\mathbf{r}^{\star}$  \eqref{r_f_b_gr_1} and  \eqref{condition_x*_beta*} - \eqref{feas_fact_eq27} is feasible for Problem \textbf{1}. Hence, we substitute this solution into Problem \textbf{1} to obtain the following upper bound:
\begin{align}
\sum_{l=1}^M\!\frac{w_le^{-b_lx^\star\frac{t_s}{\mathbb{E}[T]}}}{b_l}e^{\sum_{i=1}^Mb_ix^{\star}\frac{t_s}{\mathbb{E}[T]}}\left(\frac{1}{x^{\star}}+\sum_{i=1}^Mb_i\!\right)+\sum_{l=1}^Mw_l.
\end{align}
Next, we replace $e^{-b_lx^\star\frac{t_s}{\mathbb{E}[T]}}$ by 1 
to derive another upper bound with a simple expression, which is given by \eqref{upper_bound_sumb_i_le_1}. This completes the proof. 
\end{proof}

\textbf{Step 2}: Similar to the proof in Section \ref{proof_case_sum_bi_ge_1}, we use the relaxed problem, Problem \textbf{2}, to construct a lower bound as follows:
\begin{lemma}\label{lemma_lower_bound_sum_i_b_i_le_1}
If $\sum_{i=1}^Mb_i< 1$, then 
\begin{align}
\bar{\Delta}^{\text{w-peak}}_{\text{opt}}\ge\bar{\Delta}^{\text{w-peak}}_{\text{opt},2}\ge\sum_{l=1}^M\frac{w_l}{b_l}e^{\frac{-\sum_{i=1}^Mb_i}{1-\sum_{i=1}^Mb_i}\frac{t_s}{\mathbb{E}[T]}}+\sum_{l=1}^Mw_l.\label{lower_bound_sum_bi_le_1}
\end{align}
\end{lemma}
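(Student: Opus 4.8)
The first inequality $\bar{\Delta}^{\text{w-peak}}_{\text{opt}}\ge\bar{\Delta}^{\text{w-peak}}_{\text{opt},2}$ is secured exactly as before by relaxing Problem \textbf{1} to Problem \textbf{2}: since \eqref{constrain_final_1} defines a larger feasible set, the optimum of Problem \textbf{2} cannot exceed that of Problem \textbf{1}. The whole task is therefore to lower-bound $\bar{\Delta}^{\text{w-peak}}_{\text{opt},2}$. Writing $\epsilon=t_s/\mathbb{E}[T]$ and introducing the auxiliary variable $y=\sum_{i=1}^Mr_i+1$, I would first recast the objective of Problem \textbf{2} as $\sum_{l=1}^M\frac{w_l y}{r_l}\,e^{(y-1-r_l)\epsilon}+\sum_{l=1}^Mw_l$, using $e^{-r_l\epsilon}e^{(y-1)\epsilon}=e^{(y-1-r_l)\epsilon}$. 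The feature that separates this regime from the energy-adequate one is that $y$ is now bounded: summing the relaxed constraints $r_l\le b_l y$ over $l$ and using $\sum_{i=1}^Mr_i=y-1$ gives $y-1\le(\sum_{i=1}^M b_i)y$, i.e. $y\le y_{\max}:=\tfrac{1}{1-\sum_{i=1}^M b_i}$, where $\sum_{i=1}^M b_i<1$ is exactly what makes $y_{\max}$ positive and finite.

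The plan is then to peel off the exponential factor and solve the remaining convex problem in closed form. Since $y-1-r_l=\sum_{i\ne l}r_i\ge 0\ge-(y_{\max}-1)$, the factor $e^{(y-1-r_l)\epsilon}$ is bounded below by the $l$-independent constant $e^{-(y_{\max}-1)\epsilon}=e^{-\frac{\sum_{i=1}^M b_i}{1-\sum_{i=1}^M b_i}\epsilon}$, which can be pulled out of the sum and out of the minimization. This reduces the task to the convex minimization $\min_{\mathbf{r}}\sum_{l=1}^M\frac{w_l y}{r_l}$ over $\{r_l>0,\ r_l\le b_l y\}$, which is precisely the $t_s=0$ instance \eqref{problem1_ts_0} treated by the two-layer nested optimization used in the proof of Lemma \ref{lemma_bounds}.

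To evaluate this minimum I would substitute $\rho_l=r_l/y$, turning it into $\min\sum_{l=1}^M\frac{w_l}{\rho_l}$ over $0<\rho_l\le b_l$ (the remaining requirement $\sum_l\rho_l=1-\tfrac1y<1$ being automatically met). Each term is decreasing in $\rho_l$, so the minimum is reached by pushing every $\rho_l$ up to its cap $b_l$ — feasible precisely because $\sum_{i=1}^M b_i<1$ — which amounts to making all relaxed constraints tight at $y=y_{\max}$. The inner minimum is thus $\sum_{l=1}^Mw_l/b_l$, and reassembling yields
\[
\bar{\Delta}^{\text{w-peak}}_{\text{opt},2}\ \ge\ \sum_{l=1}^M\frac{w_l}{b_l}\,e^{-\frac{\sum_{i=1}^M b_i}{1-\sum_{i=1}^M b_i}\epsilon}+\sum_{l=1}^Mw_l,
\]
which is \eqref{lower_bound_sum_bi_le_1}.

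The hard part is the coupling through $y=\sum_{i=1}^Mr_i+1$: the objective, the exponent, and the constraints all move with $\mathbf{r}$, so the exponential factor cannot be detached until the two structural facts are in place — (i) the a priori ceiling $y\le y_{\max}$, which lets a single constant dominate the exponent, and (ii) feasibility of the all-caps allocation $\rho_l=b_l$, which hinges on $\sum_{i=1}^M b_i<1$ and pins down the convex minimum as $\sum_{l=1}^M w_l/b_l$. I would also remark that the sharper bound $e^{(y-1-r_l)\epsilon}\ge1$ is equally valid and would give the strictly stronger constant $\sum_{l=1}^M w_l/b_l$ in place of the exponential factor; the weaker displayed form is retained because it is the one that pairs cleanly with the upper bound of Lemma \ref{lemma_upper_bound_sum_i_b_i_le1} in the subsequent $O(\epsilon)$ gap analysis of Theorem \ref{thm_case_sum_bi_le_1}.
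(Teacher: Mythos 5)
Your proof is correct, and it reaches the bound by a noticeably more elementary route than the paper. Both arguments share the same skeleton: relax to Problem \textbf{2}, use the constraint $r_l\le b_l(\sum_i r_i+1)$ to replace $w_l(1+\sum_i r_i)/r_l$ by $w_l/b_l$ termwise, and control the residual exponential by a bound on $\sum_i r_i$. The difference is in how that last bound is obtained. The paper keeps the factor $e^{-\sum_i r_i\,t_s/\mathbb{E}[T]}$ inside the optimization, takes a logarithm, and runs a full Lagrangian/KKT analysis of the resulting convex program to conclude that all constraints are tight at the optimum, yielding $\sum_i r_i=\sum_i b_i/(1-\sum_i b_i)$. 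You instead observe that summing the relaxed constraints $r_l\le b_l y$ over $l$ gives $y-1\le(\sum_i b_i)y$ and hence the a priori ceiling $y\le 1/(1-\sum_i b_i)$ in one line, which makes the KKT machinery unnecessary; the exponential then detaches as an $l$-independent constant and the remaining minimization is settled pointwise by $\rho_l\le b_l$. (One cosmetic caveat: you do not actually need the feasibility of the all-caps allocation or the equality constraint $\sum_l\rho_l=1-1/y$ for the lower bound --- the pointwise inequality $w_l/\rho_l\ge w_l/b_l$ suffices --- so the language about the minimum being ``reached'' is more than is required, though not wrong.) Your closing remark is also accurate: since the exponent $y-1-r_l=\sum_{i\ne l}r_i$ is nonnegative, bounding the exponential below by $1$ gives the strictly stronger constant $\sum_l w_l/b_l+\sum_l w_l$, and the weaker displayed form is only a matter of matching the statement used in the gap analysis of Theorem \ref{thm_case_sum_bi_le_1}; the paper's own termwise substitution admits the same strengthening.
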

\begin{proof}
See Appendix \ref{Appendix_D}.
\end{proof}

\textbf{Step 3}: We now characterize the sub-optimality gap by analyzing the upper and lower bounds constructed above. By combining \eqref{upper_bound_sumb_i_le_1} and \eqref{lower_bound_sum_bi_le_1}, the sub-optimality gap of the solution $\mathbf{r}^\star$  \eqref{r_f_b_gr_1} and  \eqref{condition_x*_beta*} - \eqref{feas_fact_eq27} is upper bounded by
\begin{align}\label{G2}
\begin{split}
&\left|\bar{\Delta}^{\text{w-peak}}(\mathbf{r}^{\star})-\bar{\Delta}^{\text{w-peak}}_{\text{opt}}\right|\\&\le \sum_{l=1}^M\frac{w_l}{b_l}\!\left[ e^{\sum_{i=1}^Mb_ix^{\star}\frac{t_s}{\mathbb{E}[T]}}\left(\frac{1}{x^{\star}}\!+\!\sum_{i=1}^Mb_i\right)\!-\!e^{\frac{-\sum_{i=1}^Mb_i}{1-\sum_{i=1}^Mb_i}\frac{t_s}{\mathbb{E}[T]}}\right].
\end{split}
\end{align}
where $x^{\star}$ is defined in \eqref{condition_x*_beta*}. Next, we characterize the RHS of \eqref{G2} by Taylor expansion. For simplicity, let $\epsilon=t_s/\mathbb{E}[T]$, $Z=(\sum_{i=1}^Mb_i)/(1-\sum_{i=1}^Mb_i)$, and  $k_l=(\sum_{i=1}^Mb_i-b_l)/(1-\sum_{i=1}^Mb_i)^2$. Using Taylor expansion, we are able to obtain the following:
\begin{align}
&\min_l c_l=1+\left(\min_l k_l\right)\epsilon+o(\epsilon),\label{eq37}\\&
\frac{1}{\min_lc_l}=\max_l\frac{1}{c_l}=1+\left(\max_lk_l\right)\epsilon+o(\epsilon).\label{eq38}
\end{align}
Using \eqref{eq37}, \eqref{eq38}, $x^{\star}$ from \eqref{condition_x*_beta*}, and Taylor expansion again, we get
\begin{align}
\begin{split}
e^{\sum_{i=1}^Mb_ix^{\star}\epsilon}&=1+Z\left(1+\left(\min_lk_l\right)\epsilon+o(\epsilon)\right)\epsilon+o(\epsilon)\\&=1+Z\epsilon+o(\epsilon),
\end{split}\label{eq39}\\
\begin{split}
\frac{1}{x^{\star}}+\sum_{i=1}^Mb_i&=\frac{1-\sum_{i=1}^Mb_i}{\min_lc_l}+\sum_{i=1}^Mb_i\\&=1+\left(\max_lk_l\right)\left(1-\sum_{i=1}^Mb_i\right)\epsilon+o(\epsilon),
\end{split}\\
e^{-Z\epsilon}&=1-Z\epsilon+o(\epsilon).\label{eq41}
\end{align}
Substituting \eqref{eq39} - \eqref{eq41} into \eqref{G2}, we get \eqref{sub_opt_gap_eq_2}. This proves Theorem \eqref{thm_case_sum_bi_le_1}. 

Moreover, we observe that the gap $\left|\bar{\Delta}^{\text{w-peak}}(\mathbf{r}^{\star})-\bar{\Delta}^{\text{w-peak}}_{\text{opt}}\right|$ in the energy-scarce regime converges to zero at a speed of $O(\epsilon)$, as $\epsilon \to 0$. Further, both the upper and lower bounds \eqref{upper_bound_sumb_i_le_1}, \eqref{lower_bound_sum_bi_le_1}, converge to $\sum_{i=1}^M[(w_i/b_i)+w_i]$ as $t_s/\mathbb{E}[T]\to 0$. Thus, this value is the asymptotic optimal objective  value of Problem \textbf{1}. This proves Corollary \ref{cor2}.
\fi
\section{Learning to Optimize Age}\label{sec:learn_algo}
Note that the optimal rate $\br\ust$ in Theorem~\ref{thm_case_sum_bi_ge_1} depends upon the mean transmission time $\bE\left[T\right]$. Since the transmission time also depends upon (possibly) time-varying channel conditions, estimating $\mathbb{E}[T]$ accurately a priori, could be cumbersome. Thus, in this section, we derive learning algorithms that optimize the total weighted average peak age of all sources when the mean transmission time $\bE\left[T\right]$ is unknown to the scheduler. We begin by reducing our system to an equivalent discrete-time Markov chain.

\emph{Contributions and Challenges}: The simplest learning algorithm is called the certainty equivalent rule~\cite{van1981certainty,mania2019certainty,mandl1974estimation,mete2020reward}. In this, the scheduler maintains an empirical estimate of $\bE\left[T\right]$, and utilizes sleep parameters that are optimal when the true value of the mean transmission time is equal to this estimate. The regret of a learning algorithm is the sub-optimality in the performance that results because the algorithm does not know the system parameters. What we are able to show is that by using the CE rule, we are able to get $o(H)$ regret, where $H$ is the time-horizon.
This further implies that the \emph{long-term time-average performance} of our CE algorithm is asymptotically optimal.

This result is important since it is well-known by now~\cite{borkar1979adaptive} that 
in many reinforcement learning problems~\cite{sutton1998reinforcement}, the CE rule fails to be yield long-term time average performance, because it does not yield a correct estimate of the optimal choices. Thus, more complex learning rules, such as optimism in the face of uncertainty~\cite{jaksch2010near,mete2020reward} that utilize confidence balls in addition to the empirical estimates and thus have a significantly higher computational complexity, are required in order to ensure optimality. Our main contribution is to show that the vanilla CE rule yields asymptotically the same long-term time average performance as the scheduler that knows the system parameters in advance, i.e. (with a high probability) the ``sub-optimality gap'' of the CE rule is $o(H)$ where $H$ is the operating time horizon. This means that instead of using more complex learning algorithm such as the UCRL~\cite{jaksch2010near} or RBMLE~\cite{mete2020reward}, one could use CE thereby saving precious computing power and attaining the optimal average performance (asymptotically). We perform a finite-time performance analysis of the CE rule and explicitly quantify its sub-optimality by deriving an upper-bound on its ``regret'', i.e., the gap between its average expected performance, and that resulting from the application of optimal sleep parameter. The problem of designing and analyzing learning algorithms for our setup poses several challenges, primarily because the age process evolves in continuous time on a continuous state-space that is not compact. To address this difficulty, we show that for the purpose of  optimizing average age, we can equivalently work with a discrete-time process. We then utilize several techniques from the theory of general state-space Markov chains~\cite{nummelin2004general} for analyzing the learning regret. 

\emph{Sampling Continuous Time Process}: Consider the multi-source system in which the sleep durations are modulated according to the parameter vector $\br=(r_1,r_2,\ldots,r_M)$. Throughout this section, we let $n\in\bN$ be the discrete time of the sampled system. We sample the original continuous-time system at those time instants when one out of the following events occur:
\begin{itemize}
\item a source $l$ gets channel access and starts transmitting. We say that it wakes up, denoted by $m_{l}(n)=1$,
\item a source $l$ completes packet transmission, and hence goes into sleep mode such that $m_{l}(n)=0$.
\end{itemize}
In what follows, we make this assumption.
\begin{assumption}\label{assum:1}
The transmission times are bounded, i.e., $0\le T\le T_{\max}$ almost surely, where $T_{\max}>0$. Moreover, the probability density function $f(\cdot)$ of $T$ satisfies
\begin{align*}
lb \le f(n) \le ub, \forall y\in [0,T_{\max}],
\end{align*}
where $lb, ub>0$ are upper and lower bounds on the density function.
$\square$
\end{assumption}
Define $s_{l}(n):=(\Delta_{l}(n),m_{l}(n))$, where $\Delta_l(n)$ is the age, and $m_l(n)\in \{0,1\}$ is the mode of user $l$. Define,
\begin{align}\label{def:st}
\bs(n) :=  \left( s_1(n),s_2(n),\ldots,s_M(n)\right). 
\end{align}
As is shown in Lemma~\ref{lemma:equiv_problem} (see Appendix \ref{sec:learn_proof}), for the purpose of adaptively choosing sleep parameters, the process $\bs(n)$ serves as a sufficient-statistics~\cite{striebel1965sufficient} for the optimization problem \eqref{problem1}. In other words, $\bs(n)$ is the state of a Markov decision process. Hence, we will work exclusively with the discrete-time system obtained by sampling the original continuous-time system. We use $\cal{S}$ to denote the state space of a single source, i.e., we have $s_l(n)\in \cal{S}$.  Consider the operation over a time horizon of $H$ discrete time-steps, and let $K_{l}$ denote the (random) number of packets delivered to source $l$ until time $H$. The cumulative cost incurred is given by
\begin{align}\label{def:peak_cost}
C(H) := \sum_{l=1}^{M}  \sum_{i=1}^{K_l} w_l \Delta^{\text{peak} }_{l,i},
\end{align}
where $\Delta^{\text{peak}}_{l,i}$ denotes the $i$-th peak age of source $l$. We let $r_l(n) \in \bR_{+}$ denote the sleep period parameter for source $l$, and denote $\br(n):=\left(r_1(n),r_2(n),\ldots,r_M(n)\right)$. As is shown in Lemma~\ref{lemma:equiv_problem}, the expected value of the cumulative value of peak age can be written as follows,
\begin{align}\label{def:cost}
\bE \left(\sum_{n=1}^{H-1}~g(\bs(n)) \right),
\end{align}
where the function $g$ is described in Lemma~\ref{lemma:equiv_problem}. However, in our setup, the controller that chooses $\br(n)$ does not know the density function $f$ of the packet transmission time, and has to adaptively choose the sleeping period paremeter $\br(n)$ so as to minimize the operating cost~\eqref{def:cost}. 

Let $\cF^{(d)}_t$ denote the sigma-algebra generated by the random variables $\{\bs(i)\}_{i=1}^{n}, \{\br(i)\}_{i=1}^{n-1}$ (the super-script $d$ denotes the fact that we are working with discretized system). A learning policy is a collection of maps $\cF^{(d)}_t\mapsto \br(n), n=1,2,\ldots,H$ that chooses the sleep period parameter adaptively based on  past operation history of the system.~The performance of a learning policy is measured by its regret $R(H)$, which is defined as follows,
\begin{align}\label{def:regret}
R(H) := \sum_{n=1}^{H}  g(\bs(n)) - H \bar{\Delta}^{\text{w-peak}}(\mathbf{r}^{\star}),
\end{align}
where $\bar{\Delta}^{\text{w-peak}}(\mathbf{r}^{\star})$ is the optimal performance when the true system parameter is known and hence the scheduler can implement the optimal rate vector. Throughout this section we use $\theta$ to denote the mean transmission time $\mathbb{E}[T]$. Since the optimal rate depends upon the probability density function $f(\cdot)$ only through its mean $\bE\left[T\right]$, we also denote it by $\br\ust_{\te}$.

\textbf{Certainty Equivalence Learning Algorithm}: We begin with some notations. Let $col(i), i=1,2,\ldots$ be a random variable that is equal to $1$ if there is no collision at time $i$, and is $0$ otherwise.  The empirical estimate of $\theta$ at time $n$ is denoted by $\hat{\theta}(n)$, and given as 
\begin{align}\label{def:point_est}
\hat{\theta}(n) &:=\frac{\sum_{i=1}^{n} T(i) col(i)}{N(n)\vee 1},\\
\mbox{ where }N(n) :&= \sum_{i=1}^{n} col(i),
\end{align}
and $T(i)\in \left[0,T_{\max}\right]$ is the time taken to deliver packet at time $i$.

The learning rule operates in episodes. We let $\tau_k$ be the start time of the $k$-th episode, and let $\cE_k := \left\{\tau_k,\tau_k +1,\ldots,\tau_{k+1}-1\right\}$ be the time-slots that comprise the $k$-th episode, so that the duration of $\cE_k$ is $\tau_{k+1} - \tau_k$ time-slots. We let $\tau_k=2^k$, use $k(n)$ to denote the index of the current episode  at time $n$, and $\te(n)$ to denote the empirical estimate at the beginning of the current episode,  defined by
\begin{align}
k(n) :&= \max\left\{ k: \tau_k \le t  \right\},\label{def:ucb_problem}\\
\te(n) :&= \hat{\theta}(\tau_{k(n)}). \label{def:ucb_problem_1}
\end{align}
Within each single episode the algorithm  implements a single stationary controller that makes decisions only on the basis of the state $\bs(n)$ and the estimate $\te(\tau_k)$ obtained at the beginning of the current ongoing episode $k(n)$. It chooses the sleep period parameter as $\br(n)= \br\ust_{\te(n)}, \forall n\in \cE_k$, i.e., it utilizes the rate vector that is optimal for the system whose mean transmission time is equal to $\theta(n)$. Thus, $\br(n)= \br^{\star}_{\theta(\tau_k)}$ for $\tau_k \le n \le \tau_{k+1}-1$. 

We summarize our learning rule in Algorithm~\ref{algo:ucb}. 
\begin{algorithm}
 \begin{algorithmic}[1]
 \renewcommand{\algorithmicrequire}{\textbf{Input:}}
 \REQUIRE $N,\gamma\ge 4$\\
 Set $\hat{\theta}(1)=.5$.
  \FOR {$n=1,2,\ldots$} 
  \IF {$n = \tau_k$}
  \STATE Calculate $\hat{\theta}(n)$ as in~\eqref{def:point_est} and set $\theta(n)$ as in~\eqref{def:ucb_problem}-\eqref{def:ucb_problem_1}. \\
  \ENDIF\\
  Use $\br(n)=\br\ust_{\te(n)}$
  \ENDFOR
   \caption{Certainty Equivalence Learning for Age Optimization}\label{algo:ucb}
 \end{algorithmic} 
\end{algorithm}
We will analyze its performance under the following assumptions. Throughout, for a vector $\mathbf{x}$, we let $\|\mathbf{x}\|$ denote its Euclidean norm, and $\|\mathbf{x}\|_1$ denote its $1$-norm.
\begin{assumption}\label{assum:bounded}
With a high probability, say greater than $1-\delta$, where $\delta>0$ is a small constant, the state value $\bs(\tau_k)$ at the beginning of each episode $k$ belongs to a compact set $\cK:= \left\{ \bx\in \cS^{M}: \|\bx\|_1 \le K_1  \right\}$,  where $\cal{S}$ is the state space of a single source. 
$\square$
\end{assumption}

The above is not a restrictive assumption, since the scheduler can always ensure that towards the end of each episode, each source receives  a sufficient amount of service in order to ensure this condition. We now make a few assumptions regarding the set $\Theta$ of ``allowable parameters''.

\begin{assumption}\label{assum:2}
Recall that $\br^{\star}_{\te}$ is the optimal sleep parameter when the mean transmission time is equal to $\te$. The following two properties hold for the scheduler that uses $\br(n)\equiv \br^{\star}_{\te}, n\in \bN$. \\
(i) The average cost is finite, i.e.
\begin{align}\label{def:stability}
\limsup_{H\to\infty} \frac{1}{H} \sum_{n=1}^{H} \bE_{\br^{\star}_{\te}}\left(g(\bs(n))\right)  \le K_2 <\infty,
\end{align}
\\
(ii) Each user gets channel access with a non-zero probability 
\begin{align}\label{ineq:channel_access}
\inf_{\te \in \Theta,l\in [M]} \bP\left( ca_{l}(n)=1| \br(n)= \br^{\star}_{\te} \right) & >0,
\end{align}
where $ca_{l}(i)$ is a random variable that is $1$ if source $l$ gets channel access at time $i$, while is $0$ otherwise. We denote
\begin{align}\label{def:cmin}
p_{\min} := \inf_{\te \in \Theta,l\in [M]} \bP\left( ca_{l}(n)=1| \br(n)= \br^{\star}_{\te} \right). 
\end{align}
$\square$
\end{assumption}

It is easily verified that~\eqref{def:stability},~\eqref{ineq:channel_access} hold true whenever the rate vector $\br$ is bounded.

The following result quantifies the learning regret of Algorithm~\ref{algo:ucb}. 
\begin{theorem}\label{th:learn_regret}
Consider the problem of designing a learning algorithm that does not know the statistics of the transmission time $T$, and adaptively chooses the sleep period parameters $\br(n)$ in order to minimize the cumulative peak age of $M$ sources. Let $\delta_1 \in \left(0, p_{\min}\right)$ be a constant. Then, under Assumptions \ref{assum:1}-\ref{assum:2}, the regret of Algorithm~\ref{algo:ucb} can be bounded as follows,
\begin{align*}
\bE\left[R(H)\right] \le & K_2 \max\left\{\frac{\gamma\log H}{( p_{\min} - \sqrt{\delta_1})  \delta^{2}},O\left(\frac{1}{\delta_1}\log H\right)  \right\}+\\& K_2 \frac{\pi^2}{6} + L\sqrt{\frac{H\gamma(\log H)^2}{ ( p_{\min} - \sqrt{\delta_1}) }  }.
\end{align*} 
where $H$ is the operating time horizon, $\gamma\ge4$ is a constant, $K_2,p_{\min}$ are as in Assumption~\ref{assum:2}, and the parameters $\delta,L>0$ are as in Lemma~\ref{lemma:convergence}.
\end{theorem}
\begin{proof}
See Appendix \ref{sec:learn_proof}.
\end{proof}
\section{Numerical and Simulation Results}\label{Simulations}
We use Matlab and NS-3 to evaluate the performance of our algorithm.
We use ``age-optimal scheduler'' to denote the sleep-wake scheduler with the sleep period paramters $r_l^{\star}$'s as in \eqref{r_f_b_gr_1}, which was shown to be near-optimal in Theorem \ref{thm_case_sum_bi_ge_1} and Theorem \ref{thm_case_sum_bi_le_1}. By ``throughput-optimal scheduler'', we  refer to the sleep-wake algorithm of~\cite{chen2013life} that is known to achieve the optimal trade-off between the throughput and energy consumption reduction. Moreover, we use ``fixed sleep-rate scheduler'' to denote the sleep-wake scheduler in which the sleep period parameters $r_l$'s are equal for all the sources, i.e., $r_l=k$ for all $l$,  where the parameter $k$ has been chosen so as to satisfy the energy constraints of Problem \textbf{1}.
We also let $\bar{\Delta}_{\text{un}}^{\text{w-peak}}(\mathbf{r})$ denote the unnormalized total weighted average peak age in \eqref{t_avg_peak_age}. Finally, we would like to mention that we do not compare the performance of our proposed algorithm with the CSMA algorithms of~\cite{maatouk2019minimizing,wang2019broadcast} where the goal  was solely to minimize the age. Since they do not incorporate energy constraints, it is not fair to compare the performance of our algorithm with them. 

Unless stated otherwise, our set up is as follows: The average transmission time is $\mathbb{E}[T]=5$ ms. The weights $w_l$'s attached to different sources are generated by sampling from a uniform distribution in the interval $[0, 10]$. The target power efficiencies $b_l$'s are randomly generated according to a uniform distribution in the range $[0, 1]$. 

\subsection{Numerical Evaluations}

\begin{figure}[t]
\centering
\includegraphics[scale=0.4]{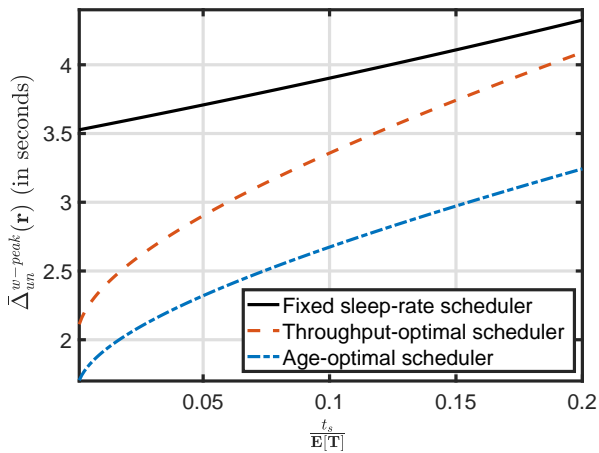}
\centering
\captionsetup{justification=justified}
\caption{Total weighted average peak age $\bar{\Delta}_{\text{un}}^{\text{w-peak}}(\mathbf{r})$ in \eqref{t_avg_peak_age} versus the ratio $\frac{t_s}{\mathbb{E}[T]}$ for $M=10$ sources.}
\label{peak_age_comp}
\end{figure}
Figure~\ref{peak_age_comp} plots the total weighted average peak age $\bar{\Delta}_{\text{un}}^{\text{w-peak}}(\mathbf{r})$ in  \eqref{t_avg_peak_age} as a function of the ratio $\frac{t_s}{\mathbb{E}[T]}$,  where the number of sources is $M=10$. The age-optimal scheduler is seen to outperform the throughput-optimal and Fixed sleep-rate schedulers. This implies that what minimizes the throughput does not necessarily minimize AoI and vice versa. 
 Moreover, we observe that the total weighted average peak age of all schedulers increases as the sensing time increases. This is expected since an increase in the sensing time leads to an increase in the probability of packet collisions, which in turn deteriorates the age performance of these schedulers.

\begin{figure}[t]
\centering
\includegraphics[scale=0.4]{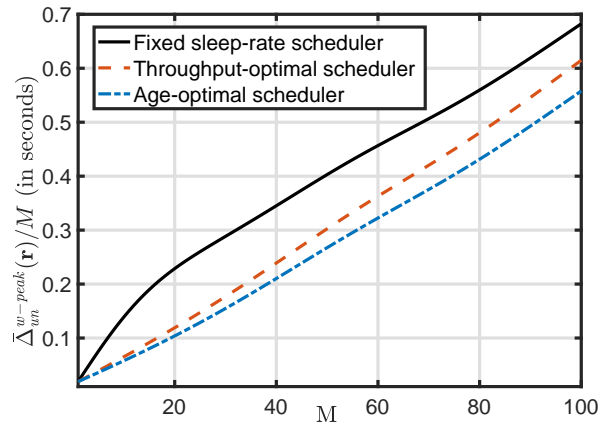}
\centering
\captionsetup{justification=justified}
\caption{Total weighted average peak age $\bar{\Delta}_{\text{un}}^{\text{w-peak}}(\mathbf{r})$ in \eqref{t_avg_peak_age} versus the number of sources $M$, where $\bar{\Delta}_{\text{un}}^{\text{w-peak}}(\mathbf{r})$ has been normalized by $M$ while plotting.}
\label{peak_age_comp_num}
\end{figure}
We then scale the number of sources $M$, and plot $\bar{\Delta}_{\text{un}}^{\text{w-peak}}(\mathbf{r})$  in  \eqref{t_avg_peak_age} as a function of $M$ in Figure~\ref{peak_age_comp_num}. While plotting, we normalize the performance by the number of sources $M$. The sensing time $t_s$ is fixed at $t_s=40~\mu$s. The weights $w_l$'s corresponding to different sources are randomly generated uniformly  within the range $[0, 2]$. The age-optimal scheduler is shown to outperform other schedulers uniformly for all values of $M$. Moreover, as we can observe, the average peak age of the sources under age-optimal scheduler increases up to around 0.55 seconds only, while the number of sources rises from 1 to 100. This indicates the  robustness of our algorithm to changes in the number of sources in a network.

\begin{figure}[t]
\centering
\includegraphics[scale=0.4]{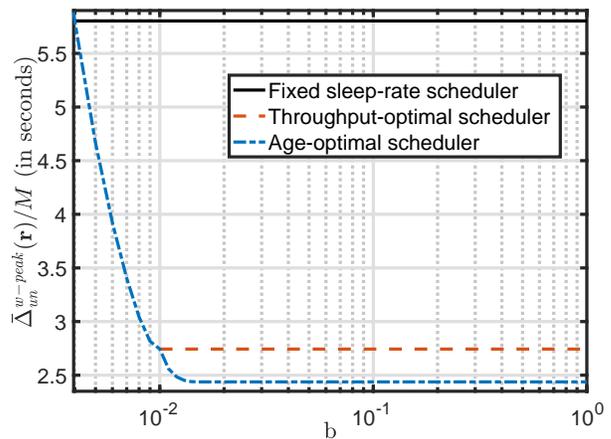}
\centering
\captionsetup{justification=justified}
\caption{Total weighted average peak age $\bar{\Delta}_{\text{un}}^{\text{w-peak}}(\mathbf{r})$ in \eqref{t_avg_peak_age} versus the target power efficiency $b$ for  $M=100$ sources, where $\bar{\Delta}_{\text{un}}^{\text{w-peak}}(\mathbf{r})$ has been normalized by $M$ while plotting.}
\label{peak_age_comp_lower_bound}
\end{figure}
In Figure \ref{peak_age_comp_lower_bound}, we fix the value of $M$ as $100$ and the target power efficiencies at the same value for all the sources, i.e., $b_l= b$ for all $l$. We then vary the parameter $b$ and plot the resulting performance. While plotting, we normalize the performance by the number of sources $M$.  We exclude the simulation of the throughput-optimal scheduler for $b< 0.01$ since the sleeping period parameters that are proposed in \cite{chen2013life} are not feasible for Problem \textbf{1} in the energy-scarce regime, i.e., when $\sum_{i=1}^Mb_i<1$. The age-optimal scheduler outperforms the other  schedulers. Moreover, its performance is a decreasing function of $b$, and then settles at a constant value. This occurs because our proposed solution in~\eqref{r_f_b_gr_1}  is a function solely of the weights $w_l$'s and $\beta^\star$ when $b$ exceeds some value. Thus, the performance of the proposed scheduler saturates after this value of $b$.

\begin{figure}[t]
\centering
\includegraphics[scale=0.4]{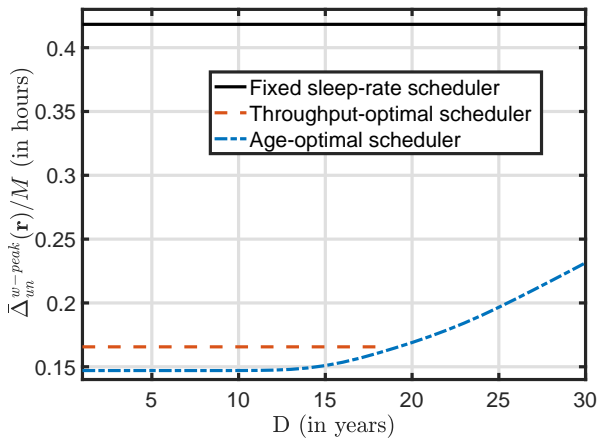}
\centering
\captionsetup{justification=justified}
\caption{Total weighted average peak age $\bar{\Delta}_{\text{un}}^{\text{w-peak}}(\mathbf{r})$ in \eqref{t_avg_peak_age} versus the target lifetime $D$ for a dense network with  $M=10^5$ sources, where $\bar{\Delta}_{\text{un}}^{\text{w-peak}}(\mathbf{r})$ has been normalized by $M$ while plotting. Since the throughput--optimal scheduler is infeasible for values of $D$ greater than $18$ years, we do not plot its performance for these values.}
\label{peak_age_comp_dense_network}
\end{figure}
We now show the effectiveness of the proposed scheduler when deployed in ``dense networks''~\cite{kowshik2019energy,kowshik2019fundamental}. Dense networks are characterized by a large number of sources connected to a single AP. We fix $M$ at $10^5$ sources, and take the target lifetimes of the sources to be equal, i.e., $D_l= D$ for all $l$. The weights $w_l$'s corresponding to different sources are generated randomly by sampling from the uniform distribution in the range $[0, 2]$. We let the initial battery level $B_l= 8$ mAh for all $l$ and the output voltage is 5 Volt. We also let the energy consumption in a transmission mode to be 24.75 mW for all sources. We vary the parameter $D$ and plot the resulting performance in Figure \ref{peak_age_comp_dense_network}. While plotting, we normalize the performance by the number of sources $M$. We exclude simulations for the throughput-optimal scheduler for values of $D$ for which the scheduler is infeasible, i.e., its cumulative energy consumption exceeds the total allowable energy consumption. The age-optimal scheduler is seen to outperform the others. 
As observed in Figure~\ref{peak_age_comp_dense_network}, under the age-optimal scheduler, sources can be active for up to 25 years, while simultaneously achieving a decent average peak age of around .2 hour, i.e., 12 minutes. This makes it suitable for dense networks, where it is crucial that the sources are necessarily active for many years.
\subsection{NS-3 Simulation}\label{ns3sim}
We use NS-3 \cite{ns_3_2_0} to investigate the effect of our model  assumptions on the performance of  age-optimal scheduler in a more practical situation. We simulate the age-optimal scheduler by using IEEE 802.11b while disabling the RTS-CTS and modifying the back-off times to be exponentially distributed in the MAC layer. Our simulation results are averaged over 5 system realizations. The UDP saturation conditions are satisfied such that the source nodes always have packets to send.

 Our simulation consists of a WiFi network with 1 AP and 3 associated source nodes in a field of size 50m $\times$ 50m. We set the sensing threshold to -100 dBm which covers a range of 110m. Thus, all sources can hear each other. The initial battery level of each source is 60 mAh, where the output voltage is 5 Volt. For each source, the power consumption in the transmission mode is 24.75 mW, and the power consumption in the sleep mode is 15 $\mu$W. Moreover, all weights are set to unity, i.e., $w_l=1$ for all $l$.

\begin{figure}[t]
\centering
\includegraphics[scale=0.4]{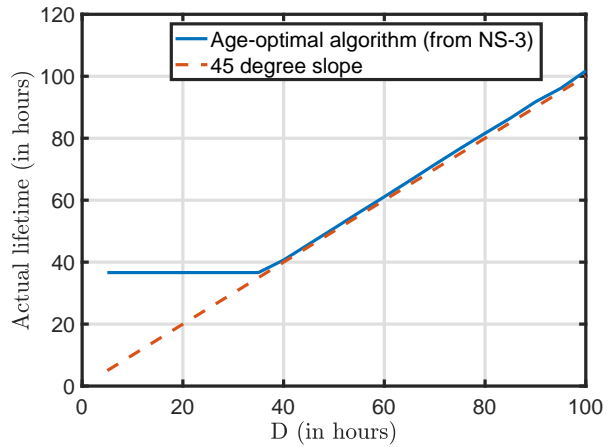}
\centering
\captionsetup{justification=justified}
\caption{The average actual lifetime versus the target lifetime $D$.}
\label{ns3_actual_lifetime}
\end{figure}
Figure \ref{ns3_actual_lifetime} plots the average actual lifetime of the sources versus the target lifetime, where we take the target lifetimes of all sources to be equal, i.e., $D_l=D$ for all $l$. As we can observe, the actual lifetime of the age-optimal scheduler always achieves the target lifetime. This suggests that our assumptions (i.e.,  (i) omitting the power dissipation in the sleep mode and in the sensing times, (ii) the average transmission times and collision times  are equal to each other) do not affect the performance of the algorithm which reaches its target lifetime.

\begin{figure}[t]
\centering
\includegraphics[scale=0.4]{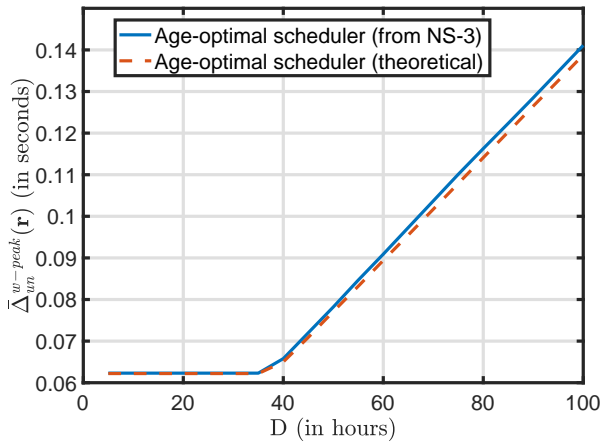}
\centering
\captionsetup{,justification=justified}
\caption{Total weighted average peak age $\bar{\Delta}_{\text{un}}^{\text{w-peak}}(\mathbf{r})$ versus the target lifetime $D$.}
\label{ns3_total_avg_peak}
\end{figure}
Figure \ref{ns3_total_avg_peak} plots the total weighted average peak age versus the target lifetime, where again we take the target lifetimes of all sources to be equal, i.e., $D_l=D$ for all $l$. The age-optimal scheduler (theoretical) curve is obtained using  \eqref{t_avg_peak_age}, while
the age-optimal scheduler (from NS-3) curve is obtained using the NS-3 simulator.  As we can observe, the difference between the plotted curves does not exceed 2\% of the age-optimal scheduler (theoretical) performance.
This emphasizes the negligible impact of our assumptions on the performance of our proposed algorithm.

\section{Conclusions}\label{Concl}
We designed an efficient sleep-wake scheduling algorithm for wireless networks that attains the optimal trade-off between minimizing the AoI and energy consumption. Since the associated optimization problem is non-convex, in general we could not hope to solve it for all values of the system parameters. However, in the regime when the carrier sensing time $t_s$ is negligible as compared to the average transmission time $\mathbb{E}[T]$, we were able to provide a near-optimal solution. Moreover, the proposed solution is in a simple form that allowed us to design an easy-to-implement algorithm to obtain the solution. Furthermore, we showed that the performance of our proposed algorithm is asymptotically no worse than that of the optimal synchronized scheduler, as $t_s/\mathbb{E}[T]\to 0$. Finally, when the mean transmission time is unknown, we devise a reinforcement learning algorithm that adaptively learns the unknown parameter. 


\section{Acknowledgements}
The authors appreciate Jiayu Pan and Shaoyi Li for their great efforts in obtaining the ns-3 simulation results. 

\bibliographystyle{IEEEbib}
\bibliography{MyLib}
\ifreport 
\section{Appendix}
\appendices
\section{Derivation of (\ref{access_prob_in_agiven_cycle})}\label{Appendix_A'}
Define $S_l$ as the residual sleeping period of source~$l$ after a sleep-wake cycle is over. Due to the memoryless property of exponential distribution, since the sleeping period of source~$l$ is exponentially distributed with mean value $\mathbb{E}[T]/r_l$, $S_l$ is also exponentially distributed with mean value $\mathbb{E}[T]/r_l$. According to the proposed sleep-wake scheduler, source~$l$ gains access to the channel and transmits successfully in a given cycle if $S_i\ge S_l+t_s$ for all $i\neq l$. Hence, we have
\begin{align}
\alpha_l&=\mathbb{P}(S_i\ge S_l+t_s,~\forall i\neq l)\\&
\stackrel{(a)}{=}\mathbb{E}[\mathbb{P}(S_i\ge S_l+t_s,~\forall i\neq l\vert S_l)]\\&
\stackrel{(b)}{=}\mathbb{E}\left[\prod_{i\neq l}\mathbb{P}(S_i\ge S_l+t_s\vert S_l)\right]\\&
=\int_0^\infty\left[\prod_{i\neq l}e^{-r_i\frac{s_l+t_s}{\mathbb{E}[T]}}\right]\frac{r_l}{\mathbb{E}[T]}e^{-r_l\frac{s_l}{\mathbb{E}[T]}}ds_l\\&
=\frac{r_l e^{r_l\frac{t_s}{\mathbb{E}[T]}}}{e^{\sum_{i=1}^Mr_i\frac{t_s}{\mathbb{E}[T]}}\sum_{i=1}^Mr_i},
\end{align}
where (a) is due to $\mathbb{P}[A]=\mathbb{E}[\mathbb{P}(A\vert B)]$, and (b) is due to the fact that $S_l$ is independent for different sources.\qed

\section{Derivation of (\ref{sigma_l})}\label{Appendix_A''}
Recall the definition of $S_l$ at the beginning of Appendix \ref{Appendix_A'}. Moreover, define $P_l$ as the probability that source~$l$ transmits a packet in a given cycle, regardless whether packet collision occurs or not. For the sleep-wake scheduling mechanism the we are utilizing here, source~$l$ transmits in a given cycle as long as no other source wakes up before $S_l-t_s$, i.e., $S_i\ge S_l-t_s$ for all $i\neq l$. Hence, we have
\begin{align}
P_l&=\mathbb{P}(S_i\ge S_l-t_s,~\forall i\neq l)\\&
=\mathbb{P}(S_i\ge S_l-t_s,~\forall i\neq l,~S_l\ge t_s)+\mathbb{P}(S_l<t_s),\label{P_l}
\end{align}
where the first term in the RHS is given by
\begin{align}
&\mathbb{P}(S_i\ge S_l-t_s\ge 0,~\forall i\neq l)\\
=&\mathbb{E}[\mathbb{P}(S_i\ge S_l-t_s\ge 0,~\forall i\neq l\vert S_l)]\\
=&\mathbb{E}\left[\prod_{i\neq l}\mathbb{P}(S_i\ge S_l-t_s\ge 0\vert S_l)]\right]\\
=&\int_{t_s}^\infty\left[\prod_{i\neq l}e^{-r_i\frac{s_l-t_s}{\mathbb{E}[T]}}\right]\frac{r_l}{\mathbb{E}[T]}e^{-r_l\frac{s_l}{\mathbb{E}[T]}}ds_l\\
=&e^{-r_l\frac{t_s}{\mathbb{E}[T]}}\frac{r_l}{\sum_{i=1}^Mr_i}.\label{App15_56}
\end{align}
Since $S_l$ is exponentially distributed with mean value $\mathbb{E}[T]/r_l$, we can determine the second term in the RHS of \eqref{P_l} as follows:
\begin{align}\label{eq69_2}
\mathbb{P}(S_l<t_s)=1-e^{-r_l\frac{t_s}{\mathbb{E}[T]}}.
\end{align}
Substituting \eqref{App15_56} and \eqref{eq69_2} back into \eqref{P_l}, we get
\begin{align}
P_l=1-e^{-r_l\frac{t_s}{\mathbb{E}[T]}}+e^{-r_l\frac{t_s}{\mathbb{E}[T]}}\frac{r_l}{\sum_{i=1}^Mr_i}.
\end{align}
Let $\alpha_{\text{col}}$ denote the collision probability in a given cycle. We have $\alpha_{\text{col}}=1-\sum_{i=1}^M\alpha_i$, because each cycle includes either a successful transmission or a collision. Moreover, let $\mathbb{E}[\mathbf{Idle}]$ denote the mean of the idle duration in a cycle. By the renewal theory in stochastic processes \cite{gallager2012discrete}, $\sigma_l$ is given by
\begin{align}
\sigma_l&=\frac{P_l\mathbb{E}[T]}{(\sum_{i=1}^M\alpha_i+\alpha_{\text{col}})\mathbb{E}[T]+\mathbb{E}[\mathbf{Idle}]}\\&
=\frac{P_l\mathbb{E}[T]}{\mathbb{E}[T]+\frac{\mathbb{E}[T]}{\sum_{i=1}^Mr_i}}\\&
=\frac{[1-e^{-r_l\frac{t_s}{\mathbb{E}[T]}}]\sum_{i=1}^Mr_i+r_le^{-r_l\frac{t_s}{\mathbb{E}[T]}}}{\sum_{i=1}^Mr_i+1}.
\end{align}
\qed

\section{Proof of Lemma \ref{lemma_feasibility}}\label{Appendix_A}
First of all, we need to show that \eqref{condition_beta} has a solution for $\beta^{\star}$.
\begin{lemma}\label{lemma_solution_19}
Suppose that $w_l>0$, and $b_l>0$ for all $l$. If $\sum_{i=1}^M b_i \geq 1$, then \eqref{condition_beta} has a unique solution on $[0,\max_l(b_l/\sqrt{w_l})]$; otherwise, \eqref{condition_beta} has no solution.  
\end{lemma}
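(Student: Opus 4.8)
The plan is to study the scalar function $g(\beta) = \sum_{i=1}^M \min\{b_i, \beta\sqrt{w_i}\}$ on $[0,\infty)$ and to locate its level set $\{g = 1\}$, since \eqref{condition_beta} is exactly the equation $g(\beta^\star) = 1$. First I would record the elementary structural properties of $g$. Each summand $\min\{b_i, \beta\sqrt{w_i}\}$ is the minimum of a strictly increasing affine function of $\beta$ (namely $\beta\sqrt{w_i}$, with $w_i > 0$) and a positive constant $b_i$, hence it is continuous, non-decreasing, and piecewise linear: it increases with slope $\sqrt{w_i}$ for $\beta < b_i/\sqrt{w_i}$ and is constant equal to $b_i$ for $\beta \geq b_i/\sqrt{w_i}$. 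Summing over $i$, the function $g$ inherits continuity and monotonicity, and $g(0) = 0$ because $b_i > 0$ forces $\min\{b_i, 0\} = 0$.

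Next I would separate the two regimes divided by $\beta_{\max} := \max_l (b_l/\sqrt{w_l})$. For $\beta \geq \beta_{\max}$ every summand has saturated, so $g(\beta) = \sum_{i=1}^M b_i$ is constant on $[\beta_{\max}, \infty)$; in particular $g(\beta_{\max}) = \sum_{i=1}^M b_i$. The crucial point for uniqueness is that $g$ is \emph{strictly} increasing on $[0, \beta_{\max}]$: if the index $l^\star$ attains the maximum defining $\beta_{\max}$, then for every $\beta < \beta_{\max}$ we have $\beta\sqrt{w_{l^\star}} < b_{l^\star}$, so the $l^\star$-th summand is still on its increasing branch (slope $\sqrt{w_{l^\star}} > 0$). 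Thus at least one summand strictly increases throughout the interval, which forces $g$ to be strictly increasing there.

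With these facts in hand, the conclusion follows from the intermediate value theorem combined with strict monotonicity. If $\sum_{i=1}^M b_i \geq 1$, then on $[0, \beta_{\max}]$ the function $g$ runs continuously and strictly from $g(0) = 0$ up to $g(\beta_{\max}) = \sum_{i=1}^M b_i \geq 1$, so $g(\beta) = 1$ has exactly one root on this interval, and strict monotonicity rules out a second root there. (When $\sum_i b_i > 1$ this root is in fact globally unique, since $g \equiv \sum_i b_i > 1$ beyond $\beta_{\max}$; when $\sum_i b_i = 1$ the root is the endpoint $\beta_{\max}$ itself, which is consistent with the interval statement.) Conversely, if $\sum_{i=1}^M b_i < 1$, then $g(\beta) \leq \sum_{i=1}^M b_i < 1$ for all $\beta \geq 0$, because $\sum_i b_i$ is the supremum of $g$; hence $g(\beta) = 1$ has no solution at all, giving the ``otherwise'' claim.

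I do not expect a serious obstacle here; the only step requiring care is the strict-monotonicity claim on $[0, \beta_{\max}]$, which is what upgrades mere existence (guaranteed by continuity and the intermediate value theorem) to uniqueness. It hinges on the simple observation that the maximizing index $l^\star$ keeps its summand strictly increasing all the way up to the right endpoint, so no ``flat'' subinterval can occur before $\beta_{\max}$.
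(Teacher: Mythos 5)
Your proposal is correct and follows essentially the same route as the paper's proof: evaluate $g(0)=0$ and $g\bigl(\max_l(b_l/\sqrt{w_l})\bigr)=\sum_{i=1}^M b_i$, invoke continuity and strict monotonicity on $[0,\max_l(b_l/\sqrt{w_l})]$ for existence and uniqueness when $\sum_i b_i\ge 1$, and bound $g$ by $\sum_i b_i<1$ for the non-existence claim. Your explicit justification of strict monotonicity via the maximizing index $l^\star$ is a welcome detail that the paper merely asserts.
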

\begin{proof}
It is clear that if $\sum_{i=1}^M b_i =1$, then $\beta^\star$ satisfies \eqref{condition_beta} if and only if $\beta^\star\ge \max_l(b_l/\sqrt{w_l})$. Hence, \eqref{condition_beta} has a unique solution on $[0,\max_l(b_l/\sqrt{w_l})]$ in this case. We now focus on the case of $\sum_{i=1}^M b_i >1$. In this case, we have the following: 
\begin{itemize}
\item If $\beta^\star = 0$, then $\sum_{i=1}^M\min\{b_i,\beta^{\star}\sqrt{w_i}\}=0$.
\item If $\beta^\star =\max_l(b_l/\sqrt{w_l})$, then $\sum_{i=1}^M\min\{b_i,\beta^{\star}\sqrt{w_i}\}>1$. 
\item The left hand side (LHS) of \eqref{condition_beta} is strictly increasing and continuous in $\beta^\star$ on $[0,\max_l(b_l/\sqrt{w_l})]$.
\end{itemize}
As a result, \eqref{condition_beta} has a unique solution on $[0,\max_l(b_l/\sqrt{w_l})]$ in this case as well. Finally, if $\sum_{i=1}^M b_i <1$, then $\sum_{i=1}^M\min\{b_i,\beta^{\star}\sqrt{w_i}\}\le \sum_{i=1}^M b_i < 1$. Hence, \eqref{condition_beta} has no solution if $\sum_{i=1}^M b_i <1$. This completes the proof. 
\end{proof}
Since we have $\sum_{i=1}^M b_i \geq 1$, Lemma \ref{lemma_solution_19} implies that \eqref{condition_beta} has a solution for $\beta^\star$. Now, we are ready to prove Lemma \ref{lemma_feasibility}. Consider the following constraints:
\begin{align}\label{equivelent_constrant}
\frac{r_l\frac{t_s}{\mathbb{E}[T]}\sum_{i=1}^Mr_i+r_l}{\sum_{i=1}^Mr_i+1}\leq b_l, \forall l.
\end{align}
Since we have 
\begin{align}
&1-e^{-r_l\frac{t_s}{\mathbb{E}[T]}}\le r_l\frac{t_s}{\mathbb{E}[T]},\\&
e^{-r_l\frac{t_s}{\mathbb{E}[T]}}\leq 1,
\end{align}
then, 
\begin{align}
[1-e^{-r_l\frac{t_s}{\mathbb{E}[T]}}]\sum_{i=1}^Mr_i+r_le^{-r_l\frac{t_s}{\mathbb{E}[T]}}\leq r_l\frac{t_s}{\mathbb{E}[T]}\sum_{i=1}^Mr_i+r_l.
\end{align}
Thus, if the constraints in \eqref{equivelent_constrant} are satisfied for a given solution $\mathbf{r}$, then the constraints of Problem \textbf{1} are satisfied as well. We can observe that the constraints in \eqref{equivelent_constrant} are equivalent to the following set of constraints:
\begin{align}\label{inequality_feasible}
\begin{split}
&r_l\le b_l\frac{x+1}{1+\frac{t_s}{\mathbb{E}[T]}x},\forall l\\ 
&\sum_{i=1}^Mr_i=x.
\end{split}
\end{align}
Now, it is easy to show that if $x\le \sqrt{\mathbb{E}[T]/t_s}$, then  $x\le (x+1)/[1+(t_s/\mathbb{E}[T])x]$. Meanwhile, our proposed solution $\mathbf{r}^{\star}$  \eqref{r_f_b_gr_1} - \eqref{condition_beta} satisfies $\sum_{i=1}^Mr_i^{\star}=x^{\star}$. Thus, if we can show that $x^{\star}\le \sqrt{\mathbb{E}[T]/t_s}$, then
\begin{align}
r_l^{\star}=\min\{b_l,\beta^{\star}\sqrt{w_l}\}x^{\star}\le b_lx^{\star}\le b_l \frac{x^{\star}+1}{1+\frac{t_s}{\mathbb{E}[T]}x^{\star}},
\end{align}
and the constraints in \eqref{inequality_feasible} hold for our proposed solution $\mathbf{r}^\star$. What remains is to prove that $x^{\star}\le \sqrt{\mathbb{E}[T]/t_s}$. We have
\begin{align}
x^{\star}=&\frac{-1}{2}+\sqrt{\frac{1}{4}+\frac{\mathbb{E}[T]}{t_s}}\\
=& \frac{\frac{\mathbb{E}[T]}{t_s}}{\frac{1}{2}+\sqrt{\frac{1}{4}+\frac{\mathbb{E}[T]}{t_s}}}\\
\le& \frac{\frac{\mathbb{E}[T]}{t_s}}{\sqrt{\frac{\mathbb{E}[T]}{t_s}}}= \sqrt{\frac{\mathbb{E}[T]}{t_s}}.
\end{align}
Hence, our proposed solution $\mathbf{r}^{\star}$  \eqref{r_f_b_gr_1} - \eqref{condition_beta} satisfies  \eqref{inequality_feasible}, which implies \eqref{equivelent_constrant}. This completes the proof. \qed

\section{Proof of Lemma \ref{lemma_bounds}}\label{Appendix_B}
By replacing $e^{-r_l(t_s/\mathbb{E}[T])}e^{\sum_{i=1}^Mr_i(t_s/\mathbb{E}[T])}$ in \eqref{problem2} of Problem \textbf{2} by 1, we obtain the following optimization problem:
\begin{align}
\begin{split}\label{lower_bound_optimization_problme1}
\min_{r_l>0}& \sum_{l=1}^M\frac{w_l}{r_l}\left(1+\sum_{i=1}^Mr_i\right)+\sum_{l=1}^Mw_l
\end{split}\\
\textbf{s.t.}~&r_l\leq b_l\left(\sum_{i=1}^Mr_i+1\right), \forall l.
\end{align}
Since $e^{-r_l(t_s/\mathbb{E}[T])}e^{\sum_{i=1}^Mr_i(t_s/\mathbb{E}[T])}\geq 1$, Problem \eqref{lower_bound_optimization_problme1} serves as a lower bound of Problem \textbf{2}, and hence a lower bound of Problem \textbf{1} as well. Define an auxiliary variable $y=\sum_{i=1}^Mr_i+1$. By this, we solve a two-layer nested optimization problem. In the inner layer, we optimize $\mathbf{r}$ for a given $y$. After solving $\mathbf{r}$, we will optimize $y$ in the outer layer. Now, fix the value of $y$, we obtain the following optimization problem (the inner layer):
\begin{align}
\begin{split}\label{optimization_auxiliary2}
\min_{r_i>0}& \sum_{i=1}^M\left[\frac{w_iy}{r_i}+w_i\right]
\end{split}\\
\textbf{s.t.}~&r_l\leq b_ly, \forall l,\label{const12}\\&
\sum_{i=1}^Mr_i+1=y.\label{const22}
\end{align}
The objective function in \eqref{optimization_auxiliary2} is a convex function. Moreover, the constraints in \eqref{const12} and \eqref{const22} are affine. Hence, Problem \eqref{optimization_auxiliary2} is convex. We use the Lagrangian duality approach to solve Problem  \eqref{optimization_auxiliary2}. Problem \eqref{optimization_auxiliary2} satisfies Slater's conditions. Thus, the Karush-Kuhn-Tucker (KKT) conditions are  both  necessary  and  sufficient for optimality \cite{boyd2004convex}. Let $\mathbf{\gamma}=(\gamma_1, \ldots, \gamma_M)$ and $\mu$ be the Lagrange multipliers associated with constraints \eqref{const12} and \eqref{const22}, respectively. Then, the Lagrangian of Problem \eqref{optimization_auxiliary2} is given by
\begin{equation}\label{lag88}
\begin{split}
\!\!\!\!L(\mathbf{r},\mathbf{\gamma},\mu)=&\sum_{i=1}^M\left[\frac{w_iy}{r_i}+w_i\right]\\&+\sum_{i=1}^M\!\gamma_i(r_i\!-\!b_iy)+\mu\left(\sum_{i=1}^Mr_i\!+\!1\!-\!y\right).\!\!\!\!
\end{split}
\end{equation}
Take the derivative of \eqref{lag88} with respect to $r_l$ and set it equal to 0, we get
\begin{align}
\frac{-w_ly}{r_l^2}+\gamma_l+\mu=0.
\end{align}
This and KKT conditions imply
\begin{align}
&r_l=\sqrt{\frac{w_ly}{\gamma_l+\mu}},\\
&\gamma_l\geq 0, r_l-b_ly\leq 0,\\
&\gamma_l( r_l-b_ly)=0,\\
&\sum_{i=1}^Mr_i+1=y.
\end{align}
If $\gamma_l=0$, then $r_l=\sqrt{(w_ly)/\mu}$ and $r_l\leq b_ly$; otherwise, if $\gamma_l>0$, then $r_l=b_ly$ and $r_l<\sqrt{(w_ly)/\mu}$. Hence, we have
\begin{equation}
r_l=\min\left\lbrace b_ly,\sqrt{\frac{w_ly}{\mu^{\star}}}\right\rbrace,
\end{equation}
where by \eqref{const22}, $\mu^{\star}$ satisfies 
\begin{align}
\sum_{i=1}^M\min\left\lbrace b_iy,\sqrt{\frac{w_iy}{\mu^{\star}}}\right\rbrace+1=y.
\end{align}
We can observe that $\mu^{\star}$ is a function of $y$. Because of that, we can define $\beta^{\star}(y)=\sqrt{1/(y\mu^{\star})}$, which is a function of $y$ as well. Then, the optimum solution to \eqref{optimization_auxiliary2} can be rewritten as
\begin{equation}\label{solution_cond_1_2_2}
r_l=\min\{b_l,\beta^{\star}(y)\sqrt{w_l}\}y, \forall l,
\end{equation}
where $\beta^{\star}(y)$ satisfies
\begin{equation}\label{solution_cond_1_1_1}
\sum_{i=1}^M\min\{b_i,\beta^{\star}(y)\sqrt{w_i}\}+\frac{1}{y}=1.
\end{equation}
Substituting \eqref{solution_cond_1_2_2} and \eqref{solution_cond_1_1_1} back in Problem \eqref{optimization_auxiliary2}, we get the following optimization problem (the outer layer):
\begin{align}
\begin{split}\label{eqa55}
\min_{y>1}& \sum_{i=1}^M\left[\frac{w_i}{\min\{b_i,\beta^{\star}(y)\sqrt{w_i}\}}+w_i\right]
\end{split}\\
\textbf{s.t.}~&\sum_{i=1}^M\min\{b_i,\beta^{\star}(y)\sqrt{w_i}\}+\frac{1}{y}=1.\label{eqa56}
\end{align}
Problem \eqref{eqa55} serves as a lower bound of Problem \textbf{2}, and hence a lower bound of Problem \textbf{1}. We can observe that the objective function in \eqref{eqa55} is decreasing in $\beta^{\star}(y)$. Moreover, \eqref{eqa56} implies that $\beta^{\star}(y)$ is strictly increasing in $y$ if $\sum_{i=1}^Mb_i\geq 1$. As a result, $y=\infty$ is the optimal solution of Problem \eqref{eqa55}. At the limit, the constraint \eqref{eqa56} converges to \eqref{condition_beta}. Since $\beta^{\star}$ serves as a solution for \eqref{condition_beta}, we can deduce that $\lim_{y\to\infty}\beta^\star (y)=\beta^\star$. Thus, we have the following lower bound:
\begin{equation}\label{lower_bound_final_sumbi_ge1}
\bar{\Delta}^{\text{w-peak}}_{\text{opt}}\ge \bar{\Delta}^{\text{w-peak}}_{\text{opt},2}\geq \sum_{i=1}^M\left[\frac{w_i}{\min\{b_i,\beta^{\star}\sqrt{w_i}\}}+w_i\right].
\end{equation}
This completes the proof. \qed

\section{Proof of Lemma \ref{lemma_feasibility2}}\label{Appendix_C}
Because $1-e^{-x}\leq x$, we can obtain
\begin{align}
\begin{split}
&r_le^{-r_l\frac{t_s}{\mathbb{E}[T]}}+[1-e^{-r_l\frac{t_s}{\mathbb{E}[T]}}]\sum_{i=1}^Mr_i\\&=r_l+[1-e^{-r_l\frac{t_s}{\mathbb{E}[T]}}]\left(\sum_{i=1}^Mr_i-r_l\right)\\&\leq  r_l+r_l\frac{t_s}{\mathbb{E}[T]}\left(\sum_{i=1}^mr_i-r_l\right),
\end{split}
\end{align}
Hence, if $\mathbf{r}$ satisfies the constraint
\begin{align}\label{tighter_constraint_ac}
\frac{r_l+r_l\frac{t_s}{\mathbb{E}[T]}\left(\sum_{i=1}^Mr_i-r_l\right)}{\sum_{i=1}^Mr_i+1}\le b_l,
\end{align}
then $\mathbf{r}$ also satisfies the constraint of Problem \textbf{1} in \eqref{problem1}. Consider the following set of solution indexed by a parameter $c>0$:
\begin{align}
&r_l=cu_l,~\forall l,\label{proposed_sol_ac}\\
&u_l=\frac{b_l}{1-\sum_{i=1}^Mb_i},~\forall l\label{proposed_sol_ac103}
\end{align}
We want to find a $c$ such that the solution in \eqref{proposed_sol_ac} and \eqref{proposed_sol_ac103} is feasible for Problem \textbf{1}. To achieve this, we first substitute the solution \eqref{proposed_sol_ac} and \eqref{proposed_sol_ac103} into the constraint \eqref{tighter_constraint_ac}, and get
\begin{align}\label{lem4_4eq104}
\frac{cu_l+c^2u_l\frac{t_s}{\mathbb{E}[T]}\left(\sum_{i=1}^Mu_i-u_l\right)}{c\sum_{i=1}^Mu_i+1}\le b_l.
\end{align} 
If equality is satisfied in \eqref{lem4_4eq104}, we can
obtain the following quadratic equation for c: 
\begin{align}\label{lem4_4eq105}
c^2\left[u_l\frac{t_s}{\mathbb{E}[T]}\left(\sum_{i=1}^Mu_i\!-\!u_l\right)\right]\!+\!c\left(u_l\!-\!b_l\sum_{i=1}^Mu_i\right)\!-\!b_l=0.
\end{align}
The solution to \eqref{lem4_4eq105} is given by $c_l$ in \eqref{feasible_factor}. Hence, $r_l = c_l u_l$ is feasible for the constraint \eqref{tighter_constraint_ac} for source~$l$. 

As feasibility for one source only is insufficient, we further prove that the solution in \eqref{proposed_sol_ac} and \eqref{proposed_sol_ac103} with $c=\min_lc_l$ is feasible for satisfying the energy constraints of all sources $l = 1,\ldots, M$. To that end, let us consider the monotonicity of the LHS of \eqref{lem4_4eq104}. By taking the  derivative with respect to $c$, we get
\begin{align}
\frac{u_l\frac{t_s}{\mathbb{E}[T]}\left(\sum_{i=1}^Mu_i-u_l\right)\left(c^2\sum_{i=1}^Mu_i+2c\right)+u_l}{(c\sum_{i=1}^Mu_i+1)^2}>0.
\end{align}
Hence, 
\begin{align}\label{lem4_4eq106}
r_l = \left(\min_{l} c_l\right) u_l, ~\forall l,
\end{align}
is feasible for the energy constraints of all sources $l = 1,\ldots, M$. After some manipulations, the solution in \eqref{proposed_sol_ac103} and \eqref{lem4_4eq106} are equivalently expressed as \eqref{r_f_b_gr_1} and  \eqref{condition_x*_beta*} - \eqref{feas_fact_eq27}. This completes the proof. \qed

\section{Proof of Lemma \ref{lemma_lower_bound_sum_i_b_i_le_1}}\label{Appendix_D}
By replacing $e^{-r_l(t_s/\mathbb{E}[T])}/r_l$ by $e^{-\sum_{i=1}^Mr_i(t_s/\mathbb{E}[T])}/[$ $b_l(\sum_{i=1}^Mr_i+1)]$  and $e^{\sum_{i=1}^Mr_i(t_s/\mathbb{E}[T])}$ by 1 in \eqref{problem2} of Problem \textbf{2}, we obtain the following optimization problem:
\begin{align}
\begin{split}\label{lower_bound_case_bf_le_1_1}
\min_{r_l>0}& \sum_{l=1}^M\frac{w_le^{-\sum_{i=1}^Mr_i\frac{t_s}{\mathbb{E}[T]}}}{b_i}+\sum_{l=1}^Mw_l
\\
\textbf{s.t.}~&r_l\leq b_l\left(\sum_{i=1}^Mr_i+1\right), \forall l.
\end{split}
\end{align}
Since  $r_l\leq b_l(\sum_{i=1}^Mr_i$ $+1)$, we have
\begin{align}
 \frac{e^{-r_l\frac{t_s}{\mathbb{E}[T]}}}{r_l}\geq \frac{e^{-\sum_{i=1}^Mr_i\frac{t_s}{\mathbb{E}[T]}}}{b_l\left(\sum_{i=1}^Mr_i+1\right)}.
 \end{align}
 Moreover, we have  $e^{\sum_{i=1}^Mr_i(t_s/\mathbb{E}[T])}\geq 1$. Thus, Problem \eqref{lower_bound_case_bf_le_1_1} serves as a lower bound of Problem \textbf{2}, and hence a lower bound of Problem \textbf{1} as well. By removing the constant term $\sum_{l=1}^Mw_l$ in the objective function of Problem  \eqref{lower_bound_case_bf_le_1_1} and then taking the logarithm, Problem \eqref{lower_bound_case_bf_le_1_1} is reformulated as 
 \begin{align}
\begin{split}\label{problem64}
\min_{r_i>0}& \log\left(\sum_{i=1}^M\frac{w_i}{b_i}\right)-\sum_{i=1}^Mr_i\frac{t_s}{\mathbb{E}[T]}
\\
\textbf{s.t.}~&r_l\leq b_l\left(\sum_{i=1}^Mr_i+1\right), \forall l.
\end{split}
\end{align}
Obviously, Problem \eqref{problem64} is a convex optimization problem and satisfies Slater's  conditions. Thus,  the KKT conditions are are necessary and sufficient for optimality. Let $\mathbf{\tau}=(\tau_1,\ldots,\tau_M)$ be the Lagrange multipliers associated with the  constraints of Problem  \eqref{problem64}. Then, the Lagrangian of Problem \eqref{problem64} is given by 
\begin{equation}\label{lag110}
\begin{split}
L(\mathbf{r}, \mathbf{\tau})=&\log\left(\sum_{i=1}^M\frac{w_i}{b_i}\right)-\left(\sum_{i=1}^Mr_i\frac{t_s}{\mathbb{E}[T]}\right)\\&+\sum_{i=1}^M\tau_i\left[r_i-b_i\left(\sum_{i=1}^Mr_i+1\right)\right].
\end{split}
\end{equation}
Take the derivative of \eqref{lag110} with respect to $r_l$ and set it equal to 0, we get
\begin{align}
\frac{-t_s}{\mathbb{E}[T]}+\tau_l(1-b_l)-\sum_{i\neq l}\tau_ib_i=0.
\end{align}
This and KKT conditions imply
\begin{align}
&\tau_l=\frac{t_s}{\mathbb{E}[T](1-b_l)}+\frac{\sum_{i\neq l}\tau_ib_i}{1-b_l},\label{eqa67}\\&
\tau_l\geq 0, r_l-b_l\left(\sum_{i=1}^Mr_i+1\right)\leq 0,\label{eqa68}\\&
\tau_l\left[r_l-b_l\left(\sum_{i=1}^Mr_i+1\right)\right]=0.\label{eqa69}
\end{align}
Since $\sum_{i=1}^Mb_i<1$, \eqref{eqa67} implies that $\tau_l>0$ for all $l$. This and \eqref{eqa69} result in 
\begin{equation}\label{eqa70}
r_l=b_l\left(\sum_{i=1}^Mr_i+1\right), \forall l.
\end{equation}
Because $\sum_{i=1}^Mb_i<1$, \eqref{eqa70} has a unique solution, which is given by 
\begin{equation}\label{eqa71}
r_l=\frac{b_l}{1-\sum_{i=1}^Mb_i}, \forall l.
\end{equation}
Hence, the solution to \eqref{lower_bound_case_bf_le_1_1} and \eqref{problem64} is given by \eqref{eqa71}. Substitute \eqref{eqa71} into \eqref{lower_bound_case_bf_le_1_1}, we get the following lower bound:
\begin{equation}\label{lower_bound_72}
\bar{\Delta}^{\text{w-peak}}_{\text{opt}}\ge \bar{\Delta}^{\text{w-peak}}_{\text{opt},2}\geq \sum_{l=1}^M\frac{w_le^{\frac{-\sum_{i=1}^Mb_i}{1-\sum_{i=1}^Mb_i}\frac{t_s}{\mathbb{E}[T]}}}{b_l}+\sum_{l=1}^Mw_l.
\end{equation}
This completes the proof. \qed

\section{Proof of Corollary \ref{corollary_centeralized_scheduler}}\label{Appendix_E}
We start by solving Problem \eqref{problem_centr} for optimal $\mathbf{a}$.  Problem \eqref{problem_centr} is a convex optimization problem and satisfies Slater's conditions. Thus, the KKT conditions are necessary and sufficient for optimality. Let $\mathbf{\lambda}=(\lambda_1,\ldots,\lambda_M)$ and $\nu$ be the Lagrange multipliers associated with the constraints \eqref{eq105} and \eqref{eq106}, respectively. Then, the Lagrangian of Problem \eqref{problem_centr} is given by
\begin{align}\label{lag119}
\begin{split}
L(\mathbf{a},\mathbf{\lambda},\nu)=&\sum_{i=1}^M\left[\frac{w_i}{a_i}+w_i\right]\\&+\sum_{i=1}^M\lambda_i(a_i-b_i)+\nu\left(\sum_{i=1}^Ma_i-1\right).
\end{split}
\end{align}
Take the derivative of \eqref{lag119} with respect to $a_l$ and set it equal to 0, we get
\begin{align}
\frac{-w_l}{a_l^2}+\lambda_l+\nu=0.
\end{align}
This and KKT conditions imply
\begin{align}
&a_l=\sqrt{\frac{w_l}{\lambda_l+\nu}},\\&
\lambda_l\ge 0,~a_l-b_l\le 0,\\&
\lambda_l(a_l-b_l)=0,\\&
\nu\ge 0,~\sum_{i=1}^Ma_i-1\le 0,\\&
\nu\left(\sum_{i=1}^Ma_i-1\right)=0.
\end{align}
If $\lambda_l=0$, then we have $a_l=\sqrt{w_l/\nu}$ and $a_l\le b_l$. This implies that $\nu>0$ and hence $\sum_{i=1}^Ma_i=1$, which holds when $\sum_{i=1}^Mb_i\ge 1$.

If $\lambda_l>0$, then we have $a_l=b_l$ and $a_l\le \sqrt{w_l/\nu}$. In this case, we either have $\nu>0$, which implies $\sum_{i=1}^Ma_i=1$ and this holds when $\sum_{i=1}^Mb_i\ge 1$; or $\nu=0$, which implies $\sum_{i=1}^Ma_i\leq 1$ and this holds when $\sum_{i=1}^Mb_i\leq 1$.

From the above argument, the solution can be driven according to the following two cases:

Case 1 (\textbf{Energy-adequate regime ($\sum_{i=1}^Mb_i\ge 1$)}): In this case, the optimal solution is given by
\begin{align}
a_l^{\star}=\min\left\lbrace b_l,\sqrt{\frac{w_l}{\nu^{\star}}}\right\rbrace,~\forall l,
\end{align}
where we must have $\nu^\star>0$, which implies $\sum_{i=1}^Ma^\star_i=1$. Hence, $\nu^\star$ satisfies
\begin{align}\label{eq126}
\sum_{i=1}^M\min\left\lbrace b_i,\sqrt{\frac{w_i}{\nu^{\star}}}\right\rbrace=1.
\end{align}
By comparing \eqref{eq126} with \eqref{condition_beta}, we can deduce that $\sqrt{1/\nu^{\star}}=\beta^{\star}$, where $\beta^{\star}$ satisfies
\begin{align}\label{eq111}
\sum_{i=1}^M\min\{b_i,\beta^{\star}\sqrt{w_i}\}=1.
\end{align}
Since $\sum_{i=1}^Mb_i\ge 1$, \eqref{eq111} has a solution for $\beta^{\star}$ as shown in Lemma \ref{lemma_solution_19}. Hence, the solution to Problem  \eqref{problem_centr} can be rewritten as
\begin{align}\label{sol127}
a^\star_l=\min\{b_l,\beta^{\star}\sqrt{w_l}\},~\forall l.
\end{align}
Substituting \eqref{sol127} into \eqref{problem_centr}, we obtain
\begin{align}
\bar{\Delta}_{\text{opt-s}}^{\text{w-peak}}=\sum_{i=1}^M\left[\frac{w_i}{\min\{b_i,\beta^{\star}\sqrt{w_i}\}}+w_i\right],
\end{align}
which is equal to the asymptotic optimal objective value of Problem \textbf{1} in energy-adequate regime in \eqref{asymptotic_value_final}. 

Case 2 (\textbf{Energy-scarce regime ($\sum_{i=1}^Mb_i< 1$)}): In this case, the optimal solution is
\begin{align}
a^\star_l=b_l,~\forall l.
\end{align}
 Substituting by this into \eqref{problem_centr}, we obtain
 \begin{align}
\bar{\Delta}_{\text{opt-s}}^{\text{w-peak}}=\sum_{i=1}^M\left[\frac{w_i}{b_i}+w_i\right],
\end{align}
which is equal to the asymptotic optimal objective  value of Problem \textbf{1} in energy-scarce regime in \eqref{asymptotic_value_final_2}. This completes the proof. \qed
\fi

\section{Proof of Theorem \ref{th:learn_regret}}\label{sec:learn_proof}
\subsection{Notation and Background on General State-Space Markov Processes}
While analyzing learning algorithm, we will have to work with Markov processes on general state-space~\cite{nummelin2004general,meyn2012markov}. In this section we provide a brief account of such processes.

\emph{Notation}: For a set of r.v. s $\cX$, we let $\cF(\cX)$ denote the smallest sigma-algebra with respect to which each r.v. in $\cX$ is measurable. For a set $\cX$, we let $\cX^{c}$ denote its complement. For an event $\cX$, we let $\id(\cX)$ denote its indicator random variable. For a set $\cX$, we let $\cB(\cX)$ denote the sigma-algebra of Borel sets of $\cX$.\par
We begin by showing that $\bs(n)$ can be taken to be the system state \slash sufficient statistics~\cite{striebel1965sufficient} in order to describe the sampled process. In what follows, we let $\cS := \bR_{+}\times \{0,1\}$. Denote by $\te :=  \int_{y=0}^{T_{\max}} y~f(y) dy$ the mean transmission time of a packet of any source, i.e., we use the abbreviation $\te = \bE\left[T\right]$. The proof of the following result is omitted for brevity.
\begin{lemma}\label{lemma:suff_stat}
Consider the system in which $M$ sources share a channel, and utilize the sleep period parameters as $\br(n)\equiv \br$ in order to modulate the sleep durations of sources. We then have that
\begin{align}\label{def:kernel_exist}
\bP\left(  \bs(n+1) \in A \big|  \cF_t \right) = \cK(\bs(n),\br,A;f),
\end{align}
where $\cF_t$ denotes the sigma-algebra generated by all the random variables until the $n$-th discrete sampling instant. 
The function $\cK$ is the kernel~\cite{nummelin2004general} associated with the controlled transition probabilities of the process $\bs(n)$, 
\begin{align}\label{def:kernel}
\cK: \cS^{M} \times \bR^{M}_+\times \cB(\bR^{M}) \mapsto [0,1].
\end{align}
Thus, $\cK(\bs,\br,A;f)$ is the probability with which the state at time $n+1$ belongs to the set $A$, given that the state at time $n$ is equal to $\bs$, and the vector comprising of sleep period parameters at time $n$ is equal to $\br$. Note that the kernel is parametrized by the density function of transmission time $f$.
\end{lemma}
We begin by stating some definitions associated with Markov Chains on \emph{General State-Spaces}. Though these can be found in standard textbooks on General State-Space Markov Chains such as~\cite{meyn2012markov,nummelin2004general}, we include them here in order to make the paper self-contained.  

Let us now fix the controls at $\br(n)\equiv \br$, and consider the resulting discrete-time Markov chain $\bs(n)\in \cS^{M}$. If $A$ is a Borel set, we let $P^{n}(\bx,A)$ denote the probability of the event $\bs(n)\in A$, given that $\bs(0) = \bx$.
\begin{definition}(Small Set)\label{def:small}
A set $C\in  \cB(\cS^{M})$ is called $\nu_m$ small if for all $\bx\in C$ we have that 
\begin{align*}
P^{m}(\bx,A) \ge \nu_m(A),~\forall  A\in \cB(\cS^M),
\end{align*}
for some non-trivial measure $\nu_m(\cdot)$ and some $m\in \bN$. 
\end{definition}

\begin{definition}(Petite Set)
Let $\mathbf{q} = \{q_n\}_{n\in \bN}$ be a probability distribution on $\bN$. A set $C\in  \cB(\cS^{M})$ and a non-trivial sub-probability measure $\nu_q(\cdot)$ are called petite if we have that
\begin{align*}
\sum_{n\in \bN} q_n P^{n}(\bx,A)\ge \nu_{q}(A), \forall A\in \cB(\cS^{M}),\forall x\in C. 
\end{align*}
\end{definition}

\begin{definition}(Strong Aperiodicity)
If there exists a $\nu_1$ small set $C$ such that we have $\nu_1(C)>0$, then the chain $\bs(n)$ is strongly aperiodic.  
\end{definition}
\subsection{Preliminary Results}\label{sec:jmls}
We now show that in order to minimize the expected value of $C(H)$, it suffices to design controllers that ``work directly'' with the sampled system. Thus, the quantity $\bs(n)$ as described in~\eqref{def:st} serves as a sufficient statistic for the purpose of optimizing the expectation of cumulative peak age~\cite{striebel1965sufficient}. We also show that this objective can be posed as a constrained Markov decision process~\cite{puterman2005markov}.
\begin{lemma}\label{lemma:equiv_problem}
Let $\bs(n), n=1,2,\ldots,$ be the sampled controlled Markov process. There exists a function $g: \cS^{M} \mapsto \bR$ so that $\bE\left[C(H)\right]$ in \eqref{def:peak_cost} is given by $\bE \left[\sum\limits_{n=1}^{H} g(\bs(n))\right]$.  
\end{lemma}
\begin{proof}
Consider the cumulative peak-age cost~\eqref{def:peak_cost} in which the $l$-th source incurs a penalty of $\Delta^{\text{peak}}_{l,i}$ upon delivery of the $i$-th packet. Let this delivery occur at the end of the $n$-th discrete time-slot (note that this time $n$ is random). Let us denote by $a^{\text{peak}}_{l}(n)$ the peak age of source $l$ during the (continuous) time interval (in the non-discretized system) corresponding to the discrete time slots $n-1$ and $n$. We could (instead of charging a penalty of $\Delta^{\text{peak}}_{l,i}$ units at the end of $n$-th slot) charge the quantity $\bE\left\{a^{peak}(n)| \bs(n-1),\br(n-1)\right\}$ at the discrete time instant $n-1$. For sources $k\neq l$ that are not transmitting between $n-1$ and $n$, and have $m_k(n-1)=0$, we let $g(\bs(n-1))=0$. It then follows from the law of the iterated expectations~\cite{billingsley2008probability} that the expected cost of the system under this modified cost function remains the same as that of the original system. This completes the proof. 
\end{proof}

\textbf{Ergodicity of $\bs(n)$}: We now derive a few useful results about the Markov process $\bs(n)$.
\begin{lemma}\label{lemma:petite}
Consider the multi-source wireless network operating under the controls $\br(n)\equiv \br$, and assume that the sensing time $t_s$ is sufficiently small, i.e., it satisfies $t_s<1$. Consider the associated process $\bs(n),~n=1,2,\ldots$ We then have the following:\\
\begin{enumerate}
\item Define
\begin{align*}
e_i := (M-i)\epsilon, \mbox{ and } m_i = 0, \forall i \in [N], 
\end{align*}
where $\epsilon>0$ is chosen to be sufficiently small. Consider the set 
\begin{align}\label{def:set_C}
C:= \otimes_{i=1}^{N} \left[ [ (M-i), (M-i) + e_i ] \times \left\{m_i\right\} \right].
\end{align}
The set $C$ is small for the process $\bs(n)$.
\item For the process $\bs(n)$, each compact set is petite.
\item The process $\bs(n)$ is strongly aperiodic.
\end{enumerate}
\end{lemma}
\begin{proof}
\par 
\begin{enumerate}
\item Consider $\bs(0)\in C$. It follows that at time $n=0$, all the sources are sleeping. Consider the following set denoted $C^{\prime}$: Sources $1$ and $2$ wake up within $t_s$ time duration of each other, while the other sources wake up much later than these two. Consequently, there is a collision between Source $1$ and Source $2$, and hence at time $n=1$ these two sources enter into sleep mode, so that at time $n=1$ all the sources are asleep. Also assume that the cumulative time elapsed for this event to occur is approximately equal to $t_s + \delta$, where $\delta>0$ is a sufficiently small parameter. The probability of the event $\left\{\bs(1)\in C^{\prime}\right\}$ can be lower bounded as follows 
\begin{align*}
\bP(\bs(1)\in C^{\prime} )\ge& \left( r_1  \int_{\delta}^{\delta+\epsilon} \exp(-r_1 x) dx\right)\\&\times t_s  r_2  \exp(-r_2 (\delta+\epsilon))\\&\times \left[\Pi_{i=3}^{N}  \int_{\delta+\epsilon}^{\infty} r_i \exp(-r_i x) dx\right].
\end{align*}
Since the above lower-bound on the probability of ``reaching $C^{\prime}$'' is true for all $\bs(0)\in C$, it follows from Definition~\ref{def:small} that the set $C$ is small.
\item Consider the process $\bs(n)$ starting in state $\bs(0)$, and let the age vector $a(0)$ belong to a compact set, so that $\bs(0)$ also belongs to a compact set.
We will derive a lower bound on the probability of the event $\left\{\bs(N)\in C\right\}$, where $C$ is as in~\eqref{def:set_C}. This will prove (ii) since we have already shown in (i) that the set $C$ is small. Consider the following sample path: at each time $i\in [1,M]$, we have that source $i$ successfully transmits a packet, and moreover the age of the packet received is approximately equal to $1$. We will derive a lower-bound on the probability of this event. In the following discussion we use $b>0$ and $\eta \in \left( 0, 1-t_s-b\right)$, where $\eta$ denotes the time when Source $1$ wakes up. Since the counter of the $i$-th source has a probability density equal to $r_i e^{-r_i x}$, the probability that during the $i$-th slot source $i$ gets channel access is lower bounded by $(1- \exp(-\eta r_i))\Pi_{j\neq i} e^{-r_j}  $; while the probability that the age of its delivered packet is around $1$, given that it wakes up at $\eta$, is lower bounded by $\int_{0}^{b} f(y)dy$. Thus, the probability of this sample path is lower bounded by
\begin{align*}
\Pi_{i=1}^{N} (1- \exp(-\eta r_i))\Pi_{j\neq i} e^{-r_j}   \int_{0}^{b} f(y)dy.
\end{align*}
This concludes the proof since along this sample path we have that $\bs(N)\in C$.
\item It follows from the discussion on page 121 of~\cite{meyn2012markov} that in order to prove the claim it suffices to show that the volume of the set $C\cap C^{\prime}$ is greater than $0$. However, this condition holds true if the parameter $\delta$ in (i) above has been chosen so as to satisfy $t_s + \delta< \epsilon$. 
\end{enumerate}
\end{proof}

We now show that the process $\bs(n)$ has a certain ``mixing property''.
For a measure $\mu$ and a function $f$, we define $\|\mu\|_f :=  \int f(x) d\mu(x) $.
\begin{lemma}[Geometric Ergodicity]\label{lemma:geometric}
Consider the controlled Markov process $\bs(n),~n=1,2,\ldots,$ associated with the network in which the controller utilizes $\br(n)\equiv \br $. The process $\bs(n)$ has an invariant probability measure, which we denote as $\pi(\infty,\br)$. Moreover, 
\begin{align}\label{ineq:geometric}
&\int \left(\|\by\|_1 +1\right)  d\left( P^{n}(\bx,\cdot) - \pi(\infty,\br)\right)(\by) \notag \\
&\le  R\left( \|\bs(0)\|_1 +1 \right) \rho^{n}, n\in \bN,
\end{align}
where $R>0$, and $\rho<1$.
\end{lemma}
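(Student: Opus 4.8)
The plan is to establish geometric ergodicity through the standard Foster--Lyapunov machinery for $\psi$-irreducible aperiodic Markov chains on general state spaces \cite{meyn2012markov}. That theory reduces the claimed bound to three structural ingredients together with a geometric drift inequality, and Lemma~\ref{lemma:petite} already supplies the structural part: the existence of the small set $C$ yields $\psi$-irreducibility, part (ii) gives that every compact set is petite, and part (iii) gives strong aperiodicity. Hence the only remaining ingredient is a geometric drift condition for the Lyapunov function $V(x) := f(x) = \|x\|_1 + 1$.

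The heart of the proof is therefore to verify a drift inequality of the form $PV(x) \le \lambda V(x) + b$ for some $\lambda \in (0,1)$ and $b<\infty$, uniformly in $x$, where $P$ denotes the one-step transition operator of $x(t)$ under $u(t)\equiv \br$. First I would compute $\bE[V(x(t+1))\mid x(t)=x]$ by conditioning on the outcome of the sampled event. Over a single step real time advances by an increment $\tau$ whose conditional mean is uniformly bounded: the idle portion is exponential with mean $\bE[T]/\sum_i r_i$, and the transmission/collision portion has mean $\bE[T]<\infty$. Every coordinate age that is not reset grows by this increment, while a coordinate that is successfully delivered is reset to a value bounded by $B$ (the transmission time lies in $[0,B]$ by the standing boundedness assumption). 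Using that each user $i$ has a strictly positive probability of delivering, guaranteed by $r_i>0$ and the explicit access probability \eqref{access_prob_in_agiven_cycle}, one obtains a uniform contraction of the expected sum of ages by a factor strictly below one, plus a bounded constant, yielding $PV(x)\le \lambda V(x)+b$ with $\lambda<1$.

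Having a global drift, I would convert it into the Foster--Lyapunov form $PV(x)\le \lambda' V(x) + b'\,\id_C(x)$ by fixing any $\lambda<\lambda'<1$ and taking $C=\{x: V(x)\le b/(\lambda'-\lambda)\}$. Because $V$ controls the sum of ages and the modes range over the finite set $\cM$, this sublevel set is compact, hence petite by Lemma~\ref{lemma:petite}(ii). The hypotheses of the $V$-uniform (geometric) ergodicity theorem \cite{meyn2012markov} are now in force, so $x(t)$ admits a unique invariant probability measure $\pi(\infty,\br)$ and there exist $R<\infty$ and $\rho\in(0,1)$ with $\|P^n(x,\cdot)-\pi(\infty,\br)\|_V\le R\,V(x)\,\rho^n$. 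Substituting $V(x(0))=\|x(0)\|_1+1$ gives \eqref{ineq:geometric}.

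The step I expect to be the main obstacle is the drift computation, and specifically the claim that $\lambda<1$. In one sampled step only the user that completes a successful transmission resets its age, so a single-step estimate reduces only one coordinate while all others grow, and a naive one-step bound need not contract the full sum $\|x\|_1$. The clean fix is to establish the drift over a block of $m$ steps (for instance $m=N$), iterating the single-step estimate and using the positivity of all rates $r_i$ to show that, starting from any state, every coordinate is reset with probability bounded below within the block; a standard argument then upgrades the $m$-step geometric drift to the one-step form required above. Verifying the uniform-in-$x$ lower bound on these per-coordinate reset probabilities, in the spirit of the petiteness estimates in the proof of Lemma~\ref{lemma:petite}, is the technical crux.
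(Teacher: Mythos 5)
Your proposal is sound and rests on the same foundation as the paper's proof: both invoke the Meyn--Tweedie geometric ergodicity theorem with the Lyapunov function $V(x)=\|x\|_1+1$, lean on Lemma~\ref{lemma:petite} for the small/petite-set and aperiodicity structure, and reduce everything to a Foster--Lyapunov drift inequality $\bE\left(\|x(t+1)\|_1\,\middle|\,\cF_t\right)\le\lambda\|x(t)\|_1+L$ with $\lambda<1$. Where you diverge is in how the drift is verified. You (correctly) observe that a naive one-step estimate resets only one coordinate and so need not contract the sum, and you propose an $m$-step block drift ($m=N$) in which every coordinate is reset with probability bounded below, followed by the standard upgrade back to a one-step drift. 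The paper instead closes the one-step argument directly with a trick you did not find: it conditions on the event that the user with the \emph{largest} age gains channel access within one normalized time unit, an event of probability at least $\min_{i}(1-e^{-r_i})\prod_{j\neq i}e^{-r_j}$; since $\max_i a_i(t)\ge \|x(t)\|_1/N$, resetting that single coordinate already removes a fraction $1/N$ of the total, yielding $\lambda = 1 - p/N$ for a one-step contraction, with $L$ coming from the uniformly bounded expected cycle duration on the complementary event. The paper's route is shorter and gives an explicit (if rough) $\lambda$ and $L$; your block argument is more robust -- it does not depend on the coincidence that the reset coordinate can be taken to be the maximal one -- but costs you the skeleton-to-one-step conversion and a per-coordinate reachability estimate that essentially re-proves part (ii) of Lemma~\ref{lemma:petite}. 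Either way the conclusion \eqref{ineq:geometric} follows, and your identification of the drift verification as the technical crux is exactly right.
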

\begin{proof}
Since we have shown in Lemma~\ref{lemma:petite} that $\bs(n)$ is strongly aperiodic, it follows from Theorem 6.3 of~\cite{meyn2012markov} that in order to prove the claim it suffices to show that the following holds true when $\|\bs(n+1)\|_1 $ is sufficiently large
\begin{align}\label{eq:cond_geometric}
\bE\left( \|\bs(n+1)\|_1 | \cF_{n} \right)  \le     \lambda \|\bs(n)\|_1 + L,
\end{align} 
where $\lambda<1$. Note that each source gets to transmit with a probability at least $\min_l \alpha_{l}$, and also the expected value of the inter-sampling time is upper-bounded by $\max\left\{\bE [T],\frac{\bE[T]}{\sum_{i=1}^Mr_i}+t_s\right\}$. It then follows that~\eqref{eq:cond_geometric} holds true with $\lambda$ set equal to $\min_l \alpha_{l}$, and $L$ equal to $\max\left\{\bE [T],\frac{\bE[T]}{\sum_{i=1}^Mr_i}+t_s\right\}$.
\end{proof}

\begin{lemma}(Differential Cost Function)\label{lemma:diff_co}
Consider the process $\bs(n),~n=1,2,\ldots,$ that describes the evolution of the network in which the controller utilizes $\br(n)\equiv \br$. Then, there exists a function $V: \cS^{M}\mapsto \bR $ that satisfies
\begin{align}\label{eq:diff_cost}
V(\bx) + \int  g(\bx) d\pi(\infty,\br) = g(\bx) + \int  \cK(\bx,\br,y;f) V(y)dy,
\end{align}
where $\cK$ is the transition kernel as described in Lemma~\ref{lemma:suff_stat}, the function $g$ is the one-step cost function as in Lemma~\eqref{lemma:equiv_problem}. Moreover, the function $V$ satisfies the following,
\begin{align}
V(\bx) \le \frac{R}{1-\rho} \left( \|\Delta(0)\|_1 +1 \right),
\end{align}
where the constant $R$ is as in Lemma~\ref{lemma:geometric}. 
\end{lemma}
\begin{proof}
We have shown in Lemma~\ref{lemma:geometric} that the process $\bs(n)$ is geometrically ergodic. Hence, it follows from Theorem 7.5.10 of~\cite{hernandez2012further} that there exists a function $V(\cdot)$ that satisfies~\eqref{eq:diff_cost}, and moreover it is given as follows, 
\begin{align*}
V(\bx) = \sum_{n=1}^{\infty} \left[\bE_{x}\left(g(\bx(n))\right) -  \int_{\cS^{M}}   g(\by)d \pi(\infty,\br)(\by)\right], x\in \cS.
\end{align*}
Substituting the geometric bound~\eqref{ineq:geometric} into the above, we obtain the following
\begin{align*}
V(\bx) &= \sum_{n=1}^{\infty} \left[\bE_{\bx}~ g(\bx(n)) -  \int_{\cS^{M}}   g(\by)d \pi(\infty,\br)(\by)\right]\\
&\le \sum_{n=1}^{\infty} \Big| \bE_{\bx}~ g(\bx(n)) -  \int_{\cS^{M}} g(\by)d \pi(\infty,\br)(\by)\Big|\\
&\le R\left( \|\bx(0)\|_1 +1 \right) \sum_{n=1}^{\infty} \rho^{n}\\
&= \frac{R\left( \|\bx(0)\|_1 +1 \right) }{1-\rho},
\end{align*}
where $\rho<1$. 
\end{proof}

\begin{lemma}[Smoothness properties of the optimal average cost]\label{lemma:convergence}
The optimal sleep period parameters $\br\ust_{\te}$ and average cost $\bar{\Delta}^{w-peak}$ satisfy the following:
\begin{enumerate}
\item We have that the function $\br\ust_{\te}: \Theta \mapsto \bR^{M}_{+}$ that maps the mean transmission time $\te$ to the optimal sleep period parameter, is a continuous function of $\theta$. Similarly, the average peak age is a continuous function of $\br$, i.e.,
\begin{align*}
\lim_{\br\to \br\ust_{\te}}\lim_{H\to\infty}\frac{1}{H} \sum_{n=1}^{H} \bE_{\br} \left[g(\bs(n))\right]  \\\to \lim_{H\to\infty}\frac{1}{H} \sum_{n=1}^{H} \bE_{\br\ust_{\te}} \left[g(\bs(n))\right], 
\end{align*}
where the sub-script $\br$ in the expectation $\bE_{\br}$ above refers to the fact that the averaging is performed w.r.t. the measure induced by the policy that uses sleep rates equal to $\br$.
\item The cumulative peak-age is locally Lipschitz continuous function of $\br$. Thus,
\begin{align*}
|\bar{\Delta}^{w-peak}(\br\ust_{\te};\te) - \bar{\Delta}^{w-peak}(\br;\te) | \le L_{1}  \|\br\ust_{\te} - \br\|,
\end{align*}
whenever $\|\br\ust_{\te} - \br \|$ is sufficiently small, and where the Lipschitz constant at  sleep period parameter $\br$ is given by 
\begin{align*}
L_{1} := \max_{i\in [M]} \frac{\partial \bar{\Delta}^{w-peak}}{\partial r_i}(\br).
\end{align*}
Similarly, the optimal sleep period parameter is a locally Lipschitz function of $\te$, so that we have,
\begin{align*}
\|\br\ust_{\te_1} - \br\ust_{\te_2} \|\le L_{2}  |\te_1 - \te_2|, L_2>0,
\end{align*}
whenever $ |\te_1 - \te_2|$ is sufficiently small. 
\end{enumerate}
In summary, there exists a $\delta>0$ such that whenever $|\te_1-\te_2|\le \delta$, then 
\begin{align*}
|\bar{\Delta}^{w-peak}(\br\ust_{\te_1};\te) - \bar{\Delta}^{w-peak}(\br\ust_{\te_2};\te) | \le L |\te_1 -\te_2|.
\end{align*}
\end{lemma}
\begin{proof}
Continuity of the functions under discussion is immediate from the relations \eqref{r_f_b_gr_1}, \eqref{x*}, \eqref{condition_beta}, \eqref{condition_x*_beta*}, \eqref{feasible_factor}, \eqref{feas_fact_eq27}. 
To prove the statement about Lipschitz continuity, it suffices to show that the average peak age is a Lipschitz continuous function of $\br$, and the optimal rate $\br\ust_{\te}$ is Lipschitz continuous function of $\te$. To prove this, it suffices to show that the average peak age is a continuously differentiable function of $\br$, and also $\br\ust_{\te}$ is a continuously differentiable function of $\te$ (see~\cite{hunter} for more details). The continuously differentiable property is evident from the relations~\eqref{t_avg_peak_age},~\eqref{r_f_b_gr_1}-\eqref{condition_beta} and~\eqref{condition_x*_beta*}-\eqref{feas_fact_eq27}. This completes the proof.
\end{proof}

\textbf{Bounds on the Estimation Error}: 
We now derive some concentration results for the estimate $\hat{\theta}(n)$ around the true value $\te\ust$. Let $\cC(n)$ be the confidence interval associated with the estimate $\hat{\theta}(n)$, i.e., 
\begin{align}\label{def:ucb_ci} 
\cC(n) := \left\{ \theta :  |\theta -\hat{\theta}(n) | \le  \xi(n), \theta >0 \right\},
\end{align} 
where
\begin{align*}
\xi(n) : =  T_{\max}\sqrt{\frac{2 \log\left(n^{\gamma}\right)}{N(n)}}, 1\le n \le H,
\end{align*}
$\gamma\ge 4$ is a constant, $N(n)$ is the total number of packet deliveries until $n$, and $T_{\max}$ is the maximum possible transmission time. We begin by showing that with a high probability, our confidence balls are true at all the times.
\begin{lemma}\label{lemma:g1}
Define 
\begin{align*}
\cG_1(n) := \left\{ \omega: \theta\ust \in \cC(n)\right\},
\end{align*}
where $\cC(n)$ is as in~\eqref{def:ucb_ci}, and $\te\ust$ is the vector consisting of true parameter values. We then have that
\begin{align*}
\bP\left( \cG^{c}_{1}(n)  \right) \le \frac{1}{n^{\gamma -1}}.
\end{align*}
\end{lemma}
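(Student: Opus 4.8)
The plan is to recognize $\hat{\theta}(t)$ as an empirical mean of i.i.d.\ bounded samples and to control the random number of samples $N(t)$ by a peeling (union-over-$n$) argument combined with Hoeffding's inequality. First I would fix the sampling structure. Let $\{X_k\}_{k\ge 1}$ be the transmission durations of the \emph{successful} transmissions listed in order of occurrence; by construction $\sum_{\ell=1}^{t} S(\ell)c(\ell)=\sum_{k=1}^{N(t)} X_k$, so $\hat{\theta}(t)=\frac{1}{N(t)}\sum_{k=1}^{N(t)} X_k$ whenever $N(t)\ge 1$. The key structural claim is that $\{X_k\}$ is i.i.d.\ with mean $\te$ and support $[0,B]$. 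This follows because the per-event durations $T_j$ are i.i.d.\ with mean $\te$ and, by the boundedness assumption, lie in $[0,B]$, while the success/collision label of event $j$ is determined by the memoryless exponential wake-up timing and is hence independent of $T_j$. Conditioning on the entire label sequence leaves the durations i.i.d., so the subsequence extracted by the (timing-determined) success labels is itself i.i.d.\ with the same law, and in particular mean $\te$.

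Next I would perform the peeling step, which is what lets us avoid conditioning on the random index $N(t)$ inside a probability. Since $c(\ell)\in\{0,1\}$, we have the deterministic bound $N(t)=\sum_{\ell=1}^{t}c(\ell)\le t$. On the event $\{N(t)=n\}$ the estimate equals $\frac1n\sum_{k=1}^{n}X_k$ and $\beta(t)=B\sqrt{2\alpha\log t/n}$ is a deterministic quantity, so the bad event is contained in a union of fixed-$n$ deviation events:
\begin{align*}
\cG_1^c(t)=\left\{|\hat{\theta}(t)-\te|>\beta(t)\right\}\subseteq\bigcup_{n=1}^{t}\left\{\left|\frac{1}{n}\sum_{k=1}^{n}X_k-\te\right|>B\sqrt{\frac{2\alpha\log t}{n}}\right\}.
\end{align*}
(The degenerate case $N(t)=0$ gives $\hat{\theta}(t)=0$, so $|\hat{\theta}(t)-\te|=\te\le B<\beta(t)$ for $t\ge 2$, and for $t=1$ the target bound $t^{-(\alpha-1)}=1$ is trivial; either way this case contributes nothing.)

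Finally I would apply Hoeffding to each term and sum. For fixed $n$, since $X_1,\dots,X_n$ are i.i.d.\ in $[0,B]$ with mean $\te$, Hoeffding's inequality yields
\begin{align*}
\bP\!\left(\left|\frac{1}{n}\sum_{k=1}^{n}X_k-\te\right|>\epsilon\right)\le 2\exp\!\left(-\frac{2n\epsilon^2}{B^2}\right);
\end{align*}
substituting $\epsilon=B\sqrt{2\alpha\log t/n}$ collapses the exponent to $2\exp(-4\alpha\log t)=2t^{-4\alpha}$, which is independent of $n$. A union bound over the $n\le t$ terms then gives
\begin{align*}
\bP\!\left(\cG_1^c(t)\right)\le\sum_{n=1}^{t}2t^{-4\alpha}=2t^{1-4\alpha}\le t^{-(\alpha-1)},
\end{align*}
where the last inequality holds for all $t\ge 1$ whenever $\alpha>2$ (indeed it requires only $2\le t^{3\alpha}$, with large slack). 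I expect the main obstacle to be the first step rather than the concentration estimate: rigorously justifying that the selected subsequence of successful transmission times is i.i.d.\ with mean $\te$ despite being indexed by a data-dependent, random set of event indices, and correctly setting up the peeling so that Hoeffding is applied at a deterministic sample size. The Hoeffding-and-sum computation is routine and leaves a wide margin to the desired $t^{-(\alpha-1)}$ bound.
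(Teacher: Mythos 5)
Your proposal is correct and follows essentially the same route as the paper's proof: apply Hoeffding at a fixed sample size $n$ with the deviation set to the confidence radius, then union-bound over the at most $t$ possible values of $N(t)$. The additional care you take — justifying that the durations of successful transmissions remain i.i.d.\ with mean $\te$ because the success labels depend only on the wake-up timings, and handling the $N(t)=0$ case — fills in details the paper leaves implicit, but does not change the argument.
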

\begin{proof}
Fix a positive integer $n_0$, and let $\hat{\te}$ denote the empirical estimate obtained from $n_0$ samples $T(1),T(2),\ldots,T(n_0)$ of the service times. It follows from Azuma-Hoeffding's inequality~\cite{hoeffding1994probability} that
\begin{align*}
\bP\left(  |\hat{\te} -  \te^{\star}  | >  x  \right)  \le \exp\left(-\frac{n_0 x^{2}}{2 T_{\max}^{2} }      \right).
\end{align*}
By using $x = T_{\max}\sqrt{ \frac{2\log \left( n^{\gamma}\right) }{n_0}    }$ in the above, we obtain,
\begin{align*}
\bP\left(  |\hat{\te} -  \te^{\star}  | >   T_{\max}\sqrt{ \frac{\log n^{\gamma} }{n_0}    }  \right) & \le \exp\left(-\log n^{\gamma}       \right)  \\
&=  \frac{1}{n^{\gamma}}.
\end{align*}
Since the total number of samples $n_0$ can assume values from the set $\left\{0,1,2,\ldots,n\right\}$, the proof then follows by using union bound on $n_0$.
\end{proof}
\begin{lemma}\label{lemma:g2}
Fix a $\delta_1 \in \left(0, p_{\min}\right)$, where $p_{\min}$ is as in~\eqref{def:cmin}. Define the event,
\begin{align}\label{def:g2}
\cG_2(n) := \left\{ \omega: N(n) > (p_{\min} -  \sqrt{\delta_1} ) n \right\}, 
\end{align}
where $N(n)$ denotes the number of samples that have been obtained until time $n$ for estimating transmission times. We then have that 
\begin{align*}
\bP(\cG^{c}_2(n) ) \le \exp\left( -\delta_1 n \right). 
\end{align*}
\end{lemma}
\begin{proof}
Consider the following martingale difference sequence $m(i)= \bE\left\{c(i)\big| \cF_{i-1}  \right\}  - c(n) $. Since $\bE\left\{c(i)\big| \cF_{i-1}  \right\}  \ge p_{\min} $, we have that
\begin{align}\label{ineq:1}
\sum_{i=1}^{n} m(i) \ge c_{\min} n - N(n).
\end{align}
Since $|m(i)|\le 1$, we have the following from Azuma-Hoeffding's inequality~\cite{hoeffding1994probability},
\begin{align*}
\bP\left(  \Big|\sum_{i=1}^{n} m(i)\Big|  \ge x  \right) \le \exp\left( -\frac{x^{2}}{n}  \right).
\end{align*}
Letting $x= \sqrt{\delta_1} n$, we get the following,
\begin{align}\label{ineq:2}
\bP\left(  \Big|\sum_{i=1}^{n} m(i)\Big|  \ge \sqrt{\delta_1} n  \right) \le \exp\left( -\delta_1 n  \right).
\end{align}
Substituting~\eqref{ineq:1} into the above inequality, we obtain 
\begin{align*}
\bP\left(  N(n)   \le \left( p_{\min} - \sqrt{\delta_1} \right) n  \right) \le \exp\left( -\delta_1 n  \right).
\end{align*}
This completes the proof. 
\end{proof}

\subsection{Regret Analysis}
\vspace{.5cm}

The cumulative regret $R(H)$~\eqref{def:regret} decomposes into the sum of ``episodic regrets'' $R^{(e)}(k)$ as follows:
\begin{align}
\bE\left[R(H)\right] &= \sum_{k=1}^{K} \bE\left[~R^{(e)}(k) \right],\label{reg:dec_1}\\
\mbox{ where } R^{(e)}(k) :&= \bE\left\{\sum_{n\in \cE_k}g(\bs(n)) - \bar{\Delta}^{\text{w-peak}}(\mathbf{r}^{\star}) \Big|  \cF_{\tau_k} \right\}. \label{reg:dec_2}
\end{align}
Combining the regret decomposition with the smoothness properties of the optimal average cost that were derived in Lemma~\ref{lemma:convergence}, we obtain the following key result that allows us to upper-bound $R(H)$. 
\begin{lemma}\label{lemma:regret_decompose}
The cumulative expected regret~\eqref{reg:dec_1} for a learning algorithm can be upper-bounded as follows,
\begin{align}
&\bE \left[R(H) \right]\le K_2 \sum_{k=1}^{K} \left(\tau_{k+1} - \tau_k  \right) \bP( | \hat{\te}(\tau_k) - \te\ust |>\delta ) \notag\\
&+L \sum_{k=1}^{N} (\tau_{k+1} - \tau_k) \bE \left(  | \hat{\te}(\tau_k) - \te\ust |\id\left\{| \hat{\te}(\tau_k) - \te\ust |\le \delta \right\} \right),\label{def:reg_dec}
\end{align}
where the constant $\delta>0$ is as in Lemma~\ref{lemma:convergence}.
\end{lemma}
\begin{proof}
It follows from the ergodicity properties of the process $\bs(n)$ that were proved in Lemma~\ref{lemma:diff_co} and Assumption~\ref{assum:bounded} regarding $\bs(n)$, that the episodic regret can be bounded as follows ($\rho,R$ are as in Lemma~\ref{lemma:diff_co} and Assumption~\ref{assum:bounded}), 
\begin{align}\label{bound:re}
R^{(e)}(k) \le & \frac{R}{1-\rho} \left( K_1 +1 \right)\\&+  \Big|\bar{\Delta}^{w-peak}(\br\ust_{\hat{\te}(\tau_k)};\te) - \bar{\Delta}^{\text{w-peak}}(\mathbf{r}^{\star}) \Big|\left(\tau_{k+1} - \tau_k  \right) .
\end{align}
The following two events are possible:
\begin{enumerate}[(i)]
\item $|\te\ust - \hat{\te}(\tau_k)|< \delta$: In this case it follows from Lemma~\ref{lemma:convergence} that
\begin{align*}
\Big|\bar{\Delta}^{w-peak}(\br\ust_{\hat{\te}(\tau_k)};\te) - \bar{\Delta}^{\text{w-peak}}(\mathbf{r}^{\star}) \Big| \le L |\te\ust - \hat{\te}(\tau_k)|.
\end{align*}
\item  $|\te\ust - \hat{\te}(\tau_k)|> \delta$: It follows from Assumption~\ref{assum:2} that the average performance under any sleep parameter cannot exceed $K_2$, and hence we can bound $\Big|\bar{\Delta}^{w-peak}(\br\ust_{\hat{\te}(\tau_k)};\te) - \bar{\Delta}^{\text{w-peak}}(\mathbf{r}^{\star}) \Big| $ by $K_2$.
\end{enumerate}
The proof then follows by substituting the bounds discussed above for the two cases into~\eqref{bound:re}, and using regret decomposition result.
\end{proof}
We now separately bound the expressions obtained in the two events ($|\te\ust - \hat{\te}(\tau_k)|< \delta$, $|\te\ust - \hat{\te}(\tau_k)|>\delta$).
\begin{flushleft}
\textbf{Regret when} $\bm{|\te\ust - \hat{\te}(\tau_k)|> \delta}$: 
\end{flushleft}

Choose a sufficiently large $k_0\in \bN$ that satisfies 
\begin{align}\label{def:k0}
\tau_{k_0} = O\left(\frac{1}{\delta_1}\log H\right).
\end{align}
Define the following event
\begin{align*}
\cG_3 :=  \cap_{k \ge k_0 } \cG_2(\tau_k).
\end{align*}
By combining the result of Lemma~\ref{lemma:g2} with the union bound and using~\eqref{def:k0} we conclude that $\cG_3$ has a probability greater than $1- \sum_{k>k_0} \exp(-\delta_1 \tau_k) = 1-O\left(\frac{1}{H}\right)$. On $\cG_3$, the number of samples $N(\tau_k)$ at the beginning of each episode $k>k_0$ is greater than $\left(p_{\min} -  \sqrt{\delta_1}\right)\tau_k$. Thus on $\cG_3$, for episodes $k>k_0$ the radius of $\cC(\tau_k)$ is less than $\sqrt{\frac{\gamma\log H}{(p_{\min}- \sqrt{\delta_1})\tau_k }}$. Let $k_1$ be the smallest integer that satisfies
\begin{align}\label{ineq:radius}
\frac{\gamma\log H}{( p_{\min} - \sqrt{\delta_1})\tau_{k_{1} }}\le \delta^{2}, \mbox{ i.e. } \tau_{k_1} \ge \frac{1}{( p_{\min} - \sqrt{\delta_1})\delta^2 } \gamma\log H,
\end{align} 
where the constant $\delta>0$ is as in Lemma~\ref{lemma:convergence}. Thus on $\cG_3$, for episodes $k\ge \max \left\{ k_0, k_1 \right\}$, the radius of confidence intervals is less than $\delta$. Note that on $\cap_{k}\cG_1(\tau_k)$ the confidence intervals~\eqref{def:ucb_ci} at the beginning of each episode are true. Hence, on $\left\{\cap_{k}\cG_1(\tau_k) \right\}\cap \cG_3$ we have $| \hat{\te}(\tau_k) - \te\ust |<\delta$ for epsiodes $k\ge \max \left\{ k_0, k_1 \right\}$. Thus, on $\left\{\cap_{k}\cG_1(\tau_k) \right\}\cap \cG_3$ this regret is bounded by $K_2 \max\left\{\tau_{k_0},\tau_{k_1}\right\}$. Now consider sample paths for which some of the confidence intervals fail. The probability that $\cC(\tau_k)$ fails is less than $\frac{1}{\tau^{\gamma -1}_k}$ (Lemma~\ref{lemma:g1}); moreover since the episode duration of $\cE_k$, $(\tau_{k+1} - \tau_k )$ is less than $\tau_k$, we have that the expected value of the regret during $\cE_k$ in the event of failure of $\cC(\tau_k)$ is less than $K_2 \frac{1}{\tau^{\gamma - 2}_k}$. Since $\gamma\ge 4$, the cumulative expected regret arising from this is bounded by $K_2 \sum_k \frac{1}{\tau^{\gamma - 2}_k}\le K_2 \frac{\pi^2}{6}$~\cite{ayoub1974euler}. We summarize our discussion as follows. 
\begin{lemma}\label{lemma:first_t_b}
Under Algorithm~\ref{algo:ucb} the following is true, 
\begin{align}
&\sum_{k=1}^{K} \left(\tau_{k+1} - \tau_k  \right) \bP( | \hat{\te}(\tau_k) - \te\ust |>\delta ) \notag\\& \le K_2 \max\left\{\frac{\gamma\log H}{( p_{\min} - \sqrt{\delta_1})  \delta^{2}},O\left(\frac{1}{\delta_1}\log H\right)  \right\}+ K_2 \frac{\pi^2}{6},
\end{align}
where $\gamma \ge 4$.
\end{lemma}
 
\begin{flushleft}
\textbf{Regret when} $\bm{|\te\ust - \hat{\te}(\tau_k)| < \delta}$: 
\end{flushleft}
As discussed above, on $\cap_{k}\cG_1(\tau_k) \cap \cG_3$ we have  $|\te(\tau_k) -\te\ust|<\delta$ for episodes $k>k_1$. Thus, after using the smoothness property of optimal average cost that was developed in Lemma~\ref{lemma:convergence}, we obtain that the second summation in the r.h.s. of~\eqref{def:reg_dec} can be bounded by the following quantity,
\begin{align*}
\sum_{k>k_1} \left(\tau_{k+1}-\tau_k\right) \sqrt{\frac{\gamma\log H}{( p_{\min} - \sqrt{\delta_1})\tau_k }}.
\end{align*}
Since we have $\tau_{k+1} - \tau_k \le \tau_k$, the above can be bounded by $\sqrt{\frac{\gamma\log H}{ ( p_{\min} - \sqrt{\delta_1}) }  } \sum_{k>k_1}  \sqrt{\tau_k}$. By using Cauchy Schwart'z inequality, the quantity $\sum_{k>k_1}  \sqrt{\tau_k}$ can be upper-bounded as $\sqrt{H K}$, where $K$ denotes the number of episodes. Since $K= O\left(\log H\right)$, this regret is bounded by $\sqrt{\frac{H\gamma (\log H)^2}{ ( p_{\min} - \sqrt{\delta_1}) }  }$. The bound we discussed is summarized below.
\begin{lemma}\label{lemma:second_t_b}
Under Algorithm~\ref{algo:ucb} the following is true, 
\begin{align}
&L \sum_{k=1}^{N} (\tau_{k+1} - \tau_k) \bE \left(  | \hat{\te}(\tau_k) - \te\ust |\id\left\{| \hat{\te}(\tau_k) - \te\ust |\le \delta \right\} \right)\notag \\&\le L\sqrt{\frac{H\gamma (\log H)^2}{ ( p_{\min} - \sqrt{\delta_1}) }  }.
\end{align}
\end{lemma}
We are now in a position to prove main result Theorem~\ref{th:learn_regret}.
\begin{proof}(Theorem~\ref{th:learn_regret})
The proof follows by substituting the bounds obtained in Lemma~\ref{lemma:first_t_b} and Lemma~\ref{lemma:second_t_b} into the regret decomposition result of Lemma~\ref{lemma:regret_decompose}.
\end{proof}

\end{document}